\theoremstyle{plain}
\newtheorem{theorem}{Theorem}[section]  %
\newtheorem{lemma}[theorem]{Lemma}
\newtheorem{corollary}[theorem]{Corollary}
\newtheorem{claim}[theorem]{Claim}
\theoremstyle{definition}  %
\newtheorem{property}[theorem]{Property}
\crefname{algocf}{Algorithm}{Algorithms}
\crefname{claim}{Claim}{Claims}
\crefname{property}{Property}{Properties}
\Crefname{prop}{Property}{Properties}
\Crefname{lemma}{Lemma}{Lemmata}
\crefname{Distribution}{Distribution}{Distributions}
\Crefname{Distribution}{Distribution}{Distributions}
\crefname{Protocol}{Protocol}{Protocols}
\Crefname{Protocol}{Protocol}{Protocols}
\DeclarePairedDelimiter{\bk}{(}{)}
\DeclarePairedDelimiter{\Bk}{[}{]}
\DeclarePairedDelimiterX\mysetbase[2]{\lbrace}{\rbrace}{#1\,\delimsize\vert\,#2}
\NewDocumentCommand{\myset}{sO{}m m}{%
  \IfBooleanTF{#1}%
    {\mysetbase*{#3}{#4}}%
    {\mysetbase[#2]{#3}{#4}}%
}
\DeclareMathOperator*{\E}{\mathbb{E}}
\let\Pr\PrAux
\DeclareMathOperator{\poly}{poly}
\renewcommand{\tilde}{\widetilde}
\newcommand{\defeq}{\coloneqq}
\newcommand{\eps}{\varepsilon}
\renewcommand{\epsilon}{\eps}
\newcommand{\defn}[1]{\emph{\boldmath\textbf{#1}}}
\xpatchcmd\thmt@restatable{%
\csname #2\@xa\endcsname\ifx\@nx#1\@nx\else[{#1}]\fi
}{%
\ifthmt@thisistheone
\csname #2\@xa\endcsname\ifx\@nx#1\@nx\else[{#1}]\fi
\else
\csname #2\@xa\endcsname[{Restated}]
\fi}{}{}
\newcommand{\buf}{\textsf{Buf}}
\newcommand{\dbuf}{\textsf{dBuf}}
\newcommand{\algAdv}{\textsf{Alg}_{\text{adv}}}
\newcommand{\algBase}{\textsf{Alg}_{\text{base}}}
\newcommand{\algAdapt}{\textsf{Alg}_{\text{adpt}}}
\newcommand{\algAdaptMerge}{\textsf{Alg}_{\text{adpt+mrg}}}
\newcommand{\algFinal}{\textsf{Alg}_{\text{final}}}
\newcommand{\algPool}{\textsf{Alg}_{\text{pool}}}
\title{Nearly Optimal Bounds for Stochastic Online Sorting}
\author{Yang Hu\thanks{Institute for Interdisciplinary Information Sciences, Tsinghua University. \texttt{y-hu22@mails.tsinghua.edu.cn}. Work performed while the author was visiting Carnegie Mellon University.}}
\date{}
\begin{document}

\maketitle

\begin{abstract}
    In the \emph{online sorting} problem, we have an array $A$ of $n$ cells, and receive a stream of $n$ items $x_1,\dots,x_n\in [0,1]$. When an item arrives, we need to immediately and irrevocably place it into an empty cell. The goal is to minimize the sum of absolute differences between adjacent items, which is called the \emph{cost} of the algorithm. It has been shown by Aamand, Abrahamsen, Beretta, and Kleist (SODA 2023) that when the stream $x_1,\dots,x_n$ is generated adversarially, the optimal cost bound for any deterministic algorithm is $\Theta(\sqrt{n})$.

In this paper, we study the \emph{stochastic} version of online sorting, where the input items $x_1,\dots,x_n$ are sampled uniformly at random. Despite the intuition that the stochastic version should yield much better cost bounds, the previous best algorithm for stochastic online sorting by Abrahamsen, Bercea, Beretta, Klausen and Kozma (ESA 2024) only achieves $\tilde{O}(n^{1/4})$ cost, which seems far from optimal. We show that stochastic online sorting indeed allows for much more efficient algorithms, by presenting an algorithm that achieves expected cost $\log n\cdot 2^{O(\log^* n)}$. We also prove a cost lower bound of $\Omega(\log n)$, thus show that our algorithm is nearly optimal.
\end{abstract}

\section{Introduction}

The \defn{online sorting problem} \cite{Aamand23,Abrahamsen2024OnlineSA} captures the following basic question about placing elements into an array so that their order is as close to sorted as possible. We have an empty array $A$ with $n$ \defn{cells}, and receive a stream of $n$ real numbers $x_1,\dots,x_n\in [0,1]$ (henceforth called \defn{items}). When an item arrives, it must be immediately and irrevocably placed into an empty cell. The goal is to minimize $\sum_{i=1}^{n-1}|A[i+1]-A[i]|$, the sum of differences between adjacent items. This sum is called the \defn{cost} of the algorithm.\footnote{A closely related metric is the \defn{competitive ratio} of the algorithm, which is the ratio between the algorithm's cost and the optimal offline cost. Previous work has added the (essentially wlog) assumption that at least one item is $0$ and at least one is $1$, so that these two metrics are equivalent. In the setting that we will be focusing on in this paper (the stochastic version of the problem), this wlog assumption is not even necessary, since even without it the two metrics are within a constant factor of each other with high probability.}

The online sorting problem was first introduced by Aamand, Abrahamsen, Beretta, and Kleist \cite{Aamand23}, who gave a simple deterministic algorithm achieving cost $O(\sqrt{n})$, and proved a matching lower bound for any deterministic algorithm. Subsequent work by Abrahamsen, Bercea, Beretta, Klausen and Kozma \cite{Abrahamsen2024OnlineSA} showed that this bound of $\Theta(\sqrt{n})$ is the optimal not just for deterministic algorithms but also for randomized ones. 

\paragraph{Stochastic online sorting.} A natural question is what happens if the stream $x_1, x_2, \ldots, x_n$ is generated non-adversarially so that each $x_i$ is i.i.d.~uniformly random. Intuitively, this \defn{stochastic} version of the problem allows for a much better (expected) cost bound. For example, if we imagine that the $i$-th to final insertion is able to achieve a cost of $O(1/i)$, which intuitively would make sense since it is able to choose from a collection of $i$ remaining positions, then the total cost bound would end up being $O(\log n)$. 

Surprisingly, initial work on the stochastic setting \cite{Abrahamsen2024OnlineSA} has yielded only relatively minor improvements over the worst-case setting. In a simple but elegant algorithm, Abrahamsen et al. \cite{Abrahamsen2024OnlineSA} showed how to improve the cost bound to $\tilde{O}(n^{1/4})$. They leave open the question of proving any nontrivial lower bound, as well as the question of whether significantly better bound than $\tilde{O}(n^{1/4})$ might be possible.\footnote{This problem has also been discussed at length in open problem sessions at several recent data-structure workshops.}

In this paper we seek to answer the following basic question. Is it really the case that the stochastic online sorting admits bounds that are only slightly better than for the worst-case version of the problem? Or, is it somehow possible to obtain an algorithm that achieves much better bounds on the cost? 

\paragraph{The relationship between stochastic online sorting and hashing.} A natural attempt to solve stochastic online sorting is to apply linear probing \cite{knuth63linprobe}, which is a classical algorithm for hash tables. That is, when an item $x_i$ arrives, we first attempt to insert it in the cell $h(x_i)=\lceil x_i\cdot n\rceil$. If this cell is already full, we will insert $x_i$ into the first empty cell in $h(x_i)+1,h(x_i)+2,\dots$ (wrapping around if necessary). Intuitively, in this algorithm, a cell $i$ should contain value close to $i/n$, which means that the total cost should be small. Indeed, \cite{Abrahamsen2024OnlineSA} shows that, if the number of cells is slightly larger than the number of items (say, $\gamma\cdot n$ for some $\gamma>1$), then the expected cost of stochastic online sorting can be improved to $O(1+1/(\gamma-1))$ with this approach. However, for the case where the number of cells is exactly $n$ (which is the case that this paper is interested in), this approach ends up being (perhaps surprisingly) inefficient.

What prevents us from getting good bounds is that, linear probing famously suffers from a clustering effect. That is, the inserted items tend to cluster together into long runs, so that new items $x_i$ may be inserted very far from $h(x_i)$. In the context of our problem, this clustering effect causes such a substantial blowup in the cost that linear probing, on its own, ends up with a cost bound of roughly $O(\sqrt{n})$, which is much worse than the $O(\log n)$-style bound that we might have originally hoped for.

Thus, one can view the stochastic online sorting problem, from a high level, as being a variation on the classical problem of how to place elements into a hash table: Specifically, how would you place the elements if you wished to minimize the sums of the differences between the \emph{hashes} of consecutive elements? Is linear probing, or something like linear probing optimal? Or is it possible to somehow bypass the clustering bottleneck that arises in linear probing in order to achieve a much better (and perhaps even sub-polynomial) cost bound?

\paragraph{This paper: Nearly tight upper and lower bounds.}

The goal of this paper is to determine the optimal cost of stochastic online sorting. Our main result is that, it is indeed possible to achieve much better bounds than $\poly n$. In fact, we can achieve cost close to $\log n$, which is consistent with previous intuition. Along the way, we develop new techniques that fully exploit the online nature of the problem, which may find applications to other similar problems. 

Our main result is the following theorem:

\begin{restatable}{theorem}{mainTheorem}
    \label{thm:real_exp}
    There exists a deterministic algorithm for stochastic online sorting, that achieves expected cost $\log n\cdot 2^{O(\log^* n)}$.
\end{restatable}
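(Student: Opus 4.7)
The plan is a hierarchical bucketing algorithm combined with $\log^* n$ rounds of boosting. I would first analyze an idealized recursion: partition the $n$ cells into $k$ contiguous blocks according to value range, and route each incoming item $x_i$ into the block whose range contains $x_i$. After rescaling each block's values into $[0,1]$, every sub-problem becomes a fresh instance of stochastic online sorting of size $n/k$. Ignoring load fluctuations, this would give $T(n) \leq T(n/k) + O(1)$ and hence $T(n) = O(\log n)$. The actual obstruction is that block loads are distributed as $\text{Binomial}(n,1/k)$ with standard deviation $\sqrt{n/k}$, so blocks routinely overflow, and the naive scheme breaks down in exactly the way linear probing breaks down at full load.

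To handle overflow I would design a base algorithm $\algBase$ that places a small buffer region $\buf$ alongside each block: when an item's target block is full, the item spills into the adjacent buffer via a linear-probing-style rule. Choosing the buffer size carefully balances two costs: too small a buffer forces long probes and reintroduces clustering, while too large a buffer shrinks the effective block size and amplifies overflow at deeper recursion levels. A Chernoff-based analysis of this tradeoff already yields an intermediate cost bound well below the previous $\tilde O(n^{1/4})$, and serves as the building block on top of which the final algorithm is assembled.

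Next I would layer on several refinements culminating in $\algFinal$. An adaptive variant $\algAdapt$ uses the running loads to predict imminent overflow and reshape block capacities online, exploiting the fact that, conditioned on the observed prefix of arrivals, the remaining items are still i.i.d.\ uniform. A merging variant $\algAdaptMerge$ combines overflows from adjacent blocks into a shared secondary structure, avoiding duplicate long scans; a pooling variant $\algPool$ shares buffer cells across a group of consecutive blocks so that the variance of their total overflow is strictly smaller than the sum of individual variances. Composing these reductions yields a step that takes a size-$n$ instance to a collection of instances of size roughly $\log n$; iterating the step, so that each iteration contributes only a constant multiplicative overhead, produces the claimed bound $\log n \cdot 2^{O(\log^* n)}$.

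The main obstacle will be controlling the \emph{joint} distribution of overflows across blocks, rather than any single block in isolation. A single block overflows by at most $\tilde O(\sqrt{n/k})$ with high probability by Chernoff, but once many blocks feed into a shared buffer, rare events at different blocks can conspire to create long cascades reminiscent of the very clustering effect that defeats plain linear probing. Establishing enough independence or negative correlation to rule this out—likely via a martingale tracking the running load imbalance, together with a charging argument that attributes each unit of cost to a specific item's displacement from its ideal cell—will be the heart of the proof.
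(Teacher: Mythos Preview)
Your proposal reproduces the paper's outermost scaffolding (hierarchical bucketing, adaptive re-prediction of loads, recursion down to sub-instances of size $\poly\log n$, $\log^* n$ levels), and your description of $\algAdapt$ is essentially correct. But the core technical obstacle is misidentified, and as a result your sketches of $\algAdaptMerge$ and $\algPool$ miss the ideas that actually make the recursion close.

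The real difficulty is not the \emph{tail} of the joint overflow distribution; it is that once you route overflow from several segments into any shared structure, the items arriving there are \emph{not} i.i.d.\ uniform, so the sub-problem is no longer an instance of stochastic online sorting and you cannot recurse. Concretely: if segments $S_1,\dots,S_K$ each have a buffer of size $m'$ and their overflow feeds a common region, then an item enters that region only if its own sub-segment has already seen $m'$ arrivals. Different sub-segments cross this threshold at different (random) times, so the shared region sees a biased, time-inhomogeneous stream (what the paper calls a \emph{multi-way pool distribution} $\mathcal{P}_{n',K,m'}$). Your variance-reduction intuition (``shared pool has smaller total-overflow variance'') is true but beside the point: Chernoff already controls the magnitude of overflow; what breaks is the distributional hypothesis needed to recurse. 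No martingale on load imbalance, and no charging argument on displacement, recovers uniformity of the sub-instance input.

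The paper's fix has two parts you are missing. First, $\algAdaptMerge$ does not merely ``combine overflows into a shared secondary structure'': it actively \emph{decreases the number of segments} over time (from $B\approx n/\poly\log n$ down to $O(1)$), and each merge step inserts a \emph{dampening buffer} whose sole purpose is to absorb arrivals until all $K$ sub-segments have filled their old buffers, so that the post-dampening pool sees a clean $\mathcal{P}_{n',B/K,Km'+m''}$ distribution rather than the asynchronous $\mathcal{P}_{n',B,m'}$. Second, $\algPool$ is not a variance-sharing trick; it is a generalized recursion that accepts a multi-way-pool-distributed input (together with the ``informative'' items that went into preceding buffers) and runs the same adaptive-allocation logic as if those preceding buffers were its own phase-$1$ buffers. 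Only with this richer recursive interface can dampening buffers and the final pool be treated as sub-instances, which is what drives the size from $n$ to $\poly\log n$ per level and yields the $2^{O(\log^* n)}$ factor. Without these two ideas your plan stalls at the $\algAdapt$ stage, i.e.\ at cost $n^{o(1)}$.
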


We remark that the randomization in Theorem \ref{thm:real_exp} comes entirely from the fact that the input itself is random. The algorithm, on the other hand, is entirely deterministic.

Our second result is a nearly matching lower bound of $\Omega(\log n)$ on the expected cost that any stochastic online sorting algorithm must incur. This lower bound applies to both deterministic and randomized algorithms.

\begin{restatable}{theorem}{mainLB}
    \label{thm:main_lb}
    Any randomized algorithm for stochastic online sorting must have expected cost at least $\Omega(\log n)$.
\end{restatable}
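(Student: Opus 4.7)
The plan is to prove the lower bound for deterministic algorithms, since a randomized algorithm's expected cost is a convex combination of deterministic algorithms' expected costs (so $E_{\mathrm{rand., input}}[\mathrm{cost}]\ge \inf_{\mathrm{det.}} E_{\mathrm{input}}[\mathrm{cost}]$). Fix a deterministic online algorithm $\alg$ on uniform i.i.d.\ input, and let $A[1],\dots,A[n]$ denote its final arrangement.

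The starting identity is the coarea formula
\[
\mathrm{cost}(\alg) \;=\; \sum_{j=1}^{n-1}|A[j+1]-A[j]| \;=\; \int_0^1 C(t)\,dt,
\]
where $C(t)=|\{j:\min(A[j],A[j+1])\le t<\max(A[j],A[j+1])\}|$. For each adjacency $(c,c+1)$, define its \emph{scale} $\ell(c)$ by $|A[c+1]-A[c]|\in [2^{-\ell(c)-1},2^{-\ell(c)})$, and let $N_\ell$ be the number of scale-$\ell$ adjacencies. Then $\mathrm{cost}(\alg)\ge \sum_\ell 2^{-\ell-1}\cdot N_\ell$, so it suffices to establish the following \textbf{Key Lemma}: for every integer $\ell$ with $1\le \ell\le\log_2 n-O(1)$ and every deterministic $\alg$, $E[N_\ell]=\Omega(2^\ell)$. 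Granted this, $E[\mathrm{cost}(\alg)]\ge\sum_{\ell=1}^{\log_2 n-O(1)} 2^{-\ell-1}\cdot \Omega(2^\ell)=\Omega(\log n)$.

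The intuition behind the Key Lemma is an anti-concentration argument at scale $\ell$. Partition $[0,1]$ into $2^\ell$ sub-intervals of length $2^{-\ell}$; the count of items falling in each sub-interval is $\mathrm{Bin}(n,2^{-\ell})$, whose standard deviation $\Theta(\sqrt{n\cdot 2^{-\ell}})$ is $\omega(1)$ as long as $\ell<\log_2 n$. No matter how the algorithm allocates cells to sub-intervals (which it must do online, before seeing the total counts), Binomial anti-concentration forces a constant fraction of the $2^\ell$ sub-intervals to have an item count exceeding their cell allocation by at least $1$. Each excess item must be placed in a cell allocated to a neighboring sub-interval, producing an adjacency whose value-gap is of order $2^{-\ell}$ in expectation; a careful case analysis should show that a constant fraction of these adjacencies have value-gap landing in the dyadic window $[2^{-\ell-1},2^{-\ell})$, so they contribute to $N_\ell$.

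The principal obstacle is formalizing the Key Lemma for \emph{adaptive} algorithms that reshape their cell allocations online in response to observed items. For these, ``the allocation of cells to sub-interval $I$'' cannot be fixed in advance; I would define it implicitly via a natural martingale $M_i(I)$ tracking the running discrepancy between the items of value in $I$ placed during the first $i$ steps and the algorithm's conditional expectation of how many more items of value in $I$ it still intends to place in cells presently considered ``reserved'' for $I$. Azuma--Hoeffding should then force $|M_n(I)|=\omega(1)$ with constant probability regardless of the adaptive strategy, and the implied excess item must still force an adjacency at scale $\approx \ell$. A further delicate point will be ensuring that the adjacencies counted by $N_\ell$ at different levels $\ell$ are essentially disjoint---so that the per-scale $\Omega(2^\ell)$ lower bounds genuinely add up, and the algorithm cannot evade the Key Lemma by ``hiding'' a scale-$\ell$ cost inside an even larger adjacency already counted at a coarser scale; this should follow from the precise dyadic definition of scale together with an independence argument across scales.
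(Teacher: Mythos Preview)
Your proposal takes a genuinely different route from the paper, and the Key Lemma you state is not established---the sketch has two real gaps.

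First, the adaptive-allocation obstacle you flag is not actually overcome. Your martingale $M_i(I)$ is defined in terms of ``cells presently considered reserved for $I$,'' but an arbitrary deterministic online algorithm need not maintain any such reservation, even implicitly; there is no canonical way to read off from the algorithm's state which cells are ``for $I$.'' Without a concrete definition, the Azuma step does not get off the ground. Second, and more seriously, even granting that anti-concentration forces some item of value in $I$ to sit adjacent to an item of value in a neighboring interval $I'$, the resulting gap can lie anywhere in $[0,\,2\cdot 2^{-\ell})$: if both items are near the $I$/$I'$ boundary, the gap is $o(2^{-\ell})$ and contributes to $N_{\ell'}$ for some $\ell'\gg\ell$, not to $N_\ell$. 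Your claim that ``a constant fraction of these adjacencies have value-gap landing in the dyadic window $[2^{-\ell-1},2^{-\ell})$'' is exactly the heart of the matter and is asserted without argument. Nothing you have written prevents the algorithm from arranging all inter-interval adjacencies to have tiny gaps.

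The paper's proof sidesteps both issues with a much more elementary ``last few items'' argument. It never partitions $[0,1]$ or reasons about allocations. Instead it fixes an adjacent pair of cells $(j,j+1)$ and an index $i$, observes that with probability $\Omega(i/n)$ at least one of the two cells is still empty after time $n-i$, and then uses only that the remaining $\le i$ items are i.i.d.\ uniform and independent of everything placed so far: with constant probability, \emph{all} of them are at distance $\ge 1/(4i)$ from the already-filled neighbor, forcing $|A[j+1]-A[j]|\ge 1/(4i)$. This yields the tail bound $\Pr[\,|A[j+1]-A[j]|\ge 1/(4i)\,]=\Omega(i/n)$, and integrating the tail gives $\E[\mathrm{cost}]=\Omega(\log n)$. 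Note this only proves the cumulative statement $\E\bigl[\sum_{\ell'\le \ell}N_{\ell'}\bigr]=\Omega(2^{\ell})$, which is weaker than your per-scale Key Lemma but entirely sufficient---and it requires no notion of cell allocation, no martingale, and no control over which dyadic window a gap falls into.
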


Combined, Theorems \ref{thm:real_exp} and \ref{thm:main_lb} come \emph{remarkably close} to determining the optimal cost bound for the stochastic online-sorting problem. Interestingly, the gap of $2^{\Omega(\log^* n)}$ appears to be genuine for the algorithm that we use to prove Theorem \ref{thm:real_exp}---it is not just an artifact of the analysis.

In addition to considering \emph{expected} cost, we also consider \emph{high-probability} cost. That is, what is the best bound on cost that an algorithm can hope to achieve with high probability in $n$? Here, we show that a $\poly\log n$ cost bound is possible: 

\begin{restatable}{theorem}{highProb}
    \label{thm:warm_up_2_high_prob}
    There exists a deterministic algorithm for stochastic online sorting that, with probability $1-O(1/n^{10})$, achieves cost $\poly\log n$.
\end{restatable}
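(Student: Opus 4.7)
The plan is to modify the recursive bucketing structure underlying the expected-cost algorithm of \cref{thm:real_exp}, enlarging the slack at each level of the recursion from constant to polylogarithmic so that a Chernoff-plus-union-bound argument drives the overall failure probability below $O(1/n^{10})$. In the expected-cost regime, each bucket only needs a constant-sized safety margin above its expected load, because occasional overflows contribute only a lower-order amount to the expected cost. In the high-probability regime, by contrast, every bucket must simultaneously avoid overflow on the good event, so we need tails that shrink faster than $1/\poly(n)$.

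Concretely, the algorithm would recursively partition the array into buckets where a bucket of expected load $s$ is allocated capacity $s + C\sqrt{s \log n}$ for a suitably large constant $C$. Since the number of items landing in a fixed bucket is distributed as $\mathrm{Binomial}(n, s/n)$, a standard Chernoff bound shows that the bucket overflows with probability at most $1/n^{11}$. Taking a union bound over the $O(n \log\log n)$ buckets across all levels of recursion yields an overall failure probability of $O(1/n^{10})$, as needed.

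Conditional on no overflow anywhere in the recursion, the cost analysis mirrors that of \cref{thm:real_exp}: at each level, the cost decomposes into boundary terms between consecutive buckets (which telescope to $O(1)$ per level) plus within-bucket costs that are handled by the recursion. The polylogarithmic slack blows up each level's contribution by a polylog factor relative to the expected-cost analysis, but the total remains $\poly\log n$ since the number of recursion levels is at most $O(\log\log n)$.

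The main obstacle I anticipate is respecting the strict cell-budget constraint that the bucket capacities across a level must sum to exactly the available cells at that level: each $\Theta(\sqrt{s \log n})$ unit of slack must be paid for by slightly fewer or slightly smaller buckets, which in turn perturbs the bucket sizes at deeper levels. I expect this inflation to cascade through the recursion, but because the slack is only a $\poly\log n$ fraction of $s$ at every level, the compounded effect remains a multiplicative polylog overhead and does not change the final asymptotic cost of $\poly\log n$.
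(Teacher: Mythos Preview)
Your proposal has a genuine gap: it never says what happens to overflow items, and this is the entire difficulty when the number of cells equals the number of items exactly. You propose giving a bucket of expected load $s$ a capacity of $s + C\sqrt{s\log n}$, but the expected loads already sum to $n$, so the capacities would sum to strictly more than the $n$ available cells. You flag this as ``the main obstacle'' and treat it as a polylog perturbation to be absorbed by shrinking buckets slightly; it cannot be finessed that way. If bucket capacities are forced to sum to $n$, then with high probability a constant fraction of buckets overflow, and the overflow items must go into a pool of size $\Omega\bigl(B\sqrt{(n/B)\log n}\bigr)=\Omega(\sqrt{nB\log n})$. The pool does not receive i.i.d.\ uniform items, so you cannot recurse on it with the same Chernoff analysis; handling it with the adversarial algorithm $\algAdv$ already costs $\Omega((nB)^{1/4})$, and optimizing over $B$ recovers only the $\tilde O(n^{1/4})$ bound of $\algBase$, not $\poly\log n$. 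Your premise that the expected-cost algorithm uses ``a constant-sized safety margin'' is also incorrect: the algorithms here use slack $d(n/B)=\Theta((n/B)^{2/3})$, already super-polylogarithmic in the buffer size, and the difficulty was never the size of the slack but rather what to do with the pool.

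The paper's proof takes a route quite different from what you sketch: it uses an explicitly \emph{non-recursive} algorithm $\algAdaptMerge$ whose two mechanisms are exactly what your outline lacks. First, \emph{adaptive allocation}: rather than fixing all bucket sizes in advance, the algorithm waits until some buffer fills, then uses the realized counts so far to re-predict each segment's remaining load and carve fresh buffers out of the pool; iterating this shrinks the residual pool to $B\cdot\poly\log n$. Second, \emph{segment synchronization}: the number of segments $B$ is decreased over time from $n/\poly\log n$ down to $\poly\log n$, using dampening buffers to absorb the asynchrony among sub-segments being merged, so that every buffer ever allocated has size $\poly\log n$. The algorithm runs for $\Theta(\log n/\log\log n)$ phases (not $O(\log\log n)$ recursion levels), and on the high-probability good event its cost is simply a sum of $\poly\log n$-sized contributions.
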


An interesting feature of Theorem \ref{thm:warm_up_2_high_prob} is that, unlike the other algorithms in this paper (and the stochastic sorting algorithm in \cite{Abrahamsen2024OnlineSA}), the algorithm for Theorem \ref{thm:warm_up_2_high_prob} is entirely non-recursive. Instead, it is able to allocate regions of the array, one after another, to different sub-intervals of $[0, 1]$, in order to decide where to send elements as they arrive. The allocations themselves are performed at a relatively fine granularity, and rely on two algorithmic techniques that we develop (the notions of adaptive allocation and segment synchronization). 

The rest of the paper proceeds as follows. After presenting the related work (Section \ref{sec:related}) and the technical overview (Section \ref{sec:overview}), we present the formal algorithms and analyses in \cref{sec:warm_up,sec:warm_up_2,sec:final}. We first present the two main techniques used in the algorithm: \defn{adaptive allocation} (\cref{sec:warm_up}) and \defn{segment synchronization} (\cref{sec:warm_up_2}). In each of these sections, we use the techniques to design a weaker version of the final algorithm, where $\algAdapt$ (presented in \cref{sec:warm_up}) achieves $n^{o(1)}$ cost, and $\algAdaptMerge$ (presented in \cref{sec:warm_up_2}) achieves $\poly\log n$ cost \emph{with high probability}. After these warmups, we obtain our final algorithm in \cref{sec:final}, where we show how to impose a recursive structure on $\algAdaptMerge$ so that we can reduce the cost to $\log n\cdot 2^{O(\log^*n)}$. Finally, in \cref{sec:lb}, we prove the lower bound of $\Omega(\log n)$, and in \cref{sec:open_problems}, we present some open questions.

\section{Related Work}
\label{sec:related}

\paragraph{The online sorting problem.} The online sorting problem was first introduced in \cite{Aamand23}, where they proved the $\Theta(\sqrt{n})$ cost bound for deterministic algorithms on adversarial inputs, and used it to prove lower bounds for many geometric online packing problems. \cite{Abrahamsen2024OnlineSA} further generalized the $\Omega(\sqrt{n})$ cost lower bound to randomized algorithms. 

\paragraph{Variants of online sorting.} \cite{Aamand23} also studied the setting where the number of cells is more than the number of items, and showed that the cost bound is significantly better than the standard setting. However, there remains a big gap between their upper and lower bounds in this case. Specifically, for the case where there are $n$ items and $\lceil\gamma\cdot n\rceil$ cells ($\gamma>1$), \cite{Aamand23} gave a cost upper bound of $2^{O(\sqrt{\log n}\cdot \sqrt{ \log\log n+\log(1/(\gamma-1)}))}$, and a lower bound of $\Omega(\log n/(\log\log n\cdot \gamma))$.

For the stochastic version of online sorting, \cite{Abrahamsen2024OnlineSA} proved a $\tilde{O}(n^{1/4})$ cost upper bound, as well as an upper bound of $O(1+1/(\gamma-1))$ for the stochastic setting where there are $n$ items and $\lceil\gamma\cdot n\rceil$ cells ($\gamma >1$).

For online sorting with different metrics (i.e., the items are not real numbers, but objects from a different metric space), \cite{Abrahamsen2024OnlineSA} showed that: For the uniform metric (where $d(x,y)=[x\ne y]$), there exists an algorithm achieving competitive ratio $O(\log n)$, and for the multi-dimensional euclidean metric (i.e., items are from $\mathbb{R}^d$) when the dimension $d$ is a constant, a competitive ratio of $\tilde{O}(\sqrt{n})$ can be achieved.

\paragraph{Related problems.} Online sorting is similar to many famous online problems, and we present two of them here. It is an interesting research direction to study the relationship between these problems.

\paragraph{List labeling.} In list labeling \cite{itai81sparse,dietz82maintaining,dietz87two,saks18online,bender22online,bender24nearly}, similar to online sorting, we receive a sequence of items arriving in an online fashion, and are asked to place the items in sorted order. However, in this problem we are allowed to move items arbitrarily after they are inserted, and the goal is to minimize \emph{recourse} (i.e., the number of items moved) while guaranteeing that the items are \emph{always} in sorted order. The most common setting is when the number of cells is linear in the number of items $n$, i.e. $(1+\Theta(1))n$. In this case, the best algorithm is by \cite{bender24nearly}, which achieves amortized recourse of $O(\log n\cdot (\log\log n)^3)$.

\paragraph{Matching on the line.} Matching on the line \cite{gupta2012online,raghvendra2018optimal,gupta2019stochastic,balkanski2023power} is a special case of online matching, that is very similar to online sorting. The main difference between the two problems lies in the objective function: for matching on the line, each cell also has a value, and the goal is to minimize the sum of differences between the value of each item and the cell that it is placed in.

\paragraph{Independent work.} In a recent parallel and independent work, Kalavas, Platanos and Tolias \cite{kalavas2025polylogarithmiccompetitivealgorithmstochastic} presents an algorithm for stochastic online sorting, that achieves $O(\log^2 n)$ cost with high probability.

\section{Technical Overview}
\label{sec:overview}

Our starting point is the algorithm $\algBase$ from \cite{Abrahamsen2024OnlineSA}, which achieves expected cost $\tilde{O}(n^{1/4})$. Along the way, we develop two important techniques for stochastic online sorting: \emph{adaptive allocation} and \emph{segment synchronization}.

\subsection{The Base Algorithm: Predict the Insertions}

The main observation by \cite{Abrahamsen2024OnlineSA} is that, since the input sequence is stochastic, we can \emph{predict} (using a straightforward Chernoff bound) the number of items that belong to a value interval $[L,R]\subset [0,1]$, and use this prediction to build a good algorithm. We will be using the algorithm $\algAdv$ from \cite{Aamand23} as a black box, which is the algorithm that achieves cost $O(\sqrt{n})$ for \emph{any} sequence of input items.

\paragraph{The algorithm $\algBase$.} $\algBase$ is a recursive algorithm. We first present a non-recursive version of it, then show how to improve it by recursing. Let $B$ be a parameter. We partition the value interval $[0,1]$ into $B$ intervals, each of length $1/B$. These intervals are called \defn{segments}. Using a Chernoff bound, we have that for any segment, with high probability, at least $m=(n/B)-\tilde{O}(\sqrt{n/B})$ of all $n$ items belong to it.

Given this knowledge, we allocate $m$ cells for each segment in advance, so that adjacent cells will contain items that are close to each other. Formally, we partition the $n$ cells into $B+1$ subarrays $\buf_1,\dots,\buf_B,P$, where each $\buf_i$ contains $m$ cells, and only accepts items that belong to segment $i$. We call $\buf_i$ a \defn{buffer} for segment $i$. The last subarray $P$ contains the remaining $n-B\cdot m=\tilde{O}(\sqrt{nB})$ cells, and accepts the items that overflow from the buffers. We call $P$ the \defn{pool}.

The items are processed as follows: Upon receiving item $x$ that belongs to segment $i$, if $\buf_i$ is not filled, then $x$ is inserted in $\buf_i$, otherwise $x$ is inserted in $P$. Next, within the subarray, $\algAdv$ is called to choose an empty cell for the new item.

The cost of this algorithm consists of three parts:
\begin{enumerate}
    \item \textbf{The cost of the buffers:} Fix one buffer, and consider the sum of adjacent differences in it. Since all items in a buffer come from the same segment, we can think of each buffer as a sub-problem with $m$ cells, where each value is from a smaller range $[(i-1)/B,i/B]$ ($i$ is the index of the segment). The cost of such a sub-problem is $O(\sqrt{m}/B)$, which is the cost of $\algAdv$ divided by the value range. Since we have $B$ buffers, their total cost is $O(\sqrt{m})=\tilde{O}(\sqrt{n/B})$.
    \item \textbf{The cost of the pool:} For the pool, the sum of adjacent differences is $O(\sqrt{|P|})=\tilde{O}((nB)^{1/4})$.
    \item \textbf{The inter-subarray cost:} We also need to consider the differences between adjacent subarrays. For this sum, the pool only contributes $O(1)$. As for adjacent buffers, we have that an item from $\buf_i$ and an item from $\buf_{i+1}$ differ by at most $O(1/B)$. Therefore, the total cost for this part is $O(1)$.
\end{enumerate}
Setting $B$ optimally leads to a competitive ratio of $\tilde{O}(n^{1/3})$.

\paragraph{Adding recursions to $\algBase$.} To improve the cost bound, \cite{Abrahamsen2024OnlineSA} further observes that, the items inserted to each buffer $\buf_i$ are also uniformly random (conditioned on the fact that each item belongs to segment $i$). That is, the subproblem of each buffer is actually an instance of \emph{stochastic} online sorting. We therefore can recursively apply the above algorithm on the buffers, so that the \textbf{(recursive) cost of buffers} becomes $\tilde{O}(m^{1/3})$ instead of $O(\sqrt{m})$. By resetting the value of $B$, this results in an algorithm with better cost. We can then reapply the new algorithm on the buffers to further improve the cost. Taken to the limit, the cost of the optimal recursive algorithm $\algBase$ is $\tilde{O}(n^{1/4})$.

\subsection{Adaptive Allocation, or: The Pool is Not Adversarial}

We now present the first improvement to the algorithm, which achieves cost $n^{o(1)}$.

\paragraph{The randomness of the pool.} Our improvement works by observing that, the distribution of insertions to the pool has certain randomness that we can exploit. 

The base algorithm $\algBase$ used the fact that the buffers receive uniformly random items. However, $\algBase$ did not use any nontrivial property about the distribution of items inserted into the pool. Instead, the pool is only handled using $\algAdv$, the algorithm for adversarial inputs. This is doing a great injustice to the pool, since the items that it receives actually follows a nice distribution (albeit not uniformly random).

\paragraph{Adaptive allocation.} It turns out that, we can indeed exploit the randomness of the pool, by leveraging the online nature of our problem. We first imagine that, if the pool receives perfectly uniformly random items, then we could apply a Chernoff bound to lower bound the number of items from each segment to the pool, much like in $\algBase$. We can then allocate new buffers within the pool. Sadly, because different buffers will overflow by significantly different amounts, we cannot say much, a priori, about how many items each segment will end up sending to the pool.

To get around this issue, we observe that we only need to start thinking about the pool when some buffer finally becomes full. This means that we can take a ``lazy'' approach, in which we defer making any predictions about the pool, until the specific moment in time when the pool starts to be used, at which point we can argue that we actually \emph{can} make reasonably good predictions about how many elements each segment will send to the pool in the future, which can guide the allocation of new buffers.

\begin{figure}[ht]
    
    \centering
    \scalebox{0.9}{
    \begin{tikzpicture}
    \def \leftspace {1};
    \def \stepheight {-2.5};
    \def \mytick {0.3};
    \def \drawtick[#1,#2]{\draw ({#1}, {#2}) -- ++ (0,\mytick);} %
    \def \drawlargetick[#1,#2]{\draw ({#1}, {#2}) -- ++ (0,1.5*\mytick);}
    \def \midpoint {\leftspace+8};
    \def \drawsegment[#1,#2,#3,#4,#5,#6,#7,#8]{ %
        \draw ({(#1)}, {(#2)}) -- ({(#1) + (#3)}, {(#2)})
         node[midway, above = 0.1] (#7) {#5}
         node[midway, below = 0.1] (#8) {#6};
        \drawlargetick[(#1),(#2)]
        \drawlargetick[(#1) + (#3),(#2)]
        \foreach \i in {1,...,#4}
        {
            \drawtick[(#1) + \i / (#4 + 1) * (#3),(#2)]
        }
    }

    \draw (\leftspace-0.5, 3*\stepheight) -- ++ (0, 0)
        node[midway, above = -0.1] (1) {$(i)$};
    \drawsegment[\leftspace,3*\stepheight,8,3,,,upper_final,lower_final];
    \drawsegment[\leftspace+8,3*\stepheight,8,0,,,upper_final,lower_final];

    \draw (\leftspace, 3*\stepheight) -- ++ (2*\leftspace, 0)
        node[midway, above = 0.5] (1) {$\buf_{1}$};

    \draw (\leftspace+2, 3*\stepheight) -- ++ (2*\leftspace, 0)
        node[midway, above = 0.5] (1) {$\buf_{2}$};

    \draw (\leftspace+4, 3*\stepheight) -- ++ (2*\leftspace, 0)
        node[midway, above = 0.5] (1) {$\cdots$};

    \draw (\leftspace+8, 3*\stepheight) -- ++ (8*\leftspace, 0)
        node[midway, above = 0.5] (1) {$P$};

    \draw [->, color=red, thick] (\midpoint, 3.2 * \stepheight) -- ++ (0, 0.25 * \stepheight);

    \def \lenarray {{{1,1,0,1,0},{1,1,1,1,1},{0,1,1,0,1},{1,0,1,1,1}}}

    \foreach \i in {0,...,3}
    {
        \foreach \j in {0,...,4}
        {
        
            \filldraw [fill=gray, draw=gray] (\leftspace+2*\i+0.4*\j,4*\stepheight) rectangle ++(0.4*\lenarray[\i][\j],0.2*\lenarray[\i][\j]);
        }
    }

    \draw (\leftspace-0.5, 4*\stepheight) -- ++ (0, 0)
        node[midway, above = -0.1] (1) {$(ii)$};
    \drawsegment[\leftspace,4*\stepheight,8,3,,,upper_final,lower_final];
    \drawsegment[\leftspace+8,4*\stepheight,8,0,,,upper_final,lower_final];

    \draw (\leftspace, 4*\stepheight) -- ++ (2*\leftspace, 0)
        node[midway, above = 0.5] (1) {$\buf_{1}$}
        node[midway, below = 0.1] (1) {$c_1$ empty cells};

    \draw (\leftspace+2, 4*\stepheight) -- ++ (2*\leftspace, 0)
        node[midway, above = 0.5] (1) {$\buf_{2}$}
        node[midway, below = 0.3] (1) {$\cdots$};

    \draw (\leftspace+4, 4*\stepheight) -- ++ (2*\leftspace, 0)
        node[midway, above = 0.5] (1) {$\cdots$};

    \draw (\leftspace+8, 4*\stepheight) -- ++ (8*\leftspace, 0)
        node[midway, above = 0.5] (1) {$P$};

    \draw [->, color=red, thick] (\midpoint, 4.2 * \stepheight) -- ++ (0, 0.25 * \stepheight);

    \foreach \i in {0,...,3}
    {
        \foreach \j in {0,...,4}
        {
        
            \filldraw [fill=gray, draw=gray] (\leftspace+2*\i+0.4*\j,5*\stepheight) rectangle ++(0.4*\lenarray[\i][\j],0.2*\lenarray[\i][\j]);
        }
    }

    \def \newarray {{0,2,6,8,11}}
    
    \foreach \i in {1,...,4}
    {
        \drawtick[\leftspace+8+0.4*\newarray[\i],5*\stepheight]
    }
    \drawlargetick[\leftspace+8+0.4*\newarray[4],5*\stepheight]
    
    \drawsegment[\leftspace,5*\stepheight,8,3,,,upper_final,lower_final];
    \drawsegment[\leftspace+8,5*\stepheight,8,0,,,upper_final,lower_final];

    \draw (\leftspace-0.5, 5*\stepheight) -- ++ (0, 0)
        node[midway, above = -0.1] (1) {$(iii)$};

    \draw (\leftspace, 5*\stepheight) -- ++ (2*\leftspace, 0)
        node[midway, above = 0.5] (1) {$\buf_{1}$}
        node[midway, below = 0.1] (1) {$c_1$ empty cells};

    \draw (\leftspace+2, 5*\stepheight) -- ++ (2*\leftspace, 0)
        node[midway, above = 0.5] (1) {$\buf_{2}$}
        node[midway, below = 0.3] (1) {$\cdots$};

    \draw (\leftspace+4, 5*\stepheight) -- ++ (2*\leftspace, 0)
        node[midway, above = 0.5] (1) {$\cdots$};

    \draw (\leftspace+8+0.4*\newarray[0], 5*\stepheight) -- (\leftspace+8+0.4*\newarray[1],5*\stepheight)
        node[midway, above = 0.5] (1) {$\buf'_{1}$}
        node[midway, below = 0.1] (1) {$m'-c_1$ cells $\cdots$};
        
    \draw (\leftspace+8+0.4*\newarray[1], 5*\stepheight) -- (\leftspace+8+0.4*\newarray[2],5*\stepheight)
        node[midway, above = 0.5] (1) {$\buf'_{2}$};
        
    \draw (\leftspace+8+0.4*\newarray[2], 5*\stepheight) -- (\leftspace+8+0.4*\newarray[3],5*\stepheight)
        node[midway, above = 0.5] (1) {$\cdots$};

    \draw (\leftspace+8+0.4*\newarray[4], 5*\stepheight) -- (\leftspace+16,5*\stepheight)
        node[midway, above = 0.5] (1) {$P$};

    \draw (\leftspace-0.5, 6.2*\stepheight) -- ++ (0, 0)
        node[midway, above = -0.1] (1) {$(iv)$};

    \draw [->, color=red, thick] (\midpoint, 5.4 * \stepheight) -- ++ (0, 0.25 * \stepheight);

    \node[color=red] at (\midpoint+1, 5.55 * \stepheight) {logically};
    \drawsegment[\leftspace+2.8,6.2*\stepheight,8/5*4,3,,,upper_final,lower_final];
    \drawsegment[\leftspace+2.8+8/5*4,6.2*\stepheight,4,0,,,upper_final,lower_final];

    \draw (\leftspace+2.8,6.2*\stepheight) -- ++ (2/5*4*\leftspace, 0)
        node[midway, above = 0.5] (1) {$\buf_{1}\&\buf'_{1}$}
        node[midway, below = 0.1] (1) {$m'$ cells};
    \draw (\leftspace+2.8+2/5*4,6.2*\stepheight) -- ++ (2/5*4*\leftspace, 0)
        node[midway, above = 0.5] (1) {$\cdots$}
        node[midway, below = 0.3] (1) {$\cdots$};

    \draw (\leftspace+2.8+8/5*4, 6.2*\stepheight) -- ++ (4*\leftspace, 0)
        node[midway, above = 0.5] (1) {$P$};
    
\end{tikzpicture}}
    \caption{An illustration of adaptive allocation. The intervals indicate the buffers, and the shaded regions represent the cells that are filled with items. (i) At the start of phase $1$, we allocate one buffer for each segment. (ii) During phase $1$, items are inserted into the buffers, until one buffer is full. The buffers are not necessarily filled from left to right. (iii) At this point, we allocate new buffers within the pool $P$, where the sizes of the new buffers are given by the Chernoff bound. Note that the new buffers have different sizes, because the segments didn't receive the same number of items in phase $1$. (iv) After this adaptive allocation, as far as the new pool is concerned, the two buffers of each segment behave as one single buffer of size $m'$. This is to say that, replacing these two buffers with one big buffer does not change the set of items inserted to the new pool (however the items inserted to these two buffers are handled differently as those inserted to one big buffer).}\label{fig:re_allocate}
\end{figure}

\paragraph{The algorithm $\algAdapt$.} Our improved algorithm $\algAdapt$ has multiple \defn{phases}, where in each phase we will allocate new buffers. In phase $1$, $\algAdapt$ behaves exactly the same as $\algBase$, where it allocates buffers and inserts new items into the buffers. For each buffer, we will recursively apply $\algAdapt$. 

Phase $1$ ends when some buffer becomes full. At this point, $\algAdapt$ uses the history of the input to predict the number of items from each segment to the pool. Suppose that, during phase $1$, we have processed all but $n'$ items of the input. Then there exists some bound $m'= (n'/B)-\Omega((n'/B)^{2/3})$\footnote{The alternative definition of $m'\defeq(n'/B)-\Omega(\sqrt{n'/B}\cdot \poly\log n)$ would work just as well. We use the current form because it simplifies the computation.} such that, with high probability, for each segment, at least $m'$ of the future $n'$ items belong to this segment. For a segment $i$, if there currently are $c_i$ empty cells in the buffer $\buf_i$, then we can conclude that, the pool receives at least $m'-c_i$ items from segment $i$ (the use of $c_i$ is where the adaptivity comes in). We then allocate a \emph{new} buffer to segment $i$, which occupies $m'-c_i$ consecutive cells in the pool\footnote{In the algorithm, we will choose the parameter $m'$ carefully, so that $m'> c_i$ holds with high probability.}. After this, the pool is redefined to be the unallocated part of the old pool. We then start phase $2$.

A crucial property of this adaptive allocation process is that, at the start of phase $2$, the current state is similar to the very beginning of our algorithm. That is, we currently have two buffers per segment, and these two buffers together are functionally equivalent to one big buffer of size $m'$.

This implies that we can repeat what we did in phase $1$: Resume receiving items, until some new buffer also becomes full. At this point, run another adaptive allocation, then start phase $3$. This goes on until the number of future items drops to $B\cdot \poly\log n$, at which point we can no longer apply the Chernoff bound, because the expected number of future items for each segment is only $\poly\log n$. We then handle what remains of the pool (henceforth called the \defn{final pool}) using a trivial algorithm.

\paragraph{The cost of $\algAdapt$.} Let us fix a segment, and consider the sizes of all the buffers allocated to this segment in $\algAdapt$. We have that the size of the first buffer is $\sim n/B$. Next, the size of the second buffer is $\sim (n/B)^{2/3}$. This is because, since we used Chernoff bound, the actual number of items from a segment only differs from the size of its first buffer by $O((n/B)^{2/3})$. Since the second buffer is also filled with high probability, its size must be no more than this bound. The same logic applies to the other buffers, and we can show that their sizes are $\sim (n/B)^{4/9},\sim (n/B)^{8/27},\dots$, respectively, where the last buffer has size $\poly\log n$. This also implies that there are only $O(\log\log n)$ phases.

Formally, define $T_{\text{adpt}}(n)$ as the expected cost of $\algAdapt$ on input of length $n$. Similar to $\algBase$, if we wish to bound $T_{\text{adpt}}(n)$, then we need to consider cost from three sources.
\begin{enumerate}
    \item \textbf{The (recursive) cost of the buffers:} The cost of recursively implementing each of the buffers is bounded by $T_{\text{adpt}}(n/B)\cdot O(\log\log n)$, where the $\log \log n$ factor comes intuitively from the fact that there are $O(\log \log n)$ phases.
    \item \textbf{The cost of the (final) pool:} We use adaptive allocation to reduce the size of the pool, so the size of the final pool becomes $B\cdot \poly\log n$. Therefore, its cost is at most $B\cdot \poly\log n$.
    \item \textbf{The inter-subarray cost:} Similar to the analysis for $\algBase$, the buffers of the same phase only incur $O(1)$ cost because the buffers are arranged in increasing order of their corresponding segment. Therefore, the inter-subarray cost is of the same order as the number of phases, which is $O(\log\log n)$.
\end{enumerate}

Putting the pieces together, we have
\begin{align*}
    T_{\text{adpt}}(n)\le T_{\text{adpt}}(n/B)\cdot O(\log\log n)+B\cdot \poly\log n+O(\log\log n).
\end{align*}

We set $B$ to be $\Theta(\log n)$, in which case we can show that $T_{\text{adpt}}(n)$ evaluates to $n^{o(1)}$.

\subsection{Merge Segments Using a Synchronization Technique}

Next, building upon the adaptive allocation, we develop a technique for merging segments, that further improves the cost to $\poly\log n$.

\paragraph{Why do we need merging.} The previous algorithm $\algAdapt$ already achieves $n^{o(1)}$ cost bound, but there remain barriers to further improvement.

We first investigate the role of the number of segments $B$ in controlling the cost of $\algAdapt$. By looking at the recurrence formula for $T_{\text{adpt}}$, we can see that: The buffer cost is low when $B$ is large, because then the individual buffers are small, so that the items inserted into the same buffer are closer to each other; The final pool cost is low when $B$ is small, because then we can apply Chernoff bound to a smaller set of items (i.e., when only $B\cdot \poly\log n$ items remain), which allows us to further reduce the size of the pool. However, there is no single value of $B$ that can improve both costs to $\poly\log n$ simultaneously.

To deal with this, we develop a new subroutine for \emph{merging segments}, which allows us to decrease the value of $B$ \emph{during execution}. Intuitively, the goal of merging is to let $B$ be large at the start and small at the end, so that we can simultaneously achieve the best of the two cases.

\paragraph{The (non-recursive) algorithm $\algAdaptMerge$.} In $\algAdapt$, we start with buffer sizes $n/B$, and keep running adaptive allocation, until the buffer sizes drop to $\poly\log n$. The new algorithm $\algAdaptMerge$ is similar to $\algAdapt$, but starts with buffer sizes $\poly\log n$ (equivalently, we start with $B=n/\poly\log n$). At the start of a phase, if the buffer sizes are too small, instead of terminating as in $\algAdapt$, $\algAdaptMerge$ performs a \defn{merge subroutine} to merge segments together, so that the buffer sizes increase back to $\poly\log n$, which allows us to perform more rounds of adaptive allocation. Thus, whereas in $\algAdapt$ the \emph{number} $B$ of segments is invariant, and the \emph{sizes} of the buffers being allocated decreases across phases, the $\algAdaptMerge$ algorithm flips this relationship: now the number $B$ of segments will decrease over time, but the sizes of the buffers being allocated will stay roughly the same (i.e., $\poly\log n$).

In a merge subroutine, we allocate new buffers just as during adaptive allocation. These buffers will be called \defn{dampening buffers} (in contrast, the buffers allocated during an adaptive allocation will be called \defn{regular buffers} from now on). However, due to the implementation of merge subroutines, dampening buffers are very different from regular buffers, in that they do not receive uniformly random items as do regular buffers. 

This creates difficulties when trying to recurse on the dampening buffers. Fortunately, we can dodge this problem for now, as we can show that, unlike $\algAdapt$, the buffers in $\algAdaptMerge$ always have size at most $\poly\log n$. Therefore, we can let $\algAdaptMerge$ be a non-recursive algorithm, and still obtain a cost bound of $\poly\log n$.

\paragraph{Requirement for merging.} 

During a merge subroutine, we merge every $K$ adjacent segments together into a \emph{mega-segment}, so that the number of segments becomes $B/K$ (assume that $K|B$). 

After merging, it should still be possible to perform adaptive allocation. Since adaptive allocation highly relies on the specific distribution of insertions to the pool, what we should guarantee is that, after merging, the distribution of insertions into the pool is similar to what it would have been if the algorithm \emph{started} with $B/K$ segments.

Formally, we define the following \defn{multi-way pool distribution} $\mathcal{P}$ to characterize the pool. The distribution $\mathcal{P}$ is parameterized by integers $n,B,m$, where $n$ is the number of items, $B$ the number of segments, and $m$ the buffer size of each segment. $\mathcal{P}_{n,B,m}$ is a distribution over sequences of length $n-B\cdot m$, generated as follows: 
\begin{itemize}
    \item Sample a uniformly random sequence $x_1,\dots,x_n$, where each items is from $[0,1]$. 
    \item Partition $[0,1]$ into $B$ segments.
    \item Maintain a sequence $y$ of items, initially empty. Consider the items $x_1,\dots,x_n$ one by one. 
    \begin{itemize}
        \item When $x_i$ comes, if we have seen at least $m$ items in $x_1,\dots,x_{i-1}$ that are from the same segment as $x_i$, then $x_i$ is appended to the end of $y$ (i.e., it enters the pool);
        \item Otherwise, $x_i$ is ignored (i.e., it enters a buffer).
    \end{itemize}
    This process continues until the length of $y$ reaches $n-B\cdot m$. $y_1,\dots,y_{n-B\cdot m}$ is the generated sequence.
\end{itemize}

Intuitively, the sequence $x_1,\dots,x_n$ represents the entire input, and the sequence $y_1,\dots,y_{n-B\cdot m}$ represents the items inserted to the pool. The multi-way pool distribution exactly characterizes the distribution of items that are inserted into the pool in $\algBase$. Moreover, in $\algAdapt$, right after an adaptive allocation, the distribution of future insertions into the current pool is also a multi-way pool distribution.

In our algorithm, any merge subroutine starts right after some adaptive allocation step. We thus have that, when merging starts, the current pool receives insertions from some multi-way pool distribution $\mathcal{P}_{n',B,m'}$. Then the goal is that, after merging, this property should be preserved. That is, the current pool should receive insertions from a distribution that is close to the multi-way pool distribution $\mathcal{P}_{n',B/K,m'\cdot K}$, which is similar to the one before merging, but with fewer segments and larger buffers.

\paragraph{A problem: Asynchronism of the sub-segments.} If the two distributions $\mathcal{P}_{n',B,m'}$ and $\mathcal{P}_{n',B/K,m'\cdot K}$ were equal to each other, then merging would be trivial. This is sadly not the case, because the sub-segments of a mega-segment are \emph{asynchronous} in $\mathcal{P}_{n',B,m'}$. 

To see this, we focus on one mega-segment $S$, which contains $S_1,\dots,S_K$ as its sub-segments. When generating $\mathcal{P}_{n',B/K,m'\cdot K}$, we have that, all the sub-segments $S_1,\dots,S_K$ start entering $y$ at the same instant. That is, prior to seeing $m'\cdot K$ items from $S$, no item from $S=S_1\cup\dots\cup S_K$ will enter the pool. After this, any item from $S$ can enter the pool. 

However, in $\mathcal{P}_{n',B,m'}$, items from different sub-segments $S_1,\dots,S_K$ may start entering $y$ at different instants. That is, after seeing a prefix $x_1,\dots,x_i$ of items, it may be that, the next item will enter the pool if it's from $S_1$, but not if it's from $S_2$.

\paragraph{The merge subroutine: Make use of the dampening buffer.} To synchronize the sub-segments, we allocate a new buffer in the pool for each mega-segment, called a \defn{dampening buffer}. Formally, we use a prefix of the current pool to allocate $B/K$ \emph{dampening buffers}, each corresponding to a mega-segment. Each dampening buffer has size $m''$. The \defn{post-dampening pool} is then defined to be the suffix of the old pool that is not used as dampening buffers\footnote{The name post-dampening pool is only used in the overview. We still have the same structure in the technical part, but we will simply call it the ``phase-$j$ pool'' instead.}. When an item $x$ from mega-segment $S$ enters the old pool, if the dampening buffer of $S$ is not full, then $x$ is inserted to it. Only when the corresponding dampening buffer is full will $x$ be inserted into the post-dampening pool.

The basic idea behind dampening buffers is to ``absorb'' insertions until the $B/K$ sub-segments that are being merged have time to ``synchronize'' (meaning that the previous buffers for those segments have all filled up). This means that, even though the entire pool (including both the dampening buffers and the post-dampening pool) receives elements according to a wrong distribution (namely, $\mathcal{P}_{n',B,m'}$), the dampening buffers ensure that the \emph{post-dampening pool} receives elements according to the right distribution (namely, $\mathcal{P}_{n',B/K,(m'\cdot K)+m''}$) with good probability.

\begin{figure}[ht]
    
    \centering
    \scalebox{0.9}{
    \begin{tikzpicture}
    \def \leftspace {1};
    \def \stepheight {-2.5};
    \def \mytick {0.3};
    \def \drawtick[#1,#2]{\draw ({#1}, {#2}) -- ++ (0,\mytick);} %
    \def \drawlargetick[#1,#2]{\draw ({#1}, {#2}) -- ++ (0,1.5*\mytick);}
    \def \midpoint {\leftspace+8};
    \def \drawsegment[#1,#2,#3,#4,#5,#6,#7,#8]{ %
        \draw ({(#1)}, {(#2)}) -- ({(#1) + (#3)}, {(#2)})
         node[midway, above = 0.1] (#7) {#5}
         node[midway, below = 0.1] (#8) {#6};
        \drawlargetick[(#1),(#2)]
        \drawlargetick[(#1) + (#3),(#2)]
        \foreach \i in {1,...,#4}
        {
            \drawtick[(#1) + \i / (#4 + 1) * (#3),(#2)]
        }
    }

    \draw (\leftspace-0.5, 4*\stepheight) -- ++ (0, 0)
        node[midway, above = -0.1] (1) {$(i)$};

    \def \lenarray {{{1,1,0,1,0},{1,1,1,1,1},{0,1,1,0,1},{1,0,1,1,1}}}

    \drawsegment[\leftspace,4*\stepheight,6,4,,,upper_final,lower_final];

    \drawtick[\leftspace+9,4*\stepheight]
    \drawsegment[\leftspace+10,4*\stepheight,6,0,,,upper_final,lower_final];

    \draw (\leftspace, 4*\stepheight) -- ++ (1.2*\leftspace, 0)
        node[midway, below = 0.1] (1) {$m'$ cells};

    \draw (\leftspace+4.8, 4*\stepheight) -- ++ (1.2*\leftspace, 0)
        node[midway, above = 0.5] (1) {$\cdots$};
        
    \draw (\leftspace+9, 4*\stepheight) -- ++ (\leftspace, 0)
        node[midway, above = 0.5] (1) {$\cdots$};
    \draw (\leftspace+6, 4*\stepheight) -- ++ (3*\leftspace, 0)
        node[midway, below = 0.1] (1) {$m''$ cells};

    \draw (\leftspace+6, 4*\stepheight) -- ++ (10*\leftspace, 0);
    \draw (\leftspace+10, 4*\stepheight) -- ++ (6*\leftspace, 0)
        node[midway, above = 0.5] (1) {$P$};

    \draw [->, color=red, thick] (\midpoint, 4.3 * \stepheight) -- ++ (0, 0.25 * \stepheight);

    \draw (\leftspace-0.5, 5*\stepheight) -- ++ (0, 0)
        node[midway, above = -0.1] (1) {$(ii)$};

    \def \lenarray {{{1,0,0},{1,1,1},{1,1,0},{1,1,1}}}
    
    \foreach \i in {0,...,3}
    {
        \foreach \j in {0,...,2}
        {
        
            \filldraw [fill=gray, draw=gray] (\leftspace+1.2*\i+0.4*\j,5*\stepheight) rectangle ++(0.4*\lenarray[\i][\j],0.2*\lenarray[\i][\j]);
        }
    }

    \drawsegment[\leftspace,5*\stepheight,6,4,,,upper_final,lower_final];

    \drawtick[\leftspace+9,5*\stepheight]
    \drawsegment[\leftspace+10,5*\stepheight,6,0,,,upper_final,lower_final];

    \draw (\leftspace, 5*\stepheight) -- ++ (1.2*\leftspace, 0);

    \draw (\leftspace+4.8, 5*\stepheight) -- ++ (1.2*\leftspace, 0);
        
    \draw (\leftspace+9, 5*\stepheight) -- ++ (\leftspace, 0);
    \draw (\leftspace+6, 5*\stepheight) -- ++ (3*\leftspace, 0);

    \draw (\leftspace+6, 5*\stepheight) -- ++ (10*\leftspace, 0);
    \draw (\leftspace+10, 5*\stepheight) -- ++ (6*\leftspace, 0);

    \draw [->, thick] (\leftspace+1.8, 4.8 * \stepheight) to[out=40,in=140] (\leftspace+7.5, 4.8 * \stepheight);
    
    \draw [->, thick] (\leftspace+4.2, 4.8 * \stepheight) to[out=40,in=140] (\leftspace+7.5, 4.8 * \stepheight);

    \draw [->, color=red, thick] (\midpoint, 5.2 * \stepheight) -- ++ (0, 0.25 * \stepheight);

    \draw (\leftspace-0.5, 6*\stepheight) -- ++ (0, 0)
        node[midway, above = -0.1] (1) {$(iii)$};

    \foreach \i in {0,...,3}
    {
        \foreach \j in {0,...,2}
        {
        
            \filldraw [fill=gray, draw=gray] (\leftspace+1.2*\i+0.4*\j,6*\stepheight) rectangle ++(0.4,0.2);
        }
    }

    \filldraw [fill=gray, draw=gray] (\leftspace+6,6*\stepheight) rectangle ++(1.6,0.2);

    \drawsegment[\leftspace,6*\stepheight,6,4,,,upper_final,lower_final];

    \drawtick[\leftspace+9,6*\stepheight];
    \drawsegment[\leftspace+10,6*\stepheight,6,0,,,upper_final,lower_final];

    \draw (\leftspace, 6*\stepheight) -- ++ (1.2*\leftspace, 0);

    \draw (\leftspace+4.8, 6*\stepheight) -- ++ (1.2*\leftspace, 0);
        
    \draw (\leftspace+9, 6*\stepheight) -- ++ (\leftspace, 0);
    \draw (\leftspace+6, 6*\stepheight) -- ++ (3*\leftspace, 0);

    \draw (\leftspace+6, 6*\stepheight) -- ++ (10*\leftspace, 0);
    \draw (\leftspace+10, 6*\stepheight) -- ++ (6*\leftspace, 0);

    \draw [->, thick] (\leftspace+1.8, 5.8 * \stepheight) to[out=40,in=140] (\leftspace+7.5, 5.8 * \stepheight);
    
    \draw [->, thick] (\leftspace+4.2, 5.8 * \stepheight) to[out=40,in=140] (\leftspace+7.5, 5.8 * \stepheight);
    \draw [->, thick] (\leftspace+0.6, 5.8 * \stepheight) to[out=40,in=140] (\leftspace+7.5, 5.8 * \stepheight);
    \draw [->, thick] (\leftspace+3, 5.8 * \stepheight) to[out=40,in=140] (\leftspace+7.5, 5.8 * \stepheight);

    \draw [->, color=red, thick] (\midpoint, 6.2 * \stepheight) -- ++ (0, 0.25 * \stepheight);

    \draw (\leftspace-0.5, 7*\stepheight) -- ++ (0, 0)
        node[midway, above = -0.1] (1) {$(iv)$};

    \foreach \i in {0,...,3}
    {
        \foreach \j in {0,...,2}
        {
        
            \filldraw [fill=gray, draw=gray] (\leftspace+1.2*\i+0.4*\j,7*\stepheight) rectangle ++(0.4,0.2);
        }
    }

    \filldraw [fill=gray, draw=gray] (\leftspace+6,7*\stepheight) rectangle ++(3,0.2);

    \drawsegment[\leftspace,7*\stepheight,6,4,,,upper_final,lower_final];

    \drawtick[\leftspace+9,7*\stepheight];
    \drawsegment[\leftspace+10,7*\stepheight,6,0,,,upper_final,lower_final];

    \draw (\leftspace, 7*\stepheight) -- ++ (1.2*\leftspace, 0);

    \draw (\leftspace+4.8, 7*\stepheight) -- ++ (1.2*\leftspace, 0);
        
    \draw (\leftspace+9, 7*\stepheight) -- ++ (\leftspace, 0);
    \draw (\leftspace+6, 7*\stepheight) -- ++ (3*\leftspace, 0);

    \draw (\leftspace+6, 7*\stepheight) -- ++ (10*\leftspace, 0);
    \draw (\leftspace+10, 7*\stepheight) -- ++ (6*\leftspace, 0);

    \draw [->, thick] (\leftspace+7.5, 6.8 * \stepheight) to[out=40,in=140] (\leftspace+13, 6.8 * \stepheight);
    
\end{tikzpicture}}
    \caption{An illustration of how dampening buffers work. In this example, $K=4$, and we only illustrate the behavior of the first mega-segment. Similar to \cref{fig:re_allocate}, the shaded regions represent the filled cells. The arrows from one buffer to another indicates that, since the previous buffer is full, new items that should enter it are instead forwarded to the other buffer. (i) Before merging, we have a set of buffers, where the buffers of each segment has $m'$ empty cells in total. For each mega-segment, we allocate one dampening buffer of size $m''$. $P$ is the post-dampening pool.  (ii,iii) New items first enter the older buffers. Only when the buffers of some sub-segment are all full will new items from that sub-segment start entering the dampening buffer. (iv) Finally, new items from the mega-segment only start entering the post-dampening pool when the dampening buffer is full. Ideally, when the dampening buffer is full, the older buffers are also full.}\label{fig:dampening}
\end{figure}

To prove this, we note that if the size of the dampening buffers $m''$ is sufficiently large, then the distribution of insertions to the \emph{post-dampening pool} is close to $\mathcal{P}_{n',B/K,(m'\cdot K)+m''}$. In other words, for a mega-segment, the buffers from its $K$ sub-segments and the dampening buffer together behave like one big buffer of size $(m'\cdot K)+m''$.

Indeed, if $m''$ is sufficiently large, then a Chernoff bound tells us that, when the dampening buffer of $S$ becomes full, with high probability, we would have received at least $m'$ items from each sub-segment $S_i$, which means that all the previous buffers must already be full. In other words, until we have seen $(m'\cdot K)+m''$ items from $S$, the dampening buffer will not be full. Thus, the first $(m'\cdot K)+m''$ items from $S$ will be inserted into the buffers, which means that the insertions to the \emph{post-dampening buffer} does (approximately) follow the desired distribution.

\subsection{Recursion for the Dampening Buffers} 
As noted above, the reason that we cannot na\"ively transform the above algorithm into a recursive algorithm is becasue the dampening buffers do not receive i.i.d.~unformly random insertions. 

Notice, however, that even though the insertions into a \emph{dampening buffer} are not uniformly random, they are still highly structured. In fact, the insertions that a dampening buffer receives follow a distribution that is close to being a multi-way pool distribution---the same type of distribution that we normally associate with an entire pool! 

This suggests a possible way to add recursion back into our algorithm: rather than defining recursive subproblems to assume i.i.d.~uniformly random elements, we can define a richer recursive structure in which subproblems receive elements according to a certain type of multi-way pool distribution. 

As we see in Section \ref{sec:final}, there are several technical challenges to making such a recursion work. In order to meaningfully make use of an input that is drawn according to a multi-way pool distribution, a recursive subproblem must be given not just its own input elements, but information about the earlier elements that are not inserted into the pool (and were instead inserted into older buffers). Nonetheless, with sufficient care, we show that it is still possible to transform $\algAdaptMerge$ into a recursive algorithm. 

Combining this high-level approach with a careful analysis, we are able to design a recursive algorithm whose expected cost is bounded by $\log n\cdot 2^{O(\log^* n)}$. The $2^{O(\log^* n)}$ factor comes from the recursive structure: Intuitively, if $h(n)$ denotes the expected cost on a subproblem of size $n$, then we have $h(n) = O(\log n / \log \log n) \cdot h(\poly\log n)$, where the $\log n / \log \log n$ term comes from the number of phases, and the $h(\poly\log n)$ comes from the fact that we recurse on subproblems of size $\poly\log n$ (because the buffer sizes are $\poly\log n$). This solves to $h(n) = \log n\cdot 2^{O(\log^* n)}$.

We are also able to prove a nearly-matching lower bound of $\Omega(\log n)$ in \cref{sec:lb}, meaning that our final bound is necessarily optimal up to a $2^{O(\log^* n)}$ factor.

\section{The First Technique: Adaptive Allocation}
\label{sec:warm_up}

In this section, we present an algorithm $\algAdapt$ that achieves $n^{o(1)}$ competitive ratio. As discussed in the technical overview, $\algAdapt$ works by adaptively allocating new buffers in order to reducing the size of the pool all the way down to $B\cdot \poly\log n$. In the rest of this section, we formalize this algorithm, along with its analysis.

\subsection{Notations}

Our algorithm partitions the timeline into multiple \defn{phases}. In each phase, we make predictions about the future, and use it to allocate new buffers. Let $B$ be the number of segments, and let $n_j$ be the number of future items at the start of phase $j$ (in particular, $n_1=n$). At phase $j$, we use
\begin{align*}
    m_j\defeq \biggr\lceil\bk*{\frac {n_j}B}-\frac 12\cdot\bk*{\frac {n_j}B}^{2/3}\biggr\rceil
\end{align*}
as a lower bound on the number of future items from each segment. Note, in particular, that so long as $n_j \gg B \log^3 n$, this lower bound holds with high probability in $n$. In the future, for notational simplicity, we will write $m_j$ as $(n_j/B)-d(n_j/B)$, where $d(x)\defeq x-\lceil x-\frac 12\cdot x^{2/3}\rceil$. Note that $d(x)=\Theta(x^{2/3})$, and that $m_j$ is always an integer.

At the start of phase $j$, we will allocate a new buffer for each segment, where the buffer for segment $i$ is denoted as $\buf_{i,j}$. At any point in time, for a buffer $\buf_{i,j}$, let its \defn{remaining capacity} be the number of remaining empty cells in it. The \defn{remaining capacity of a segment} $i$ is defined to be the sum of remaining capacities of all the buffers for segment $i$, and is denoted as $c_i$.

\subsection{The Algorithm $\algAdapt$}

In phase $1$, $\algAdapt$ behaves just as $\algBase$. Partition the $n$ cells into $B$ buffers $\buf_{1,1},\dots,\buf_{B,1}$ and one pool $P$. Each buffer $\buf_{i,1}$ has size $m_1= (n_1/B)-d(n_1/B)$ where $n_1=n$. The items are processed in the same way as before, where within each buffer, we recursively call $\algAdapt$ to handle the insertions. We end phase $1$ when some buffer becomes full, at which point we perform an \defn{adaptive allocation}, and start phase $2$.

\paragraph{Adaptive allocation.} When phase $1$ ends, we have processed all but $n_2$ items. Using a Chernoff bound, we have that with high probability, for each segment, at least $m_2=\bk*{n_2/B} - d\bk*{n_2/B}$ of the $n_2$ future items belong to this segment. Recall that $c_i$ is the current number of remaining empty cells in $\buf_{i,1}$. Thus the Chernoff bound implies that, with high probability, at least $m_2-c_i$ items inserted to the pool will be from segment $i$.

Given this knowledge, we allocate $m_2-c_i$ cells of the pool to segment $i$. Formally, we take a prefix of the pool $P$, and partition it into $B$ buffers $\buf_{1,2},\dots,\buf_{B,2}$, where each buffer $\buf_{i,2}$ contains $m_2-c_i$ cells. The pool $P$ is then redefined to be the unallocated part of the old pool. This process is illustrated in \cref{fig:re_allocate}. The case where $m_2-c_i\le 0$ for some $i$ is unlikely, and is handled by the failure mode described below.

After this adaptive allocation, we now have that the remaining capacity $c_i$ of each segment is equal to $m_2$. Although there are two buffers per segment, they are functionally equivalent to one single buffer of remaining capacity $c_i$. Thus we arrive at a state similar to the beginning of phase $1$, and we start phase $2$ from here.

\paragraph{Repeated allocation.} In phase $2$, upon receiving an item that belongs to some segment, we will insert it in the \emph{oldest} non-full buffer of that segment (that is, we only insert it in $\buf_{i,2}$ when $\buf_{i,1}$ is already full). Again, within each new buffer, $\algAdapt$ is called recursively to process the insertions. Phase $2$ ends when the remaining capacity of some segment drops to $0$, at which point we run another adaptive allocation, and start phase $3$.

This process continues until the size of the pool becomes very small. If, \emph{after} the adaptive allocation of phase $j$, the size of the current pool\footnote{This is slightly different from what we described in the technical overview, where we used the condition that the \emph{number of future items} is small. It turns out that these values are of the same order, and that it is easier for the analysis to check the size of the pool.} is smaller than $B\cdot \log ^5n$, then we let $j$ be the last phase, and refer to the current pool as the \defn{final pool}. In this last phase, when the remaining capacity of some segment becomes $0$, any future item from this segment will be inserted to the remaining first empty cell in the final pool.

Note that, for any phase $j$, prior to adaptive allocation, the size of the pool is at least $B\cdot \log^5 n$. This implies that, we must have $n_j\ge B\cdot \log^5 n$ for any phase $j$.

\paragraph{Failure mode.} If the input does not satisfy certain concentration properties, then our algorithm would behave poorly. In those cases, the algorithm will just give up and enter the failure mode, in which we completely forget about the buffer structure, and any new item is simply inserted into the first empty cell in the array. Formally, we enter the failure mode immediately after encountering any of the following events:

\begin{enumerate}
    \item Between consecutive phases, the number of future items does not decrease by as much as one would expect, namely, $(n_{j+1}/B)>(n_j/B)^{2/3}$.
    \item During adaptive allocation, the size of some new buffer $m_j-c_i$ is nonpositive.
    \item When we process some item $x$ from segment $i$, there are no cells for it. That is, both the buffers of segment $i$ and the final pool are full. Note that this can only happen in the last phase, because in the previous phases, the remaining capacities of segments are always positive (indeed, when one of the remaining capacities hits zero, that's when we start a new phase with new buffers).
\end{enumerate}

If the algorithm never enters the failure mode, we say that it succeeds. Note that, even if the algorithm's recursive subproblems (i.e., the recursions on buffers) fail, so long as the current level of recursion \emph{does not fail}, we still consider the algorithm to have succeeded. 

\subsection{The analysis}

\paragraph{Proof of correctness.} To prove that the algorithm succeeds with high probability, it suffices to show the following claim, which implies that during each phase, the arriving items exhibit nice concentration properties.

\begin{claim}
    \label{clm:warm_up_phase_concentration}
    Let $n,B$ be integers where $1<B\le n/\log^5 n$, and let $n'$ be an integer satisfying $B\cdot \log^5n\le n'\le n$. Let $m'=(n'/B)-d(n'/B)$. Consider the following random process: Start with $n'$ balls and $B$ bins, where the bins are initially empty. The balls are placed one after another into random bins. For this random process, with probability $1-O(1/n^{11})$, the following properties hold:
    \begin{itemize}
        \item After all the balls are placed, each bin has load at least $m'$.
        \item Consider the first moment that the load of some bin reaches $m'$. At this moment, the load of any other bin is at least $m'-(1/5)\cdot d(n'/B)$.
    \end{itemize}
\end{claim}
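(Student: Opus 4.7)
The plan is to reduce both properties to a single uniform concentration bound on the bin loads throughout the process, and derive each property as a direct consequence. Set $\mu \defeq n'/B$, so that the hypothesis $n' \ge B \log^5 n$ gives $\mu \ge \log^5 n$, which in particular implies $\sqrt{\mu \log n} = o(\mu^{2/3}) = o(d(\mu))$. The target concentration bound is: with probability $1 - O(1/n^{11})$, for every $T \in [1, n']$ and every bin $i$, the load of bin $i$ after $T$ balls are placed satisfies $\abs{\text{load}_i(T) - T/B} \le C \sqrt{(T/B) \log n}$ whenever $T/B \ge C^2 \log n$, and no bin attains load anywhere close to $m'$ when $T/B < C^2 \log n$. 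Here $C$ is a sufficiently large absolute constant. Both clauses follow from multiplicative Chernoff bounds on $\mathrm{Bin}(T, 1/B)$, together with a union bound over the at-most-$n$ values of $T$ and the $B \le n$ bins.

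With this uniform concentration in hand, both properties are immediate. For the first property, apply the bound at $T = n'$: each bin's load is at least $\mu - C\sqrt{\mu \log n} \ge \mu - (1/2)\mu^{2/3}$, and integrality gives load $\ge \lceil \mu - (1/2)\mu^{2/3}\rceil = m'$. For the second property, let $T^*$ be the first time some bin reaches load $m'$; the small-$T$ clause of the concentration bound guarantees $T^* \ge C^2 B \log n$, so the additive deviation estimate applies at $T = T^*$. Applying it to the bin that just hit $m'$ yields $T^*/B \ge m' - C\sqrt{\mu \log n}$, and applying it to any other bin yields $\text{load} \ge T^*/B - C\sqrt{\mu \log n} \ge m' - 2C\sqrt{\mu \log n} \ge m' - (1/5)d(\mu)$, where the last inequality uses $\mu \ge \log^5 n$ so that $\mu^{1/3}$ dominates $\log n$ by a large factor.

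The main obstacle will be packaging the two regimes of $T$ cleanly: the multiplicative Chernoff bound requires $T/B = \Omega(\log n)$ to yield the desired additive deviation $\sqrt{(T/B) \log n}$, whereas an additive Hoeffding bound is too weak to avoid a bad dependence on $B$. I would handle this by separately showing that, with high probability, no bin attains load anywhere near $m' = \Theta(\mu)$ during the small-$T$ regime where $T/B < C^2 \log n$, using the standard rare-event Chernoff estimate $\Pr[\mathrm{Bin}(T, 1/B) \ge k] \le (e T/(Bk))^k$; with $k \ge \mu/2 \gg \log n$ this is easily $\ll n^{-100}$ per $(T, i)$ pair, and a union bound over the at-most-$n^2$ such pairs gives a global failure probability well under $n^{-11}$. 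Combining this with the multiplicative Chernoff estimates in the remaining regime then completes the proof in a single union bound.
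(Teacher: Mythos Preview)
Your proposal is correct and follows essentially the same route as the paper: establish a uniform Chernoff-type concentration on bin loads across all times (via a union bound over times and bins), apply it at $T=n'$ for the first bullet, and for the second bullet first argue that $T^*$ cannot be too small and then apply the concentration at $T^*$ to both the bin that just hit $m'$ and to every other bin. The only cosmetic differences are that the paper phrases the deviation directly as $(1/20)\,d(i/B)=\Theta((i/B)^{2/3})$ rather than your tighter $C\sqrt{(T/B)\log n}$, and handles the small-$T$ regime by applying its bound at the threshold $i=\lceil B\log^4 n\rceil$ (seeing all loads are $O(\log^4 n)\ll m'$) rather than invoking a separate rare-event Chernoff; neither difference affects the argument.
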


\begin{proof}
    WLOG assume that $n$ is sufficiently large. By a Chernoff bound, the following holds with probability $1-O(1/n^{11})$ (recall that $d(x)=\Omega(x^{2/3})$): For any $i$ such that $B\cdot \log^4n\le i\le n'$, after the first $i$ items are placed into bins, the load of each bin is in the range
    \begin{align*}
        \frac iB\pm \frac 1{20}\cdot d\bk*{\frac iB}.
    \end{align*}
    Now we prove that the claimed properties hold (deterministically) so long as the above concentration bound holds. For the first bullet, it suffices to apply the concentration bound at time $n'$.
    
    For the second bullet, let $i^*$ be the first moment that the load of some bin reaches $m'$. Using the concentration property, at time $\lceil B\cdot \log^4n\rceil$, each bin has load $O(\log^4n)\ll m'$. Therefore, $i^*$ must be at least $B\cdot \log^4n$, which means that we can apply the concentration bound at time $i^*$.

    Since one of the bins has load $m'$, using the upper bound side of the concentration bound, we have that at time $i^*$,
    \begin{align*}
        \frac {i^*}B+ \frac 1{20}\cdot d\bk*{\frac {i^*}B}\ge m'.
    \end{align*}
    Therefore, by applying the lower bound side of the concentration bound, any other bin has load at least
    \begin{align*}
        &\frac {i^*}B-\frac 1{20}\cdot d\bk*{\frac {i^*}B} \\
        \ge{}&m'-\frac 1{10}\cdot d\bk*{\frac {i^*}B} \\
        \ge{}&m'-\frac 15\cdot d\bk*{\frac {n'}B}.\tag{for sufficiently large $n$}
    \end{align*}
\end{proof}

Given \cref{clm:warm_up_phase_concentration}, the proof of correctness proceeds by applying the claim to each phase.

\begin{lemma}
    \label{lem:warm_up_regularity}
    When $1<B\le n/\log^5n$, the probability that $\algAdapt$ fails is $O(1/n^{10})$. 
\end{lemma}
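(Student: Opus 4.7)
The plan is to apply Claim \ref{clm:warm_up_phase_concentration} to each phase of $\algAdapt$ and union bound. Concretely, for each phase $j$ I would apply the claim with $n' = n_j$; since the original input $x_1,\dots,x_n$ is i.i.d.\ uniform and the start time of each phase is a stopping time, the items that arrive during phase $j$ are i.i.d.\ uniform on $[0,1]$ conditional on that start time, so their segment assignments form exactly the ``random balls into $B$ bins'' process of Claim \ref{clm:warm_up_phase_concentration}. The preconditions $1<B$ and $B\log^5 n \le n_j \le n$ are immediate from the lemma's hypothesis and from the invariant (stated just before the lemma) that $n_j \ge B\log^5 n$ throughout the algorithm's execution. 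Hence each phase contributes at most $O(1/n^{11})$ to the failure probability, and since the recurrence established below forces $J = O(\log\log n)$ phases, the union bound gives the claimed $O(1/n^{10})$.

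Under the good event (both conclusions of Claim \ref{clm:warm_up_phase_concentration} holding for every phase $j$), I would rule out the three failure modes one by one. For failure (1), at the moment that phase $j$ ends, one segment has just processed its $m_j$-th item, so by the second bullet every other segment has processed at least $m_j - \tfrac{1}{5}d(n_j/B)$ items during the phase; summing yields that at most $n_j - B\bigl(m_j - \tfrac{1}{5}d(n_j/B)\bigr) = \tfrac{6B}{5}d(n_j/B) \le \tfrac{3B}{5}(n_j/B)^{2/3}$ items remain, i.e.\ $n_{j+1}/B \le \tfrac{3}{5}(n_j/B)^{2/3} < (n_j/B)^{2/3}$, where I use the upper bound $d(x) \le x^{2/3}/2$ that follows directly from the definition of $d$. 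For failure (3), the first bullet of the claim, applied to the last phase $J$, says every segment receives at least $m_J$ of the $n_J$ remaining items; since $n_J = B m_J + P_J$ exactly, the total overflow to the final pool is at most $P_J$ cells' worth and every item finds a cell (the ordering is irrelevant: at any intermediate time the pool occupancy is bounded by the final overflow count).

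The main obstacle is failure (2): verifying that every new buffer $\buf_{i,j+1}$ has strictly positive size, i.e.\ $m_{j+1} > c_i$. At the end of phase $j$, the same second bullet bounds each segment's remaining capacity by $c_i \le \tfrac{1}{5}d(n_j/B)$. Combining this with the identity $n_{j+1} = \sum_i c_i + P_j \ge P_j = B\,d(n_j/B)$, and with the monotonicity of $x \mapsto x - d(x) = \lceil x - x^{2/3}/2\rceil$, setting $Y := d(n_j/B)$ gives
\[
m_{j+1} \;\ge\; Y - d(Y) \;\ge\; Y - Y^{2/3}/2 \;>\; Y/5 \;\ge\; c_i,
\]
where the strict middle inequality uses $Y \ge \log^5 n \gg 1$, which is guaranteed because phase $j+1$ is allocated only when $P_j = B\,d(n_j/B) \ge B\log^5 n$. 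Threading together the bound on $c_i$, the lower bound on $n_{j+1}$ via the surviving pool, and the monotonicity of $m_\cdot$, is what makes this the subtlest of the three cases; once it is in hand, the union bound closes the argument.
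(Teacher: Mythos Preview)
Your proposal is correct and follows essentially the same approach as the paper: apply Claim~\ref{clm:warm_up_phase_concentration} phase by phase, union-bound, and rule out each of the three failure modes using the two bullets of the claim together with the identities $n_j - Bm_j = B\,d(n_j/B)$ and $n_{j+1} \ge P_j$. The only cosmetic differences are that the paper union-bounds over at most $n$ phases (rather than your $O(\log\log n)$, which is also fine since failure type~(1) is checked by the algorithm, so reaching phase $j$ without failing already forces $j=O(\log\log n)$), and for failure~(2) the paper uses $m_{j+1}\ge \tfrac12(n_{j+1}/B)$ in place of your monotonicity-of-$x\mapsto x-d(x)$ step; both routes yield the same inequality $m_{j+1}>\tfrac15 d(n_j/B)\ge c_i$.
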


\begin{proof}
    Suppose that we are currently at the start of phase $j$, and that the algorithm does not fail up to this point. We show that with probability $1-O(1/n^{11})$, $\algAdapt$ will not fail until after the buffers of phase $j+1$ are allocated, or if $j$ is the last phase, then we show that $\algAdapt$ will successfully terminate. Then, since there can be at most $n$ phases, a union bound over all phases proves the lemma.

    Since we always have that $n_j\ge B\cdot \log^5n$, we can apply \cref{clm:warm_up_phase_concentration} on the items that arrive during and after phase $j$. In the following, we use \cref{clm:warm_up_phase_concentration} to show that each failure is unlikely.

    \begin{enumerate}
        \item For the first type of failure (when $j$ is not the last phase): \cref{clm:warm_up_phase_concentration} implies that, by the time that phase $j$ ends, the remaining capacity of any buffer is at most $(1/5)\cdot d(n_j/B)$. When phase $j$ ends, the number of future items $n_{j+1}$ is precisely the current number of empty cells in the array, which is the sum of the current pool size and the remaining capacities of the buffers. The pool size during phase $j$ is the total number of future items $n_j$ minus the remaining capacities at the start of phase $j$, which is $B\cdot m_j$. Therefore, $n_{j+1}$ can be bounded as
        \begin{align*}
            n_{j+1}\le{}&(n_j-B\cdot m_j)+B\cdot \frac 15 \cdot d\bk*{\frac{n_j}B} \\
            \le {}&\frac 65\cdot B\cdot d\bk*{\frac{n_j}B}\tag{definition of $m_j$} \\
            \le {}&\frac 12\cdot \frac 65\cdot B\cdot \bk*{\frac{n_j}B}^{2/3}\le B\cdot \bk*{\frac {n_j}B}^{2/3}.\tag{definition of $d(x)$}
        \end{align*}
        Therefore, the algorithm avoids the first type of failure. 
        \item For the second type of failure (when $j$ is not the last phase): We have established that, with high probability, when phase $j$ ends, the remaining capacity of each buffer is at most $(1/5)\cdot d(n_j/B)$. It then remains to show that $m_{j+1}> (1/5)\cdot d(n_j/B)$. By definition, we have that $m_{j+1}=(n_{j+1}/B)-d(n_{j+1}/B)$. We also have that $n_{j+1}\ge B\cdot \log^5n$. This implies that for sufficiently large $n$, $m_{j+1}\ge (1/2)\cdot (n_{j+1}/B)$. Moreover, since $n_{j+1}$ is exactly the number of empty cells in the array, which is at least the size of the pool during phase $j$, we have the following lower bound:
        \begin{align*}
            n_{j+1}\ge{}&n_j-B\cdot m_j \\
            = {}& B\cdot d\bk*{\frac{n_j}B}.\tag{definition of $m_j$}
        \end{align*}
        Combining the two bounds, we conclude that $m_{j+1}> (1/5)\cdot d(n_j/B)$ holds with high probability.
        \item For the third type of failure (when $j$ \emph{is} the last phase): For each segment $i$, let $v_i$ be the number of items from $i$ that arrive during the last phase $j$. Since each segment start with remaining capacity $c_i\equiv m_j$, the number of items that overflow from their corresponding buffers is $\sum_{i=1}^{B} \max(v_i-m_j,0)$. If the third type of failure happens, then we must have that the number of items inserted to the pool exceeds the size of the pool, which is $\sum_{i=1}^{B} \max(v_i-m_j,0)> n_j-B\cdot m_j$. Since $\sum_{i=1}^{B}v_i=n_j$, this only happens when $v_i<m_j$ for some $i$. That is, for some segment $i$, the number of items that arrive during the last phase and are from segment $i$ is smaller than $m_j$. According to \cref{clm:warm_up_phase_concentration}, this happens with negligible probability. \qedhere
    \end{enumerate}

\end{proof}

\paragraph{Proof of efficiency.} As discussed in the technical overview, we have the following properties for $\algAdapt$:

\begin{lemma}
    \label{lem:warm_up_decrement}
    Conditioned on $\algAdapt$ succeeding, we have that:
    \begin{itemize}
        \item Any buffer $\buf_{i,j}$ will be filled by uniformly random items from segment $i$.
        \item The size of buffer $\buf_{i,j}$ is at most $(n/B)^{(2/3)^{j-1}}$. Consequently, the number of phases is $O(\log\log n)$.
    \end{itemize}
\end{lemma}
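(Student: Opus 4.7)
The plan is to prove the two bullets separately; the first is a principle-of-deferred-decisions argument about the distribution of values flowing into a buffer, and the second is a short induction driven by the fact that, conditioned on success, the first failure mode never fires.

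For the first bullet, I would argue as follows. Each input item $x_t$ can be decomposed as the pair (segment index $i_t$, scaled offset $u_t \in [0,1)$), where $u_t$ is uniform on $[0,1)$ and independent of everything else, and $x_t = (i_t - 1)/B + u_t / B$. The key observation is that $\algAdapt$'s decisions about \emph{which cell} to use for $x_t$, and in particular about which buffer $\buf_{i,j}$ to route $x_t$ into, depend only on the sequence of segment indices $(i_1, \dots, i_t)$ and on the assignment history so far, not on the offsets $(u_1, u_2, \dots)$. Consequently, conditioned on the event that the items eventually placed in $\buf_{i,j}$ are a particular set of time indices $t_1 < t_2 < \cdots$ with $i_{t_k} = i$ for all $k$, the corresponding offsets $u_{t_1}, u_{t_2}, \dots$ are still i.i.d.\ uniform on $[0,1)$. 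This is exactly the statement that $\buf_{i,j}$ is filled by i.i.d.\ uniformly random items from segment $i$. (A formal induction over $t$, tracking a sigma-algebra generated only by the segment identities and assignment decisions, makes this rigorous.)

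For the second bullet, I would proceed by induction on $j$. For $j=1$, we have $|\buf_{i,1}| = m_1 \le n_1 / B = n/B = (n/B)^{(2/3)^0}$, so the claim holds. For the inductive step, note that conditioning on success precisely rules out the first type of failure, which means $n_{j+1}/B \le (n_j/B)^{2/3}$ for every $j$ that is not the last phase. Combined with $|\buf_{i,j+1}| \le m_{j+1} \le n_{j+1}/B$ and the inductive hypothesis $n_j/B \le (n/B)^{(2/3)^{j-1}}$, this yields
\begin{align*}
|\buf_{i,j+1}| \;\le\; n_{j+1}/B \;\le\; (n_j/B)^{2/3} \;\le\; (n/B)^{(2/3)^{j}},
\end{align*}
completing the induction. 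The bound on the number of phases then follows because every phase (other than possibly the last) satisfies $n_j \ge B \log^5 n$, i.e., $n_j/B \ge \log^5 n$; the recursion $n_{j+1}/B \le (n_j/B)^{2/3}$ forces this to fail after $O(\log \log n)$ phases, since iterating the $2/3$-th power shrinks $\log(n/B)$ to $O(\log\log n)$ in $O(\log\log n)$ steps.

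The only slightly subtle point is the first bullet: one must ensure that the identity of the items eventually routed into $\buf_{i,j}$, as a function of the input stream, does not depend on the within-segment offsets. This is clear from the algorithm description, since $\algAdapt$'s routing decisions only inspect segment identities and the current occupancy of buffers and pool, and then recurses into the chosen buffer. The second bullet is mostly bookkeeping once the failure-mode conditions are translated into the inequality $n_{j+1}/B \le (n_j/B)^{2/3}$.
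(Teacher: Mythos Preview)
Your proposal is correct and follows essentially the same approach as the paper's proof: the deferred-decisions decomposition $x_t = (i_t - 1)/B + u_t/B$ and the observation that routing (and the success event) depend only on the segment indices is exactly the paper's argument for the first bullet, and the induction on $j$ using $n_{j+1}/B \le (n_j/B)^{2/3}$ together with $|\buf_{i,j}| \le m_j \le n_j/B$ is exactly the paper's argument for the second. Your write-up is a bit more explicit about the induction and the $O(\log\log n)$ phase count, but there is no substantive difference.
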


\begin{proof}
    For the first bullet: If for some buffer $\buf_{i,j}$, any item inserted to it is not from segment $i$, then this item would trigger the third type of failure. Thus, each buffer only receives items from its own segment. 
    
    To see that these items are uniformly random, we note that whether the algorithm succeeds is independent of the value of items modulo their segments. That is, we can think of the generation of the input items as first sampling $y_i\in [B]$, the segment of $x_i$, and $z_i\in [0,1/B]$, the value of $x_i$ within its segment, then setting $x_i=(y_i-1)/B+z_i$. When deciding whether to insert an item into a buffer, we only care about the segments that each $x_i$ belong to, i.e., the $y_i$'s. Therefore, for the items inserted to a buffer, their $z_i$'s are uniformly random.

    For the second bullet, we have that the size of $\buf_{i,j}$ is bounded by $m_j\le n_j/B$. Since $n_1= n$ and $(n_j/B)\le (n_{j-1}/B)^{2/3}$ (by the definition of the failure mode), it follows that $m_j\le (n/B)^{(2/3)^{j-1}}$.
\end{proof}

Given \cref{lem:warm_up_regularity} and \cref{lem:warm_up_decrement}, we can analyze the competitive ratio of $\algAdapt$:

\begin{theorem}
    When $B$ is set to be $\Theta(\log n)$, the expected competitive ratio of $\algAdapt$ is $n^{o(1)}$.
\end{theorem}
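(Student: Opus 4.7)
The plan is to derive and solve a recurrence for the expected cost $T(n)$ of $\algAdapt$ on $n$ uniformly random items in $[0,1]$. First, I would dispense with the failure mode. By \cref{lem:warm_up_regularity}, a single invocation of $\algAdapt$ fails with probability $O(1/n^{10})$, and in the failure mode the cost is trivially at most $n$ (since items lie in $[0,1]$). Since the recursion tree has at most $n$ nodes, a union bound over all subproblems shows that the total failure probability is $O(1/n^9)$, contributing at most $O(1/n^8)$ to $\E[\text{cost}]$, which is absorbed into the final bound. So I may condition on success at every level throughout.

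Next, I would decompose the conditional cost into the three parts from the technical overview. The \emph{inter-subarray cost} is $O(\log\log n)$: within each phase the $B$ adjacent buffers sit in segment order, contributing $B\cdot O(1/B) = O(1)$; transitions between consecutive phases and to the final pool contribute at most $1$ each; and by \cref{lem:warm_up_decrement} there are $O(\log\log n)$ phases. The \emph{final pool cost} is at most the pool's size, which is bounded by $B\log^5 n$ by the termination rule. For the \emph{recursive buffer cost}, \cref{lem:warm_up_decrement} tells us that every buffer $\buf_{i,j}$ receives i.i.d.\ uniform items from a segment of length $1/B$, so by rescaling its recursive cost is at most $T(|\buf_{i,j}|)/B$; since each segment carries $O(\log\log n)$ buffers of size at most $n/B$, the total buffer cost across all $B$ segments is at most $O(\log\log n)\cdot T(n/B)$.

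Putting the pieces together yields the recurrence
\[
T(n) \le O(\log\log n)\cdot T(n/B) + B\log^5 n + O(\log\log n).
\]
Setting $B = \Theta(\log n)$, the recursion bottoms out at depth $d = O(\log n/\log\log n)$, at which point $n/B^d \le \poly\log n$ and $T$ is trivially bounded by the array size. The multiplicative overhead is
\[
(O(\log\log n))^d = \exp\!\left(O\!\left(\frac{\log n\cdot \log\log\log n}{\log\log n}\right)\right) = n^{o(1)},
\]
and the additive $\poly\log n$ terms at each level, amplified by at most this overhead, also remain $n^{o(1)}$. Combining gives $T(n) \le n^{o(1)}$, as claimed.

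The step I expect to be most delicate is justifying the per-buffer accounting $T(|\buf_{i,j}|)/B$: it relies on the observation, made in the proof of \cref{lem:warm_up_decrement}, that whether an item enters a particular buffer depends only on its segment index, not on its within-segment value. Conditioned on this, each buffer genuinely sees an i.i.d.\ uniform sample from its segment, so the recursive invocation on that buffer is a well-posed instance of stochastic online sorting of the appropriate size, and the inductive hypothesis may be applied. Once this independence is in hand, the remainder of the argument is routine bookkeeping and the recurrence manipulation above.
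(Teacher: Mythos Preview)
Your decomposition and recurrence are essentially the paper's approach: the paper also splits the cost into buffers, final pool, and inter-subarray pieces, sets $B=\Theta(\log n)$, and arrives at the same recursive inequality. The paper phrases the solution as an induction (for every fixed $\epsilon>0$, prove $T(n)\le C n^\epsilon$) rather than unrolling to depth $O(\log n/\log\log n)$, and it uses the finer phase-by-phase buffer sizes $(n/B)^{(2/3)^{j-1}}$ from \cref{lem:warm_up_decrement} before collapsing them to the same $O(\log\log n)\cdot T(n/B)$ bound you use directly; these are cosmetic differences.

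There is, however, one genuine imprecision in your failure-mode argument. \cref{lem:warm_up_regularity} gives a failure probability of $O(1/m^{10})$ for an invocation on $m$ cells, not $O(1/n^{10})$; a subproblem deep in the recursion may have $m=\poly\log n$, so its failure probability is only $1/\poly\log n$, and your union bound over the whole recursion tree does not yield $O(1/n^9)$ as you claim. The clean fix, which is what the paper does, is not to condition globally on success but to let $T(m)$ denote the \emph{unconditional} expected cost (including failure) and absorb failure level-by-level: for the top call of size $n$, failure contributes at most $n\cdot O(1/n^{10})=o(1)$, and the recursive term $T(n/B)$ already accounts for any failures inside the subproblems. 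Your recurrence is correct once read this way; only the sentence ``condition on success at every level throughout'' needs to be replaced by this per-level accounting.
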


\begin{proof}
    Denote the expected cost of $\algAdapt$ on $n$ cells by $T_{\text{adpt}}(n)$.
    
    Fix a constant $\epsilon>0$. We show that, there exists a constant $C$ such that $T_{\text{adpt}}(n)\le C\cdot n^{\epsilon}$ for any $n$. We only have to prove this for sufficiently large $n$, since we can let $C$ be arbitrarily large. In the remainder of the proof, the big-O notations do not hide $\epsilon$ and $C$.

    The proof is by induction. Suppose that for a sufficiently large $n$, we have $T_{\text{adpt}}(n')\le C\cdot (n')^{\epsilon}$ for any $n'<n$. We now show that the same holds for $n$.

    The cost of the algorithm comes from the following parts:
    \begin{enumerate}
        \item In the case that $\algAdapt$ fails, its cost is at most $n$. The expected cost from this part is $o(1)$.
        \item \textbf{The cost of the buffers:} Within each buffer, we recursively apply $\algAdapt$. The expected cost of each buffer is 
        \begin{align*}
            T_{\text{adpt}}(\text{buffer length})/B.
        \end{align*}
        Using \cref{lem:warm_up_decrement} to bound the buffer lengths, we have that, the total cost of buffers is at most
        \begin{align*}
            &\sum_{j=0}^{O(\log\log n)}\sum_{i=1}^{B}T_{\text{adpt}}(|\buf_{i,j}|)/B \\
            \le{}& C\cdot \bk*{\bk*{\frac nB}^{\epsilon}+\bk*{\frac nB}^{(2/3)\cdot\epsilon}+\cdots} \\
            ={}&C\cdot O\bk*{\log\log n\cdot\bk*{\frac n{\log n}}^{\epsilon}}\le (1/2)C\cdot n^\epsilon.
        \end{align*}
        \item \textbf{The cost of the final pool:} Within the final pool, the cost is at most its size, which is $B\cdot \log^8 n=\poly\log n$.
        \item \textbf{The inter-subarray cost:} The total difference between adjacent buffers (and between the last buffer and the pool) is $O(\log\log n)$. This is because, the total difference between subarrays of the same phase is $O(1)$ (they are arranged in increasing order of their corresponding segment), and there are $O(\log\log n)$ phases.
    \end{enumerate}
    Combining these parts, we have that $T_{\text{adpt}}(n)\le C\cdot n^{\epsilon}$ for any $n$.
\end{proof}

\section{The Second Technique: Merge Segments Using a Synchronization Technique}
\label{sec:warm_up_2}

In this section, we design a non-recursive algorithm $\algAdaptMerge$, that achieves competitive ratio $\poly\log n$. An interesting feature of $\algAdaptMerge$ is that, because of the non-recursiveness, it achieves a $\poly\log n$ competitive ratio not just in expectation, but with \emph{high-probability}.

\highProb*

We will see in Section \ref{sec:final} how to bring this competitive ratio down to almost $\log n$, by adding an intricate recursive structure to the current algorithm.

As mentioned in the technical overview, $\algAdaptMerge$ works by adding a merge subroutine to the algorithm, which allows us to keep the sizes of all the buffers (as well as the final pool) to be $\poly\log n$.

\subsection{Notations}

The new algorithm $\algAdaptMerge$ builds upon (the non-recursive version of) $\algAdapt$, with the additional feature that the number of segments changes from phase to phase. However, within each phase, the number of segments is fixed.

Let $B_j$ denote the number of segments during phase $j$. Since the segments change over time, we will sometimes refer to a segment as a \defn{phase-$j$} segment, which means that it is of the form $[(i-1)/B_j,i/B_j]$ for some integer $i$. The adjective ``phase-$j$'' is omitted if it is clear from context. We say that a segment is a \defn{current segment}, if we are currently in some phase $j$ and the segment is a phase-$j$ segment. Let $K=\lceil \log^8 n\rceil$ be a parameter. We will guarantee that $B_j$ is non-increasing, and that $B_j$ is always a power of $K$, so that any phase-$(j-1)$ segment is fully contained in one of the phase-$j$ segments. The initial number of segments $B_1$ is set to be the largest power of $K$ such that $B_1\cdot K\le n$.

The definitions of $n_j,m_j$ remain the same: $n_j$ is the number of future items at the start of phase $j$, and $m_j=(n_j/B_j)-d(n_j/B_j)$ is a lower bound used in the algorithm. We also define another bound $m_j^{\text{old}}=(n_j/B_{j-1})-d(n_j/B_{j-1})$. In the algorithm, both bounds $m_j,m_j^{\text{old}}$ will be useful when we perform a merge subroutine at the start of phase $j$, where number of segments changes from $B_{j-1}$ to $B_{j}$.

For a buffer, its \defn{initial segment} is defined to be the segment that it was allocated to. We say that a buffer \defn{accepts} an item, if and only if the item is from its initial segment. At any point in time, we say that it \defn{belongs} to the (unique) current segment that contains its initial segment.

At any point in time, $c_i$ denotes the \defn{remaining capacity of current segment $i$}, and is defined to be the sum of the remaining capacities of all the buffers that currently \emph{belong} to $i$.

\subsection{The Algorithm $\algAdaptMerge$}

As before, Phase $1$ of $\algAdaptMerge$ begins by allocating $B_1$ buffers, each of size $m_1=(n_1/B_1)-d(n_1/B_1)$. 

The condition for ending a phase is the same as before: That is, when the remaining capacity of some current segment becomes $0$. When phase $j-1$ ends, we first run an adaptive allocation for every phase-$(j-1)$ segment as in $\algAdapt$, so that each phase-$(j-1)$ segment now has remaining capacity $m_j^{\text{old}}=(n_j/B_{j-1})-d(n_j/B_{j-1})$. After this, one of the following happens:
\begin{itemize}
    \item If the size of the pool is at least $B_{j-1}\cdot K$, then we do not merge segments, in which case $B_j=B_{j-1}$ (i.e., the phase-$j$ segments coincide with the phase-$(j-1)$ segments). We then directly start phase $j$.
    \item Otherwise, if the pool is smaller than $B_{j-1}\cdot K$ and $B_{j-1}>K$, we run the merge subroutine described below, so that the number of segments $B_j$ becomes $B_{j-1}/K$ (note that we must have $K|B_{j-1}$), and each phase-$j$ segment now has remaining capacity $m_j$. We then start phase $j$.
    \item If the pool is smaller than $B_{j-1}\cdot K$ and $B_{j-1}\le K$, then we let the current phase $j$ be the last phase, and start phase $j$, during which any overflown item will be inserted into the \emph{final pool}.
\end{itemize}

\paragraph{The merge subroutine.} Suppose that we are right after the adaptive allocation of some phase $j$. That is, we have $n_j$ future items and $B_{j-1}$ phase-$(j-1)$ segments, where the $i$-th segment has remaining capacity $c_i\equiv m_j^{\text{old}}$. After merging, we will have $B_{j-1}/K$ phase-$j$ segments, and the remaining capacity of each phase-$j$ segment will be $m_j$.

Currently, for each phase-$j$ segment, its remaining capacity is $Km_j^{\text{old}}$ (from the buffers that used to belong to the phase-$(j-1)$ segments). To satisfy the capacity requirement, for each phase-$j$ segment, we allocate another buffer of size $m_j-Km_j^{\text{old}}$, which is called a \defn{dampening buffer}. To distinguish between two types of buffers, the ones allocated during an adaptive allocation will be called \defn{regular buffers}. The case where $m_j-Km_j^{\text{old}}\le 0$ is handled by the failure mode (actually, this will not happen for large $n$).

Similar to the regular buffers, the dampening buffers of the same phase are also arranged in increasing order of their corresponding segments, in order to minimize the inter-subarray cost. We use $\buf_{i,j}$ to denote the \emph{regular} buffer that is allocated at the start of phase $j$ for phase-$(j-1)$ segment $i$, and use $\dbuf_{i,j}$ to denote the \emph{dampening} buffer (if it exists) that is allocated at the start of phase $j$ for phase-$j$ segment $i$.

\paragraph{Processing items.} Upon arrival, each item will be inserted in the oldest non-full buffer that \emph{accepts} the item. Note that we do not distinguish between regular and dampening buffers here. We assume that no item lies on the endpoints of any segment; this happens with probability $1$. If no non-full buffer accepts the new item and the final pool is available (i.e., we are in the last phase), then the item is inserted into the final pool. 

Finally, within the buffer (or the final pool), instead of running a recursive algorithm to place the item like in $\algAdapt$, the new item is simply inserted into the first empty cell.

We remark that this method of inserting items is in accordance with our description in the overview; Fix a phase $j$ in which we performed merging. Under our algorithm, when an item arrives, it will only be inserted into the phase-$j$ pool (which corresponds to the post-dampening pool in the overview) if its corresponding phase-$j$ dampening buffer is full.

\paragraph{Failure mode.} The failure mode here is similar to $\algAdapt$. For completeness, we repeat the criteria for entering the failure mode.

\begin{enumerate}
    \item During some phase $j$, the number of future items does not decrease fast enough: $(n_{j+1}/B_j)>(n_j/B_j)^{2/3}$.
    \item Some new buffer has nonpositive size:
    \begin{enumerate}
        \item During any adaptive allocation, the size of some new regular buffer (which is $m_j^{\text{old}}-c_i$ for the $i$-th segment) is nonpositive.
        \item During any merge subroutine, the size of some new dampening buffer (which is $m_j- Km_j^{\text{old}}$) is nonpositive.
    \end{enumerate}
    \item When we process some item $x$, there is no cell for it. That is, no non-full buffer accepts $x$, and the final pool is not available (either because we are not in the last phase yet, or because the final pool is already full).
\end{enumerate}

\begin{figure}[ht]
    \centering
    \scalebox{0.9}{
    \begin{tikzpicture}
    \def \leftspace {1};
    \def \stepheight {-2.5};
    \def \mytick {0.3};
    \def \drawtick[#1,#2]{\draw ({#1}, {#2}) -- ++ (0,\mytick);} %
    \def \drawlargetick[#1,#2]{\draw ({#1}, {#2}) -- ++ (0,1.5*\mytick);}
    \def \midpoint {\leftspace+8};
    \def \drawsegment[#1,#2,#3,#4,#5,#6,#7,#8]{ %
        \draw ({(#1)}, {(#2)}) -- ({(#1) + (#3)}, {(#2)})
         node[midway, above = 0.1] (#7) {#5}
         node[midway, below = 0.1] (#8) {#6};
        \drawlargetick[(#1),(#2)]
        \drawlargetick[(#1) + (#3),(#2)]
        \foreach \i in {1,...,#4}
        {
            \drawtick[(#1) + \i / (#4 + 1) * (#3),(#2)]
        }
    }
    
    \drawsegment[\leftspace,3*\stepheight,8,3,,,upper_final,lower_final];
    \drawsegment[\leftspace+8,3*\stepheight,8,0,,,upper_final,lower_final];
    
    \draw (\leftspace-0.5, 3*\stepheight) -- ++ (0, 0)
        node[midway, above = -0.1] (1) {$(i)$};

    \draw (\leftspace, 3*\stepheight) -- ++ (2*\leftspace, 0)
        node[midway, above = 0.57] (1) {segment $1$}
        node[midway, below = 0.1] (1) {$m_j^{\text{old}}$};

    \draw (\leftspace+2, 3*\stepheight) -- ++ (2*\leftspace, 0)
        node[midway, above = 0.57] (1) {$\dots$}
        node[midway, below = 0.4] (1) {$\dots$};

    \draw (\leftspace+8, 3*\stepheight) -- ++ (8*\leftspace, 0)
        node[midway, above = 0.5] (1) {$P$};

    \draw [->, color=red, thick] (\midpoint, 3.2 * \stepheight) -- ++ (0, 0.25 * \stepheight);

    \draw (\leftspace-0.5, 4*\stepheight) -- ++ (0, 0)
        node[midway, above = -0.1] (1) {$(ii)$};
    \drawsegment[\leftspace,4*\stepheight,8,3,,,upper_final,lower_final];
    \drawsegment[\leftspace+8,4*\stepheight,4,1,,,upper_final,lower_final];
    \drawsegment[\leftspace+12,4*\stepheight,4,0,,,upper_final,lower_final];

    \draw (\leftspace, 4*\stepheight) -- ++ (2*\leftspace, 0)
        node[midway, above = 0.57] (1) {segment $1$}
        node[midway, below = 0.1] (1) {$m_j^{\text{old}}$};

    \draw (\leftspace+2, 4*\stepheight) -- ++ (2*\leftspace, 0)
        node[midway, above = 0.57] (1) {$\dots$}
        node[midway, below = 0.4] (1) {$\dots$};

    \draw (\leftspace+8, 4*\stepheight) -- ++ (2*\leftspace, 0)
        node[midway, above = 0.5] (1) {$\dbuf_{1,j}$}
        node[midway, below = 0.1] (1) {$m_j-Km_j^{\text{old}}$};

    \draw (\leftspace+10, 4*\stepheight) -- ++ (2*\leftspace, 0)
        node[midway, above = 0.5] (1) {$\dbuf_{2,j}$}
        node[midway, below = 0.4] (1) {$\dots$};

    \draw (\leftspace+12, 4*\stepheight) -- ++ (4*\leftspace, 0)
        node[midway, above = 0.5] (1) {$P$};

    \draw [->, color=red, thick] (\midpoint, 4.4 * \stepheight) -- ++ (0, 0.25 * \stepheight);

    \node[color=red] at (\midpoint+1, 4.55 * \stepheight) {logically};

    \draw (\leftspace-0.5, 5.7*\stepheight) -- ++ (0, 0)
        node[midway, above = -0.1] (1) {$(iii)$};
    
    \drawsegment[\leftspace,5.7*\stepheight,12,5,,,upper_final,lower_final];
    \drawsegment[\leftspace+12,5.7*\stepheight,4,0,,,upper_final,lower_final];

    \draw (\leftspace, 5.7*\stepheight) -- ++ (2*\leftspace, 0)
        node[midway, above = 0.57] (1) {segment $1$}
        node[midway, below = 0.1] (1) {$m_j^{\text{old}}$};

    \draw (\leftspace+2, 5.7*\stepheight) -- ++ (2*\leftspace, 0)
        node[midway, above = 0.57] (1) {segment $2$}
        node[midway, below = 0.1] (1) {$m_j^{\text{old}}$};
    \draw (\leftspace+8, 5.7*\stepheight) -- ++ (2*\leftspace, 0)
        node[midway, above = 0.57] (1) {$\cdots$};
    \draw (\leftspace+4, 5.7*\stepheight) -- ++ (2*\leftspace, 0)
        node[midway, above = 0.5] (1) {$\dbuf_{1,j}$}
        node[midway, below = 0.1] (1) {$m_j-Km_j^{\text{old}}$};

    \draw [decorate,decoration= {brace,amplitude=5pt}]
    (\leftspace, 5.2*\stepheight) -- ++ (6*\leftspace, 0) node[midway,above = 0.4]{phase-$j$ segment $1$};

    \draw (\leftspace+12, 5.7*\stepheight) -- ++ (4*\leftspace, 0)
        node[midway, above = 0.5] (1) {$P$};

    \draw [decorate,decoration= {brace,amplitude=5pt}]
    (\leftspace+6, 5.2*\stepheight) -- ++ (6*\leftspace, 0) node[midway,above = 0.4]{phase-$j$ segment $2$};

\end{tikzpicture}}
    \caption{An illustration of the merge subroutine with $B_{j-1}=4$ and $K=2$. (i) Right after the re-allocation subroutine, each phase-$(j-1)$ segment has remaining capacity $m_j^{\text{old}}$. (ii) In the merge subroutine, a dampening buffer is allocated for each phase-$j$ segment. (iii) Logically, the previous buffers of a phase-$j$ segment and the dampening buffer of that segment are functionally equivalent to one single buffer.}
\end{figure}

\subsection{The Analysis}

\paragraph{Proof of correctness.} We start by analyzing the behavior of dampening buffers. We have the following lemma, which is consistent with the intuitions in the technical overview. In particular, \cref{lem:warm_up_2_damp_order} implies that right after a merge subroutine, among all the buffers belonging to a current segment, the dampening buffer is the last to be full (with high probability).

\begin{restatable}{lemma}{dampOrder}
    \label{lem:warm_up_2_damp_order}
    We say that buffer $\buf_1$ \defn{precedes} buffer $\buf_2$, if $\buf_1$ is allocated before $\buf_2$, and the initial segment of $\buf_1$ is contained in that of $\buf_2$. For the algorithm $\algAdaptMerge$, with probability $1-O(1/n^{10})$, the following holds: When a buffer becomes full, all the buffers that precede it are already full.
\end{restatable}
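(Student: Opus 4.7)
The plan is to combine a structural observation with Chernoff concentration, and then apply a union bound over all pairs. The key structural observation is: if $\buf_1 \prec \buf_2$ and $\buf_1$ is not yet full at some time $t$, then no item in $\buf_1$'s initial segment $S_1$ can be placed into $\buf_2$ by time $t$. Indeed, $\buf_1$ is older than $\buf_2$ and accepts every $S_1$-item, so the rule ``insert into the oldest non-full accepting buffer'' sends any such item to $\buf_1$ (or some even older non-full buffer), never to $\buf_2$. Thus, while $\buf_1$ remains non-full, every item absorbed by $\buf_2$ lies in $S_2\setminus S_1$.

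Next, I would introduce a high-probability concentration event $\mathcal{E}$: for every phase-level, every current segment $S$ at that level, and every prefix of the input of length $N$, the number of items from $S$ deviates from $|S|\cdot N$ by at most $O(\sqrt{N\log n})$. A standard Chernoff plus union bound over the $\poly(n)$ relevant segments and prefix lengths yields $\Pr[\mathcal{E}] \ge 1 - O(1/n^{11})$. Conditioning on $\mathcal{E}$, I would argue deterministically by taking $\buf_2$ to be the earliest violator---i.e.\ the first buffer to become full while some $\buf_1 \prec \buf_2$ is non-full---and deriving a contradiction. At the fill-time $t$ of $\buf_2$, all $|\buf_2|$ items it absorbed lie in $S_2\setminus S_1$. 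Under $\mathcal{E}$, the arrival counts from each $|S_1|$-sized sub-segment of $S_2$ (during the relevant window ending at $t$) are tightly clustered around a common mean, so $\buf_1$ being non-full forces that common mean to fall below a specific threshold, capping the total spillover from each of the $|S_2|/|S_1|-1$ sibling sub-segments. A direct calculation using $d(x)=\Theta(x^{2/3})$ shows that $|\buf_2|$ exceeds this cap once $m_j^{\text{old}}$ is sufficiently large (specifically $m_j^{\text{old}} \gg \log^3 n$), which is guaranteed by the algorithm's invariants; this yields the desired contradiction.

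The main obstacle I anticipate is the cross-phase case, where $\buf_1$ and $\buf_2$ span several phases of the algorithm (possibly mixing regular and dampening types), so that intermediate buffers also absorb items from $S_1$ and from $S_2\setminus S_1$. Here the earliest-violator hypothesis does most of the work: every buffer that filled strictly before $\buf_2$ has all of its predecessors full at its own fill-time, so the intermediate buffers contribute exactly their capacities and the arrival-count identity collapses to the same ratio-based comparison used in the single-phase case. A final union bound over the $O(n^2)$ candidate pairs $\buf_1 \prec \buf_2$ upgrades the per-pair failure probability to the stated $O(1/n^{10})$.
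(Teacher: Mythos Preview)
Your structural observation is correct and does real work: if $\buf_1 \prec \buf_2$ and $\buf_1$ is non-full, then every $S_1$-item is intercepted by $\buf_1$ or something older, so the contents of $\buf_2$ lie entirely in $S_2\setminus S_1$. The earliest-violator framing is also sound, and is essentially the contrapositive of the paper's forward induction over buffers. The paper separates the regular case (where the youngest predecessor has the \emph{same} segment as $\buf_2$, so $\buf_2$ stays empty until that predecessor fills) from the dampening case (where one counts, via \cref{lem:merger_concentration}, that among the first $m_j$ arrivals from $S_2$ each of the $K$ sub-segments already contributes at least $m_j^{\text{old}}$ items, so all $K$ predecessor groups fill before $\buf_2$ can).

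The genuine gap is your concentration event $\mathcal{E}$. You quantify over \emph{prefixes of the whole input} with deviation $O(\sqrt{N\log n})$, where $N$ is the prefix length. To extract the count on the window starting at the allocation of $\buf_2$, you must subtract two prefix counts whose lengths can each be $\Theta(n)$ regardless of how short the window is; the resulting error in $N_{k'}-N_k$ is $O(\sqrt{n\log n})$. But the quantity you need to beat, namely $|\buf_2|/(K-1)=\Theta\bigl((m_j/K)^{2/3}\bigr)$, is only $\poly\log n$ in late phases, so the comparison fails badly. The fix---and this is what the paper does in \cref{prop:warm_up_2_concentration}---is to state the concentration \emph{relative to the stream of arrivals from $S_2$}: for every suffix of the input and every $m\ge K\log^4 n$, among the first $m$ items landing in $S_2$, each sub-segment $S_k$ receives $m/K\pm\tfrac{1}{20}\,d(m/K)$ of them. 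This localizes the error to the scale $(m_j/K)^{2/3}$, after which your counting goes through. Two smaller points: you should make explicit the reduction that the non-full predecessor may be taken to be the \emph{youngest} buffer in its phase-$(j{-}1)$ group (apply the earliest-violator hypothesis to that group's youngest buffer; otherwise $|S_2|/|S_1|$ can be enormous and the sibling cap is vacuous), and the final union bound over $O(n^2)$ pairs is unnecessary---once the correct concentration event holds, your argument is deterministic for all pairs simultaneously.
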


The formal proof of \cref{lem:warm_up_2_damp_order} is deferred to \cref{app:warm_up_2}. We will be needing the full power of \cref{lem:warm_up_2_damp_order} in \cref{sec:final}, where we study the distribution of insertions to the buffers. For now, the importance of \cref{lem:warm_up_2_damp_order} is that, it implies the following corollary. \cref{cor:warm_up_2_buffer_property} essentially says that, at any point during the execution of $\algAdaptMerge$, for each current segment, we can think of the set of buffers that belong to it as functionally equivalent to one big buffer. This allows us to analyze buffers in the same way as in $\algAdapt$.

\begin{corollary}
    \label{cor:warm_up_2_buffer_property}
    With probability $1-O(1/n^{10})$, the following property holds for $\algAdaptMerge$: When a new item $x$ from current segment $i$ arrives, as long as the remaining capacity $c_i$ is positive, $x$ will be inserted into a buffer belonging to $i$.
\end{corollary}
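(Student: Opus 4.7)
The plan is to condition on the high-probability event of \cref{lem:warm_up_2_damp_order} and then argue \emph{deterministically} that, whenever an item $x$ from current segment $i$ arrives with $c_i>0$, some non-full buffer accepts $x$. Once this is established, the algorithm's insertion rule (pick the oldest non-full accepting buffer) delivers $x$ into a buffer whose initial segment contains $x$, which is therefore contained in, and so belongs to, current segment $i$.

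Since $c_i>0$, pick any non-full buffer $\buf$ currently belonging to segment $i$, and let $T$ be its initial segment and $\l_0$ its allocation phase. If $x\in T$ we are done, so suppose $x\notin T$. Writing $S_\l(x)$ for the unique phase-$\l$ segment containing $x$, define $\l^*$ to be the smallest $\l$ with $T\subseteq S_\l(x)$; such an $\l^*$ exists and is at most the current phase $J$ because $T\subseteq S_J(x)$ is the current segment $i$. A short case analysis, using that $T$ is a phase-$\l_0$ segment if $\buf$ is dampening and a phase-$(\l_0{-}1)$ segment if $\buf$ is regular, shows that if $\l^*$ were at most the phase of $T$, then $T$ would have to equal $S_{\l^*}(x)$, contradicting $x\notin T$. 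Consequently the segmentation must change between phase $\l^*-1$ and phase $\l^*$, i.e., the merge subroutine ran at phase $\l^*$, allocating a dampening buffer $\dbuf^*$ whose initial segment is precisely $S_{\l^*}(x)$.

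I then verify that $\buf$ precedes $\dbuf^*$ in the sense of \cref{lem:warm_up_2_damp_order}. The containment $T\subseteq S_{\l^*}(x)$ is exactly the definition of $\l^*$, and $\buf$ is allocated no later than $\dbuf^*$ because either $\l_0<\l^*$, or $\l_0=\l^*$, in which case the case check above forces $\buf$ to be a regular buffer and the algorithm allocates regular buffers of a phase before performing that phase's merge step. Applying the contrapositive of \cref{lem:warm_up_2_damp_order}, together with the fact that buffers never release items once inserted, the non-fullness of $\buf$ implies the non-fullness of $\dbuf^*$. Since $x\in S_{\l^*}(x)$, $\dbuf^*$ accepts $x$, completing the deterministic argument; the failure probability $O(1/n^{10})$ is inherited from \cref{lem:warm_up_2_damp_order}.

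The main obstacle is the bookkeeping around segment hierarchies and allocation timing: correctly pinning down $\l^*$, confirming that a merge actually occurs at phase $\l^*$ rather than just a renumbering, and verifying the allocation order of $\buf$ versus $\dbuf^*$ in every subcase (in particular when $\l_0=\l^*$). There is no additional probabilistic content beyond \cref{lem:warm_up_2_damp_order}, so once the combinatorial structure is set up, the proof is a direct application of its contrapositive.
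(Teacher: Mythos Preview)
Your argument is correct, but it takes a noticeably longer route than the paper's. The paper's proof hinges on a single structural observation that you do not use: the \emph{youngest} buffer currently belonging to segment $i$ always has initial segment exactly $i$ (because the most recent allocation at the start of the current phase $J$ --- either a regular buffer when $B_J=B_{J-1}$, or a dampening buffer when a merge just occurred --- is indexed by phase-$J$ segments). Once you know this, the corollary is immediate: every other buffer belonging to $i$ precedes this youngest buffer, so by \cref{lem:warm_up_2_damp_order} the youngest cannot be full while $c_i>0$, and since its initial segment is all of $i$ it certainly accepts $x$.

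Your approach instead starts from an \emph{arbitrary} non-full buffer $\buf$ belonging to $i$ and, when its initial segment $T$ does not contain $x$, climbs the segment hierarchy to locate a dampening buffer $\dbuf^*$ (at the first phase $\ell^*$ where $T$ and $x$ land in the same segment) that does accept $x$; you then use the contrapositive of \cref{lem:warm_up_2_damp_order} to transfer non-fullness from $\buf$ to $\dbuf^*$. This works, and the case analysis on $\ell_0$ versus $\ell^*$ and on regular versus dampening is sound (including the edge case $\ell^*=1$, which is vacuous since any initial segment has length at least $1/B_1$). The tradeoff is that you effectively \emph{reconstruct} the paper's observation through the hierarchy chase: the $\dbuf^*$ you find is in fact one of the buffers on the path to the youngest one. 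The paper's shortcut avoids all of the bookkeeping around $\ell^*$, the merge-detection step, and the allocation-order subcases.
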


\begin{proof}[Proof of \cref{cor:warm_up_2_buffer_property} using \cref{lem:warm_up_2_damp_order}]
    When $x$ arrives, consider the current youngest buffer $\buf$ belonging to $i$. By the execution of our algorithm, the initial segment of $\buf$ must be segment $i$. Therefore, all other buffers belonging to $i$ precede $\buf$. If $c_i>0$, then we can apply \cref{lem:warm_up_2_damp_order} and conclude that, $\buf$ cannot be full at this moment. Therefore, even if all the previous non-full buffers don't accept $x$, $\buf$ definitely will. That is, $x$ will be inserted into a buffer.
\end{proof}

It is then easy to prove the correctness of $\algAdaptMerge$, by using an argument similar to that of $\algAdapt$. Here, we present an informal proof: Perform induction on the phases. For any phase $j$, we start at a state similar to a phase in $\algAdapt$, that is, we have $B_j$ segments, and each segment has remaining capacity $m_j$. Given \cref{cor:warm_up_2_buffer_property}, we have that the third type of failure will not happen (with high probability) until the last phase. Therefore, the condition of $\algAdaptMerge$ failing during some phase $j$ is similar to that in $\algAdapt$, and can be checked by analyzing the number of items from each phase-$j$ segment that arrive during phase $j$. This analysis is very similar to \cref{clm:warm_up_phase_concentration}. We can thus show that

\begin{restatable}{lemma}{AdaptMergeSuccess}
    \label{lem:warm_up_2_regularity}
    The probability that $\algAdaptMerge$ fails is $O(1/n^{10})$.
\end{restatable}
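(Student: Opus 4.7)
The plan is to imitate the proof of \cref{lem:warm_up_regularity}: proceed by induction on phases, apply \cref{clm:warm_up_phase_concentration} to each phase, and check that none of the three failure types is triggered. What is genuinely new relative to $\algAdapt$ is (a) the dampening buffer (failure type 2(b)) and (b) the need to use \cref{cor:warm_up_2_buffer_property} to rule out the third failure type, since a current segment now collects buffers across potentially many phases. I would start by conditioning on the high-probability event of \cref{cor:warm_up_2_buffer_property}, which guarantees that whenever an item from a current segment $i$ arrives with remaining capacity $c_i>0$, it lands in some buffer belonging to $i$; this reduces the third failure type to a final-pool overflow event in the last phase, exactly as in the $\algAdapt$ analysis.

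Next I would fix a phase $j$, assume inductively that no earlier failure has occurred, and observe that at the start of phase $j$ every current segment has remaining capacity exactly $m_j$ and there are $n_j$ items left to arrive. This is true in both branches: when no merge happens ($B_j=B_{j-1}$, so $m_j=m_j^{\text{old}}$), the adaptive allocation alone restores capacity to $m_j$; and when a merge happens, the $K$ older buffers contribute $Km_j^{\text{old}}$ and the dampening buffer contributes the remaining $m_j-Km_j^{\text{old}}$. The algorithm's continuation rules ensure $n_j\ge B_j\log^5 n$, so \cref{clm:warm_up_phase_concentration} applies to the $n_j$ future items viewed as balls thrown into the $B_j$ current segments. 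Failure types 1, 2(a), and 3 are then essentially verbatim ports of the case analysis in \cref{lem:warm_up_regularity}: Property (ii) of the claim bounds the end-of-phase remaining capacity per current segment by $(1/5)\cdot d(n_j/B_j)$, which yields both $n_{j+1}/B_j\le (n_j/B_j)^{2/3}$ and $m_{j+1}^{\text{old}}>c_i$, while Property (i) rules out an overflow of the final pool in the last phase.

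The one genuinely new check is failure type 2(b), which asks that $m_{j+1}> Km_{j+1}^{\text{old}}$ whenever a merge is performed between phases $j$ and $j+1$. A direct computation using $B_{j+1}=B_j/K$ gives
\[
    m_{j+1}-Km_{j+1}^{\text{old}}=K\cdot d(n_{j+1}/B_j)-d(n_{j+1}/B_{j+1}),
\]
and substituting $d(x)=\Theta(x^{2/3})$ makes the right-hand side of order $(K^{1/3}-1)\cdot d(n_{j+1}/B_{j+1})$, which is strictly positive for $K=\lceil\log^8 n\rceil$ and sufficiently large $n$. I expect this dampening-buffer calculation, together with the companion sanity check that the pre-merge pool actually contains enough cells to carve out all $B_{j+1}$ dampening buffers (which follows from the threshold $B_{j-1}\cdot K$ that triggers merging), to be the main obstacle, since it is the only step not directly inherited from the $\algAdapt$ analysis. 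Once all four checks are in place, a union bound over the $O(\log\log n)$ phases multiplies the per-phase failure probability of $O(1/n^{11})$ up to $O(1/n^{10})$, and combining this with the $O(1/n^{10})$ event of \cref{cor:warm_up_2_buffer_property} yields the desired overall bound.
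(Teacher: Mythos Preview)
Your approach is essentially the same as the paper's: condition on the high-probability event behind \cref{cor:warm_up_2_buffer_property}, induct on phases, port the three-case analysis from \cref{lem:warm_up_regularity}, and handle failure~2(b) separately via the identity $m_{j+1}-Km_{j+1}^{\text{old}}=K\,d(n_{j+1}/B_j)-d(n_{j+1}/B_{j+1})>0$, which is exactly the content of the first bullet of \cref{lem:merger_concentration}.

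Two small slips to fix. First, the number of phases in $\algAdaptMerge$ is $O(\log n/\log\log n)$ (see \cref{lem:warm_up_2_decrement}), not $O(\log\log n)$ as in $\algAdapt$; your union bound still gives $O(1/n^{10})$, but the stated count is wrong. Second, your justification that ``continuation rules ensure $n_j\ge B_j\log^5 n$'' is not quite the right hook: the relevant fact is $n_j/B_j\ge K\ge\log^8 n$, which one proves inductively from the algorithm's structure as in the first bullet of \cref{lem:warm_up_2_decrement} (this argument does not actually require success). Also, the ``sanity check'' that the pool can accommodate all dampening buffers does not follow from the threshold $B_{j-1}\cdot K$ (that threshold is an \emph{upper} bound on the post-allocation pool, not a lower bound); it holds simply because the post-merge pool size is $n_j-B_j m_j\ge 0$ by definition of $m_j$.
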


The formal proof of \cref{lem:warm_up_2_regularity} is also deferred to \cref{app:warm_up_2}.

\paragraph{Proof of efficiency.} Given that $\algAdaptMerge$ succeeds, we have the following properties.

\begin{lemma}
    \label{lem:warm_up_2_decrement}
    For sufficiently large $n$, conditioned on $\algAdaptMerge$ succeeding, for any phase $j$, $n_j/B_j$ is in the range $[K,10K^{2.5}]$. Moreover, the number of phases is $O(\log n/\log\log n)$.
\end{lemma}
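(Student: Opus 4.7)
The plan is to prove both claims by induction on the phase index $j$, using the first claim (the range $n_j/B_j \in [K, O(K^{2.5})]$) as the main engine from which the bound on the number of phases will fall out. The decisive quantity to track is the pool size right after adaptive allocation at the end of phase $j$, which equals $B_j \cdot d(n_{j+1}/B_j) = \Theta(B_j \cdot (n_{j+1}/B_j)^{2/3})$: the algorithm merges precisely when this quantity falls below $B_j \cdot K$, i.e.\ when $n_{j+1}/B_j = O(K^{3/2})$. Together with the success condition $n_{j+1}/B_j \le (n_j/B_j)^{2/3}$, this threshold essentially determines the whole analysis.

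For the base case, the construction of $B_1$ as the largest power of $K$ with $B_1 \cdot K \le n$ immediately gives $n_1/B_1 \in [K, K^2)$. For the inductive step I will split into the two transition cases. In the no-merge case ($B_{j+1} = B_j$), the non-merge condition gives $n_{j+1}/B_{j+1} \ge \Omega(K^{3/2}) \ge K$, while the success condition combined with the inductive upper bound gives $n_{j+1}/B_{j+1} \le (n_j/B_j)^{2/3} = O(K^{5/3}) \le O(K^{2.5})$. In the merge case ($B_{j+1} = B_j/K$) the merge threshold gives the upper bound $n_{j+1}/B_{j+1} = K \cdot n_{j+1}/B_j = O(K^{5/2})$ for free.

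The hard part will be the lower bound in the merge case. To handle it I will use the inductive lower bound $n_j/B_j \ge K$ to get $m_j = (n_j/B_j) - d(n_j/B_j) = (1 - o(1)) \cdot n_j/B_j = \Omega(K)$, and then invoke the Chernoff-concentration underlying \cref{lem:warm_up_2_regularity} (the same kind of bound as \cref{clm:warm_up_phase_concentration}, now applied to the $B_j$ phase-$j$ segments) to conclude that when phase $j$ ends, each surviving segment still has remaining capacity at least $m_j - \tfrac{1}{5} d(n_j/B_j) = \Omega(K)$. Summing across the $B_j - 1$ surviving segments, $n_{j+1} \ge \Omega(B_j \cdot K)$, so after the merge $n_{j+1}/B_{j+1} = K \cdot n_{j+1}/B_j \ge \Omega(K^2) \ge K$.

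Once the range of $n_j/B_j$ is established, the bound on the number of phases is short. Every merge divides $B_j$ by $K$, and since $1 \le B_j \le B_1 \le n/K$, there can be at most $\log_K(n/K) = O(\log n / \log\log n)$ merge phases. Between consecutive merges $B_j$ stays fixed, so $n_j/B_j$ contracts in the exponent: starting from $O(K^{5/2})$ right after a merge, one no-merge step brings it down to $O(K^{5/3})$ and a second to $O(K^{10/9}) < O(K^{3/2})$, which forces the next transition to be either a merge or the final phase. Hence there are only $O(1)$ no-merge phases between consecutive merges, for a total of $O(\log n / \log\log n)$ phases.
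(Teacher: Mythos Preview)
Your overall plan is sound, and the no-merge case together with the upper bound in the merge case are handled correctly. The argument for the number of phases (count the merges, then show $O(1)$ non-merge phases between consecutive merges) is different from the paper's---the paper instead observes directly that $n_{j+1} \le n_j/(n_j/B_j)^{1/3} \le n_j/K^{1/3}$, giving $O(\log_K n)$ phases in one line---but your route is valid too.

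There is, however, a genuine error in your lower-bound argument for the merge case. You write that the concentration bound (the analogue of \cref{clm:warm_up_phase_concentration}) implies that at the end of phase~$j$ ``each surviving segment still has remaining capacity at least $m_j - \tfrac{1}{5} d(n_j/B_j)$.'' This is backwards: \cref{clm:warm_up_phase_concentration} says each bin has \emph{load} at least $m_j - \tfrac{1}{5} d(n_j/B_j)$, hence \emph{remaining capacity at most} $\tfrac{1}{5} d(n_j/B_j)$, a small quantity. Summing that upper bound gives an upper bound on $n_{j+1}$, not the lower bound you need; so the chain $n_{j+1} \ge \Omega(B_j K)$ does not follow.

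The fix is simpler than what you attempted and requires no concentration at all. At the end of phase~$j$ the pool from phase~$j$ is entirely empty, so
\[
n_{j+1} \;\ge\; \text{(pool size during phase $j$)} \;=\; n_j - B_j m_j \;=\; B_j\, d(n_j/B_j) \;=\; \Omega\bigl(B_j\cdot (n_j/B_j)^{2/3}\bigr) \;=\; \omega(B_j),
\]
using only the inductive hypothesis $n_j/B_j \ge K \to \infty$. Hence $n_{j+1}/B_j \ge 1$, and after the merge $n_{j+1}/B_{j+1} = K\cdot n_{j+1}/B_j \ge K$. This is exactly the paper's argument; note that it sidesteps the concentration bound entirely for this direction.
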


Note that the sizes of the buffers allocated at phase $j$ are at most $m_j$, which is at most $n_j/B_j$. Thus \cref{lem:warm_up_2_decrement} implies that, all the buffers in $\algAdaptMerge$ have size $\poly\log n$. This is very different from $\algAdapt$, in which the buffer sizes could be as large as $n^{\Omega(1)}$.

The behavior of buffer sizes here is very different from $\algAdapt$. In $\algAdapt$, the sizes of buffers decrease rapidly, where the sizes of the phase-$j$ buffers are at most $(n/B)^{(2/3)^{j-1}}$. In comparison, here we keep the buffer sizes small throughout the algorithm, just barely large enough for applying the Chernoff bound.

\begin{proof}
    We prove the results one by one.
    \begin{itemize}
        \item We prove $n_j/B_j\ge K$ by induction: In phase $1$, this holds by definition of $B_1$. In phase $j>1$, since $n_{j-1}/B_{j-1}\ge K$, we have that the pool size during phase $j-1$ is $n_{j-1}-B_{j-1}\cdot m_{j-1}=\Omega(B_{j-1}\cdot (n_{j-1}/B_{j-1})^{2/3})=\omega(B_{j-1})$. Note that $n_j$ must be at least the size of this pool, which means that $n_j/B_{j-1}\ge 1$. If $n_j/B_{j-1}\ge K$, then we are done because $B_j\le B_{j-1}$. Otherwise, if $n_j/B_{j-1}<K$, then we will perform a merge subroutine at phase $j$, which means that $B_j=B_{j-1}/K$. Since $n_j/B_{j-1}\ge 1$, we have that $n_j/B_j\ge K$.
        \item To show that $n_j/B_j=O(K^{2.5})$ for any phase $j$, we let $j$ be the phase that maximizes $n_j/B_j$. If $j=1$, then by our definition of $B_1$ (i.e., $B_1$ is the largest power of $K$ such that $B_1\cdot K\le n$), we have that $B_1\cdot K^2>n$, so the bound holds.
    
        If $j>1$, then we must have $n_j/B_j>n_{j-1}/B_{j-1}$. This happens only when the merge subroutine is called, and $B_j$ is set to be $B_{j-1}/K$. By the condition for starting a merge subroutine, the size of the pool (after the adaptive allocation of phase $j$) is smaller than $B_{j-1}\cdot K$. On the other hand, the size of the pool is equal to $n_{j}-B_{j-1}\cdot m_{j}^{\text{old}}$ by definition, which, by expanding the definition of $m_{j}^{\text{old}}$, is at least $B_{j-1}\cdot d(n_{j}/B_{j-1})\ge B_{j-1}\cdot ((1/2)(n_j/^{B_{j-1}})^{2/3}-1)$. 
        
        Comparing the two above bounds for the size of the pool, we have that $(n_{j}/B_{j-1})^{2/3}\le 2K+2\le 3K$ ($n$ is sufficiently large). Since $B_j=B_{j-1}/K$, we have that $(n_j/B_j)=(n_j/B_{j-1})\cdot K\le 10K^{2.5}$.
        \item As for the number of phases, we use the fact that the first type of failure does not happen, i.e., 
        \begin{align*}
            n_j/B_{j-1}\le (n_{j-1}/B_{j-1})^{2/3}
        \end{align*}
        for any $j>1$. By rearranging the terms, we have that
        \begin{align*}
            n_j\le n_{j-1} /(n_{j-1}/B_{j-1})^{1/3}.
        \end{align*}
        Combining this with $n_{j-1}/B_{j-1}\ge K$ (the first result of this lemma) implies that $n_j\le n_{j-1}/K^{1/3}$. Since $K=\Theta(\log^8 n)$, there can only be $O(\log n/\log\log n)$ phases.
    \end{itemize}

\end{proof}

We now prove \cref{thm:warm_up_2_high_prob}.

\begin{proof}[Proof of \cref{thm:warm_up_2_high_prob}]
    Conditioned on $\algAdaptMerge$ succeeding, its cost can be analyzed as follows:

    \begin{itemize}
        \item \textbf{The cost of the buffers:} Fix a phase $j$, and consider all the regular buffers allocated at the start of phase $j$. The cost of each buffer is at most
        \begin{align*}
            (\text{buffer length})/B_{j-1}
        \end{align*}
        in the worst case. Since there are $B_{j-1}$ such buffers, their total cost is bounded by their maximum length, which is $O(\log^{20}n)$. A similar analysis works for the dampening buffers of phase $j$. Since there are $O(\log n/\log\log n)$ phases, the total cost of all buffers is $O(\log^{21}n/\log\log n)$.
        \item \textbf{The cost of the final pool:} This is at most the size of the final pool, which by definition is $O(K^{2.5})=O(\log^{20} n)$.
        \item \textbf{The inter-subarray cost:} Similar to the regular buffers, we have that the regular buffers of each phase are arranged in increasing order of their segment. The same holds for the dampening buffers. Therefore, the inter-subarray cost of the buffers of the same phase is $O(1)$, and the total inter-subarray cost is $O(\log n/\log\log n)$.
    \end{itemize}

    Therefore, the competitive ratio of $\algAdaptMerge$ is $\poly\log n$ with high probability.
\end{proof}

\section{The Final Algorithm: Recursing on Non-Uniform Subproblems}
\label{sec:final}

\mainTheorem*

In this section, we further improve upon $\algAdaptMerge$, by running it recursively on the buffers and the final pool. This recursion is much harder to analyze than $\algAdapt$, because here the items inserted into a buffer are not uniformly randomly distributed.

In order to run $\algAdaptMerge$ recursively, we first need to analyze the distribution of insertions to each buffer. It turns out that, the regular buffers receive (approximately) uniformly random items sampled from its initial segment, which means that they are similar to the buffers in $\algAdapt$, and we can run $\algAdaptMerge$ recursively on them. 

As for the dampening buffers, we show that the distribution of insertions to them are similar to that of a pool, i.e., the distribution is close to a multi-way pool distribution with $K$ segments. We then run $\algAdaptMerge$ recursively on a dampening buffer, by thinking of it as a \emph{pool}. That is, we use the knowledge about the buffers preceding a dampening buffer, to adaptively allocate new buffers within the dampening buffer.

\subsection{Regular Buffers Receive Random Items}

Given \cref{lem:warm_up_2_damp_order}, the distribution of insertions to each regular buffer can be shown to be (almost) random. \cref{lem:warm_up_2_damp_order} implies that, after allocating a regular buffer $\buf_{i,j}$, the set of items inserted to it (with high probability) only depends on whether each future item is from segment $i$, and not on the actual values of these items. Therefore, the items inserted into $\buf_{i,j}$ should be uniformly random.

\begin{lemma}
    \label{lem:dist_reg_buffer}
    Fix $1\le l\le n$. For $\algAdaptMerge$, with probability $1-O(1/n^{5})$ over a random prefix of the input $x_1,\dots,x_l$ of length $l$, the following holds: For any regular buffer $\buf_{i,j}$ that was allocated right after receiving $x_l$, the distribution of items inserted to $\buf_{i,j}$ (over the randomness of the future items $x_{l+1},\dots,x_n$) is $O(1/n^{5})$-close to being uniformly randomly distributed in segment $i$.
\end{lemma}

We say that two distributions are $\alpha$-close, if their statistical distance is at most $\alpha$.

\begin{proof}
    We say that a prefix $x_1,\dots,x_l$ of input is \defn{good}, if after we complete the prefix with random items $x_{l+1},\dots,x_n$, the resulting input sequence satisfies the properties in both \cref{lem:warm_up_2_damp_order} and \cref{lem:warm_up_2_regularity} with probability $1-O(1/n^{5})$ over the randomness of $x_{l+1},\dots,x_n$.

    We know that a random input $x_1,\dots,x_n$ satisfies the properties in both \cref{lem:warm_up_2_damp_order} and \cref{lem:warm_up_2_regularity} with probability $1-O(1/n^{10})$. Using Markov's inequality, for a fixed $l$, the probability that a prefix $x_1,\dots,x_l$ is good is $1-O(1/n^{5})$.
    
    We now show that \cref{lem:dist_reg_buffer} holds deterministically for any good prefix. Fix a regular buffer $\buf_{i,j}$ that is allocated right after receiving $x_l$, whose initial segment is phase-$(j-1)$ segment $i$. After the allocation of $\buf_{i,j}$, there are $n_{j}=n-l$ items in the future, and the remaining capacity of segment $i$ (which is equal to $|\buf_{i,j}|$ plus the remaining capacities of the buffers that precede $\buf_{i,j}$) is $m_{j}^{\text{old}}$.
    
    Since $x_1,\dots,x_l$ is good, we have that: With probability $1-O(1/n^{5})$ over the future $n_{j}=n-l$ items (which are randomly distributed), the first $m_{j}^{\text{old}}-|\buf_{i,j}|$ future items from segment $i$ would be inserted into buffers that precede $\buf_{i,j}$, which would fill all those buffers. 
    
    Conditioned on this, the set of items inserted to $\buf_{i,j}$ would exactly be the $(m_{j}^{\text{old}}-|\buf_{i,j}|+1)$-th to the $m_{j}^{\text{old}}$-th future items that take value from segment $i$. These items are (approximately) uniformly randomly distributed in segment $i$.
\end{proof}

\subsection{Dampening Buffers are Similar to Pools}

We now prove a similar lemma for the dampening buffers. Intuitively, dampening buffers behave like pools. For a dampening buffer $\dbuf_{i,j}$ When a new item $x$ from segment $i$ comes, it first attempts to enter any buffer that precedes $\dbuf_{i,j}$, and used to belong to the same phase-$(j-1)$ segment as $x$. Only when those buffers are full will the new item enter $\dbuf_{i,j}$. That is, $\dbuf_{i,j}$ behaves like a ``pool'' for the $K$ phase-$(j-1)$ segments. We now formalize this intuition.

\begin{lemma}
    \label{lem:dist_merger_buffer}
    Fix $1\le l\le n$. For $\algAdaptMerge$, with probability $1-O(1/n^{5})$ over a random prefix $x_1,\dots,x_l$, the following holds: For any dampening buffer $\dbuf_{i,j}$ that is allocated right after receiving $x_l$, suppose that after the allocation of $\dbuf_{i,j}$. Then with probability $1-O(1/n^{5})$, the insertions to $\dbuf_{i,j}$ are as follows: An item $x$ is inserted into $\dbuf_{i,j}$, if and only if $x$ is from segment $i$, and between the allocation of $\dbuf_{i,j}$ and the arrival of $x$, we have seen at least $m_j^{\text{old}}$ items from the same phase-$(j-1)$ segment as $x$. 
\end{lemma}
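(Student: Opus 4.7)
The plan is to mirror the argument used for \cref{lem:dist_reg_buffer}. First I would reuse \cref{prop:nicely_filled}: by Markov's inequality applied at each starting index of a suffix replacement, followed by a union bound over those indices, a random input $x_1,\dots,x_n$ satisfies \cref{prop:nicely_filled} with probability $1-O(1/n^4)$. So I fix such a nice input $x_1,\dots,x_n$ and a dampening buffer $\dbuf_{i,j}$, replace the suffix after the allocation of $\dbuf_{i,j}$ by a fresh uniformly random sequence, and condition on the event (of probability $1-O(1/n^5)$) that the precedence conclusion of \cref{lem:warm_up_2_damp_order} holds for the resulting input.

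Under this event I would establish the desired ``iff'' characterization deterministically. Let $x$ be any item arriving after the allocation, and let $S_\ell$ denote the phase-$(j-1)$ sub-segment containing $x$. The algorithm routes $x$ into the oldest non-full buffer whose initial segment contains $x$. Every buffer older than $\dbuf_{i,j}$ whose initial segment contains $S_\ell$ \emph{precedes} $\dbuf_{i,j}$ (its initial segment is contained in phase-$j$ segment $i$), so by \cref{lem:warm_up_2_damp_order} all such buffers must be full before $\dbuf_{i,j}$ receives any item from $S_\ell$. Conversely, while any preceding buffer is non-full, $x$ is diverted into it rather than into $\dbuf_{i,j}$. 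Hence $x$ enters $\dbuf_{i,j}$ exactly when $x$ is from phase-$j$ segment $i$ and all preceding buffers accepting $x$ are already full.

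The key accounting step is to observe that, by the adaptive allocation executed at the start of phase $j$ immediately before the merge, the total remaining capacity of the preceding buffers accepting items from $S_\ell$ is exactly $m_j^{\text{old}}$ at the moment $\dbuf_{i,j}$ is allocated. These buffers therefore become jointly full precisely when the $m_j^{\text{old}}$-th item from $S_\ell$ has arrived after the allocation of $\dbuf_{i,j}$, matching the condition in the statement. Summing the failure probability of \cref{prop:nicely_filled} (over the input) with that of \cref{lem:warm_up_2_damp_order} (over the replaced suffix) yields the two bounds $1-O(1/n^4)$ and $1-O(1/n^5)$ claimed in the lemma.

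The main obstacle is the careful bookkeeping of which buffers count as ``preceding'' with respect to each sub-segment $S_\ell$: one must verify that the only buffers older than $\dbuf_{i,j}$ that accept items from $S_\ell$ are exactly those whose initial segment contains $S_\ell$, and that their cumulative remaining capacity right after the adaptive allocation is exactly $m_j^{\text{old}}$ (the contributions of older dampening buffers and older regular buffers must be added correctly). A secondary subtlety is that buffers younger than $\dbuf_{i,j}$---those allocated during later phases or merges---never intercept items while $\dbuf_{i,j}$ is still non-full, so the ``if'' direction of the characterization is not disturbed by future allocations.
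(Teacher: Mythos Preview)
Your proposal is correct and follows essentially the same approach as the paper: invoke \cref{prop:nicely_filled} to obtain the $1-O(1/n^4)$ outer probability, then condition on \cref{lem:warm_up_2_damp_order} for the resampled suffix, partition the buffers preceding $\dbuf_{i,j}$ into $K$ groups (one per phase-$(j-1)$ sub-segment, each with remaining capacity $m_j^{\text{old}}$), and conclude that the first $m_j^{\text{old}}$ items from each $S_\ell$ exactly fill the corresponding group before anything reaches $\dbuf_{i,j}$. Your discussion of the bookkeeping subtleties (which older buffers accept $x$, why younger buffers cannot intercept) is slightly more explicit than the paper's, but the argument is the same.
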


In other words, the insertions to $\dbuf_{i,j}$ approximately follows a ``scaled'' version of the multi-way pool distribution $\mathcal{P}_{m_j,K,m_j^{\text{old}}}$. It's a scaled version because the items are not from $[0,1]$, but from segment $i$.

\begin{proof}
    Similar to \cref{lem:dist_reg_buffer}, we only consider the good prefixes $x_1,\dots,x_l$. For a dampening buffer $\dbuf_{i,j}$ that is allocated right after receiving $x_l$, its initial segment is phase-$j$ segment $i$, which consists of $K$ phase-$(j-1)$ segments. We denote these phase-$(j-1)$ segments as $S_1,\dots,S_K$.
    
    The buffers that precede $\dbuf_{i,j}$ can be divided into $K$ groups $\mathcal{B}_1,\dots, \mathcal{B}_K$, partitioned based on the phase-$(j-1)$ segment that they belonged to. When $\dbuf_{i,j}$ is allocated, each group $\mathcal{B}_k$ has total remaining capacity $m_j^{\text{old}}$. Consider the youngest buffer in $\mathcal{B}_k$, denoted as $\buf_k$. Due to the execution of the algorithm, the initial segment of $\buf_k$ must be $S_k$. This means that all other buffers in $\mathcal{B}_k$ precede $\buf_k$.

    Since the prefix is good, we can apply \cref{lem:warm_up_2_damp_order} on the buffers $\buf_k$, and conclude that: With high probability, after the allocation of $\dbuf_{i,j}$, among the future items, the buffers in $\mathcal{B}_k$ will be filled by the first $m_j^{\text{old}}$ items that are from $S_k$. Then, $\dbuf_{i,j}$ will be filled by the first items that are from segment $i$ ($=S_1\cup\dots\cup S_K$), and are not inserted into the older buffers. This proves the lemma.
\end{proof}

Additionally, since dampening buffers are similar to the pool, it is natural that the final pool also satisfies a property similar to \cref{lem:dist_merger_buffer}.

\begin{lemma}
    \label{lem:dist_pool}
    Fix $1\le l\le n$. With probability $1-O(1/n^{4})$ over a random prefix $x_1,\dots,x_l$, the following holds with probability $1-O(1/n^{5})$: If the last phase starts right after receiving $x_l$, then the items inserted into the final pool are generated as follows: A new item $x$ is inserted into the final pool, if and only if we have seen at least $m_j$ items that arrive during phase $j$ and before $x$, that are from the same phase-$j$ segment as $x$.
\end{lemma}

In other words, the insertions to the final pool approximately follows the multi-way pool distribution $\mathcal{P}_{n_j,K,m_j}$. The proof is similar to \cref{lem:dist_merger_buffer}, and is omitted.

\subsection{The Final Algorithm $\algFinal$}

Here we present the final algorithm $\algFinal$. Other than the recursion, the only difference between $\algFinal$ and $\algAdaptMerge$ is that, we set $K$ to be of the form $2^{2^k}$ satisfying $K\in (\log^8 n,\log^{16} n]$, where $k$ is an integer. Note that such $K$ must exist. This is to make it easier for us to merge segments during recursion.

Within each regular buffer, we just recursively apply a scaled version of $\algFinal$. Since they receive (approximately) uniformly random items, their cost will be small.

For the dampening buffers and the final pool, our recursion is more complicated. This is handled by a special recursive subroutine $\algPool$, described below. We first define the interface for $\algPool$, and discuss its implementation later.

\paragraph{Interface of $\algPool$.} Intuitively, $\algPool$ is used in the following way: When running $\algPool$ to recursively handle the insertions to a dampening buffer $\buf$, the inputs given to $\algPool$ consist of both the items inserted to $\buf$, and the ones inserted to the buffers \emph{preceding} $\buf$. This is because, since we think of a dampening buffer as a pool, we need to make use of the knowledge of the items entering ``its buffers'' (i.e., the buffers preceding $\buf$), to perform adaptive allocations within $\buf$.

Formally, the recursive algorithm $\algPool$ takes three parameters: The number of items $n'$, the number of segments $B'$, and the initial buffer size $m'$. We require that, $B'$ is of the form $2^{2^k}$ for some integer $k$ (when we apply $\algPool$, $B'$ will be the parameter $K$ used by the outer algorithm for merging, which satisfies this property), and that $n'-B'\cdot m'\ge 0$.

$\algPool$ is responsible for an array of $n'-B'\cdot m'$ cells (i.e., the dampening buffer). However, its input is $n'$ \emph{uniformly random} items from $[0,1]$. 

$\algPool$ is required to do the following: Partition $[0,1]$ into $B'$ segments. When an item $x$ arrives, if prior to $x$, we have already seen $m'$ items from the same segment as $x$, then $\algPool$ needs to place $x$ in one of its empty cells (we call such $x$ a \defn{real item}); Otherwise, $\algPool$ need not do anything for $x$ (we call such $x$ an \defn{informative item}, since the only reason that such item is in the input is to inform $\algPool$ of its existence).

\paragraph{Recursion for the dampening buffers (and the final pool).} For a dampening buffer $\dbuf_{i,j}$, we run a (scaled) instance of $\algPool$, with $m_j$ items, $K$ segments, and initial buffer size $m_j^{\text{old}}$. Its input are the first $m_j$ items belonging to phase-$j$ segment $i$, that arrive after the allocation of $\dbuf_{i,j}$.

Formally, the outer algorithm does the following: Denote phase-$j$ segment $i$ as $[v_l,v_r]$. To handle $\dbuf_{i,j}$, the outer algorithm first initiates an instance of $\algPool(m_j,K,m_j^{\text{old}})$. Whenever an item $x$ from phase-$j$ segment $i$ arrives, the outer algorithm sends an item of value $(x-v_l)/(v_r-v_l)$ to the recursive instance $\algPool$. This scaling is to make sure that $\algPool$ receives uniformly random items from $[0,1]$. If $\algPool$ decides that this item is a real item, and places it in the $k$-th cell in its array, then the outer algorithm places $x$ in the $k$-th cell in $\dbuf_{i,j}$.

Note that given \cref{lem:dist_merger_buffer}, when $\algPool(m_j,K,m_j^{\text{old}})$ is given this (scaled) input, with high probability, the items that the outer algorithm inserts to $\dbuf_{i,j}$ are exactly the \emph{real items} that our scaled instance of $\algPool(m_j,K,m_j^{\text{old}})$ receives. If this does not hold, then we enter the failure mode, and allow the outer algorithm to have cost $n$.

For the final pool, we also run a (scaled) instance of $\algPool$, with $n_j$ items, $K$ segments, and initial buffer size $m_j$ (where $j$ is the last phase). Its input is all the items that arrive during phase $j$. The inputs to the final pool need not be scaled, since they are already sampled uniformly from $[0,1]$.

This concludes the implementation of $\algFinal$. We now turn to the recursive algorithm $\algPool$.

\paragraph{Notations for $\algPool$.} Similar to $\algAdaptMerge$, $\algPool$ also runs in phases. We use $n'_j,B'_j$ and $(m')_j^{\text{old}},m'_j$ to denote the number of future (real and informative) items during phase $j$, the number of segments during phase $j$ and the lower bounds for number of future items in each segment, respectively. $K'$ is a parameter used in merging, which is defined to be $2^{2^{k'}}$ for some (unique) integer $k'$, satisfying $K'\in (\log^8 (n'),\log^{16}(n')]$. This definition is to make sure that, as long as $B'\ge K'$, we must have that $B'$ is a power of $K'$. 

For phase $1$, we have that $n'_1=n'$, $B'_1=B'$ and $m'_1=m'$. Note that this parametrization is slightly different from $\algAdaptMerge$, in that $m'_1=(n'_1/B'_1)-d(n'_1/B'_1)$ does not necessarily hold. We will merge segments in phase $1$, so we need not define $(m')_1^{\text{old}}$.

To characterize the idea that we have $B'$ ``imaginary buffers'', we additionally define a notion of \defn{budget}, for each phase-$1$ segment. Initially, the budget of any phase-$1$ segment is $m'$. Whenever an informative item arrives, the budget of its phase-$1$ segment is decreased by $1$. At any point, the \defn{remaining capacity of a current segment} (not necessarily phase-$1$) is then redefined to be the total remaining capacity of the buffers belonging to it, \emph{plus} the current budget of all the phase-$1$ segments contained in it.

\paragraph{Implementation of $\algPool$.}

At the start of phase $1$, we \emph{do not} allocate buffers, and the phase-$1$ pool is defined to be our entire array of length $n'-B'\cdot m'$. All the items that arrive in phase $1$ are informative items. Phase $1$ ends when the remaining capacity of some phase-$1$ segment becomes $0$, or equivalently, when the budget of some phase-$1$ segment becomes $0$.

For the other phases, our behavior is similar to $\algAdaptMerge$: For $j>1$, we also define $m'_j=(n'_j/B'_j)-d(n'_j/B'_j)$ and $(m')_j^{\text{old}}=(n'_j/B'_{j-1})-d(n'_j/B'_{j-1})$. The rules for allocating new buffers and deciding whether to run a merge subroutine, as well as the criteria for entering the failure mode, are the same as $\algAdaptMerge$. The only ``difference'' is that, the third type of failure (i.e., we fail if we have no place for a new item) only happens for \emph{real} items. Again, the definition for success only consider the $\algPool$ instance itself, not its subproblems.

When a real item arrives, we select a subarray for it in the same way as $\algAdaptMerge$. That is, the item is inserted into the first non-full buffer that accepts it. If there is no such buffer, then it's inserted into the (final) pool.

Finally, within the buffers allocated during an $\algPool$ instance (as well as the final pool), we also need to run a recursive algorithm. This is exactly the same as in $\algFinal$.

\subsection{The analysis}

\paragraph{The size of the recursive algorithms.} Before going into the analysis, we first ask the following important question: When an instance of $\algPool$ is called, what relationships do the parameters $n',B',m'$ satisfy? For this, we use the following lemma which is proven in \cref{app:warm_up_2}:

\begin{restatable}{lemma}{mergerConcentration}
    \label{lem:merger_concentration}
    In $\algAdaptMerge$, when $n$ is sufficiently large, for any phase $j$, we have that:
    \begin{itemize}
        \item If we performed merging at the start of phase $j$, then
        \begin{align*}
            m_j^{\text{old}}= \frac {m_j}K-(1\pm o(1))\cdot d\bk*{\frac {m_j}K}.
        \end{align*}
        \item For any phase $j$, $m_j^{\text{old}}\ge (1/5)K^{2/3}=\omega(\log ^4n)$. Combining this with the previous bound implies that $m_j\ge (1/5)K^{5/3}$.
    \end{itemize}
\end{restatable}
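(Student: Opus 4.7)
The plan is to handle the two bullets separately, using the elementary estimate $d(x) = \tfrac12 x^{2/3} \pm O(1)$ that comes directly from the definition $d(x) = x - \lceil x - \tfrac12 x^{2/3}\rceil$, together with the smoothness of $y \mapsto y^{2/3}$ to control $d(y+\delta) - d(y)$ whenever $|\delta| \ll y$. Throughout, I would use that both $n_j/B_j$ and $n_j/B_{j-1}$ go to infinity with $n$ (by \cref{lem:warm_up_2_decrement} and the fact that $K = \Theta(\log^8 n) \to \infty$), so that the $\pm O(1)$ rounding in $d$ is always a lower-order term.

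For the first bullet, assume merging occurs at phase $j$ so that $B_{j-1} = KB_j$, and set $x \defeq n_j/B_j$, so that $m_j = x - d(x)$ and $m_j^{\text{old}} = x/K - d(x/K)$. Substituting $m_j/K = x/K - d(x)/K$, the goal reduces to showing that
\[
  \frac{m_j}{K} - d\!\left(\frac{m_j}{K}\right) - m_j^{\text{old}} \;=\; -\frac{d(x)}{K} \;+\; \bigl[d(x/K) - d(m_j/K)\bigr]
\]
is $o(d(m_j/K))$. For the first summand, $d(x)/K = \Theta(x^{2/3}/K)$ while $d(x/K) = \Theta(x^{2/3}/K^{2/3})$, so the ratio is $\Theta(K^{-1/3}) = o(1)$. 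For the second summand, since $x/K$ and $m_j/K$ differ by $d(x)/K \ll x/K$, a Taylor-type expansion of $y \mapsto y^{2/3}$ gives $|d(x/K) - d(m_j/K)| = O(d(x/K) \cdot x^{-1/3})$, also $o(d(x/K))$. Combining with $d(m_j/K) = (1 \pm o(1)) d(x/K)$ yields the claimed identity.

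For the lower bound $m_j^{\text{old}} = \Omega(K^{2/3})$, I would argue using the pool size inherited from phase $j-1$. At the start of phase $j-1$, the pool has size $n_{j-1} - B_{j-1} m_{j-1} = B_{j-1} \cdot d(n_{j-1}/B_{j-1})$, and since no items enter the pool during a non-final phase, this size is preserved to the end of phase $j-1$. Hence $n_j \ge B_{j-1} \cdot d(n_{j-1}/B_{j-1})$, and combining with $n_{j-1}/B_{j-1} \ge K$ from \cref{lem:warm_up_2_decrement} yields $n_j/B_{j-1} \ge \tfrac12 K^{2/3} - 1 = \Omega(K^{2/3})$. Since this tends to infinity, $m_j^{\text{old}} = n_j/B_{j-1} - d(n_j/B_{j-1}) = (1-o(1))\cdot n_j/B_{j-1} = \Omega(K^{2/3})$, and the equivalence $K^{2/3} = \Theta(\log^{16/3} n) = \omega(\log^4 n)$ is immediate from $K = \Theta(\log^8 n)$.

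Finally, for the implied upper bound on $m_j$, I would split into cases. In the non-merging case $B_j = B_{j-1}$, so $m_j = m_j^{\text{old}} \le n_j/B_{j-1}$, and the failure-mode criterion at phase $j-1$ gives $n_j/B_{j-1} \le (n_{j-1}/B_{j-1})^{2/3}$; plugging in $n_{j-1}/B_{j-1} = O(K^{5/2})$ from \cref{lem:warm_up_2_decrement} yields $m_j = O(K^{5/3})$ directly. In the merging case, the merging condition $d(n_j/B_{j-1}) < K$ already forces $m_j^{\text{old}} \le n_j/B_{j-1} = O(K^{3/2}) = O(K^{5/3})$, and then one uses the first bullet, rewritten as $m_j/K = m_j^{\text{old}} + (1\pm o(1))\,d(m_j/K)$, to recover $m_j$ from $m_j^{\text{old}}$. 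I expect the main obstacle to be the merging case: closing up the near-fixed-point relation for $m_j$ and simultaneously bookkeeping the cascading $o(1)$ and $\pm O(1)$ rounding errors from $d$ must be done carefully to obtain the claimed $O(K^{5/3})$ rather than a crude $O(K^{5/2})$ from $m_j \le n_j/B_j = K \cdot n_j/B_{j-1}$.
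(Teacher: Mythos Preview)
Your argument for the first bullet and for the lower bound $m_j^{\text{old}}=\Omega(K^{2/3})$ is correct and essentially matches the paper's proof. The only cosmetic difference is your choice of substitution: you set $x=n_j/B_j$ and compare $d(x/K)$ with $d(m_j/K)$ via a Taylor estimate, whereas the paper writes everything in terms of $n_j/B_{j-1}$ and uses $n_j/B_{j-1}=(1+o(1))(m_j/K)$ at the end. Both routes amount to the same two observations you isolate: $d(x)/K=o(d(x/K))$ because $K^{-1}\ll K^{-2/3}$, and $d$ is Lipschitz enough on the scale $d(x)/K\ll x/K$.

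You go beyond the paper in attempting the clause ``this also implies $m_j=O(K^{5/3})$'': the paper's proof simply does not address it. Your non-merging case is clean and correct (indeed $m_j=m_j^{\text{old}}\le n_j/B_{j-1}\le (n_{j-1}/B_{j-1})^{2/3}=O(K^{5/3})$). Your instinct that the merging case is the obstacle is also right, and in fact the difficulty you anticipate is real: in a merging phase the best one extracts from the merging condition is $n_j/B_{j-1}=O(K^{3/2})$, hence $m_j\le n_j/B_j=K\cdot n_j/B_{j-1}=O(K^{5/2})$, and the first-bullet relation $m_j/K=m_j^{\text{old}}+(1\pm o(1))\,d(m_j/K)$ cannot improve this. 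So the $O(K^{5/3})$ bound as written does not seem to hold in merging phases; what \emph{is} true there (and what the paper actually uses downstream, via \cref{lem:warm_up_2_decrement}) is $m_j=O(K^{5/2})$, while the bound $O(K^{5/3})$ applies to $m_j^{\text{old}}$ in all phases and to $m_j$ in non-merging phases. Your write-up would be fine if you state it that way and note the discrepancy.
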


In addition to the properties in \cref{lem:merger_concentration}, using \cref{lem:warm_up_2_decrement}, we also have that $m_j\le 10K^{2.5}$. We can thus conclude that, whenever an instance of $\algPool$ is called for a dampening buffer of some phase $j$, for the parameters $n',B',m'$ (which correspond to $m_j,K,m_j^{\text{old}}$, respectively), we have that $B'\in [C_1(n')^{2/5},C_2(n')^{3/5}]$ and $m'=(n'/B)-(1\pm o(1))\cdot d(n'/B')$ where $C_1,C_2$ are absolute constants. The same also holds for the final pool, where $n',B',m'$ correspond to $n_j,K,m_j$ ($j$ is the last phase), respectively. 

In the following, if an $\algPool$ instance satisfies that $m'=(n'/B)-(1\pm o(1))\cdot d(n'/B')$ and $B'\in [C_1(n')^{2/5},C_2(n')^{3/5}]$, we say that it is \defn{well-parametrized}. To compare this with the previous parameters, in $\algAdaptMerge$, we have that $B_1=n/\poly\log n$, and $m_1=(n_1/B_1)-d(n_1/B_1)$. This difference affects our analysis:

\begin{itemize}
    \item $B'$ is no longer $(n')/\poly\log (n')$, which means that the sizes of the phase-$1$ buffers could be much larger than $\poly\log (n')$. Therefore, in the first phases of $\algPool$, its behavior is more similar to $\algAdapt$ than to $\algAdaptMerge$. That is, the number of segments does not change, and the buffer sizes decrease polynomially. This goes on for $O(\log\log (n'))$ phases, until the buffer sizes are small enough (i.e., $\poly\log (n')$). After this, $\algPool$ will behave like $\algAdaptMerge$, where there are $O(\log(n')/\log\log(n'))$ more phases, and the new buffers have sizes $\poly\log (n')$.
    \item The fact that $m'=(n'/B)-(1\pm o(1))\cdot d(n'/B')$ turns out to be only a small inconvenience: We only have to prove that, with high probability, $\algPool$ will not fail in phase $1$. For the other phases, we will make sure that $m'_j=(n'_j/B'_j)-d(n'_j/B'_j)$, which means that the future phases can be analyzed in the same way as in $\algAdapt$ and $\algAdaptMerge$.
\end{itemize}

Finally, we remark that the same relationships hold for \emph{all} the instances of $\algPool$, not just the ones called by an $\algFinal$ instance. That is, when we allocate a dampening buffer within a $\algPool$ instance, \emph{its} recursive instance of $\algPool$ is also well-parametrized. This can be proven in the same way as the previous bounds, so the formal proofs are omitted.

\paragraph{Proof of correctness.} We first show that the recursive algorithms also succeed with high probability. Note that, the high success probability is with respect to the size of the specific recursive sub-problem, not $n$. Here, we assume that each recursive subproblem receives perfectly uniformly random items.

For regular buffers, since they receive approximately uniformly random items by \cref{lem:dist_reg_buffer}, we can simply apply \cref{lem:warm_up_2_regularity}. 

For the instances of $\algPool$, we prove the following:

\begin{lemma}
    Fix any well-parametrized instance of $\algPool$. If it receives $n'$ uniformly random items, then it succeeds with probability $1-O(1/(n')^{10})$.
\end{lemma}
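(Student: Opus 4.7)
The plan is to mirror the strategy used in the proofs of \cref{lem:warm_up_regularity} and \cref{lem:warm_up_2_regularity}: induct on the phases of the $\algPool$ instance, and at each phase verify that each of the three failure criteria is violated with probability at most $O(1/(n')^{11})$. Since $\algPool$ is well-parametrized, the same reasoning used to establish \cref{lem:warm_up_2_decrement} (combined with the fact that $m'_1 = m' = (n'/B') - (1 \pm o(1))d(n'/B')$ is within constants of the clean form) will show the total number of phases is $O(\log n'/\log\log n')$, so a union bound over all phases gives $O(1/(n')^{10})$ total failure probability.

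The key place where the analysis deviates from \cref{lem:warm_up_2_regularity} is phase~$1$. Here $\algPool$ allocates no buffers; instead each phase-$1$ segment starts with budget $m'_1 = m'$, and all items arriving during phase~$1$ are informative and simply decrement the budget of their segment. Phase~$1$ thus corresponds exactly to the balls-and-bins experiment of \cref{clm:warm_up_phase_concentration}, with one tweak: the threshold $m'_1$ is only approximately equal to $(n'_1/B'_1) - d(n'_1/B'_1)$, up to a $(1\pm o(1))$ factor on the $d(\cdot)$ term. I would re-derive the two bullets of \cref{clm:warm_up_phase_concentration} for this tweaked threshold; since well-parametrization guarantees $n'/B' \ge (n')^{2/5}$ and hence $d(n'/B') = \omega(\log^4 n')$ by \cref{lem:merger_concentration}, the Chernoff bound still yields concentration within $(1/5)\cdot d(n'/B')$, with only the constants changing. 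This ensures that (i) when the first budget hits $0$, every other budget is at most $O(d(n'/B'))$, so the first type of failure does not occur between phases $1$ and $2$, and (ii) the new regular buffers of phase~$2$ have positive size, so the second type does not occur either. The third type cannot occur during phase~$1$ because every phase-$1$ item is informative.

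From phase~$2$ onwards, the state of $\algPool$ is structurally identical to the start of a phase in $\algAdaptMerge$: each current segment has remaining capacity $m'_j = (n'_j/B'_j) - d(n'_j/B'_j)$, and the restriction of the input to any fixed phase-$j$ segment is conditionally uniform on that segment. I would therefore apply \cref{clm:warm_up_phase_concentration} verbatim on a phase-by-phase basis, exactly as in the proof of \cref{lem:warm_up_2_regularity}, to rule out the first two types of failure. For the third type, one invokes the analogue of \cref{cor:warm_up_2_buffer_property}: combined with \cref{lem:warm_up_2_damp_order}, this guarantees that every real item whose current segment has positive remaining capacity finds a buffer. The only place where the third type of failure can still arise is the last phase, where the final-pool overflow analysis is handled just as in the third bullet of the proof of \cref{lem:warm_up_regularity}.

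The main obstacle is the phase-$1$ analysis, since the identity $m'_1 = (n'_1/B'_1) - d(n'_1/B'_1)$ used throughout \cref{clm:warm_up_phase_concentration} is replaced by a $(1\pm o(1))$-approximation. Fortunately, phase~$1$ is self-contained: after its adaptive allocation (and possible merge), the capacities are reset to the clean form $m'_2 = (n'_2/B'_2) - d(n'_2/B'_2)$, so the slack does not propagate to later phases. The remaining work is bookkeeping---verifying that well-parametrization gives $n'/B'$ large enough in absolute terms to apply \cref{clm:warm_up_phase_concentration} with $n'$ in place of $n$, which again reduces to $d(n'/B') \gg \log^4 n'$ via \cref{lem:merger_concentration}.
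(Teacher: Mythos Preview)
Your proposal is correct and follows essentially the same approach as the paper: induct on phases, isolate phase~$1$ as the only place where $m'_1$ deviates from the clean form $(n'_1/B'_1)-d(n'_1/B'_1)$, handle it via a slightly modified \cref{clm:warm_up_phase_concentration}, and defer phases $j\ge 2$ to the existing analyses of \cref{lem:warm_up_regularity,lem:warm_up_2_regularity}. One small point: the paper additionally observes that no merge happens at the end of phase~$1$ (because the buffer sizes are large), which lets it dismiss failure~2(b) there outright; your ``(and possible merge)'' is harmless but slightly less sharp, and your invocation of \cref{lem:merger_concentration} for $d(n'/B')\gg\log^4 n'$ is a mild mis-citation---that bound follows directly from well-parametrization and $d(x)=\Theta(x^{2/3})$.
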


\begin{proof}
    We only prove the lemma for sufficiently large $n'$. The proof is similar to the proof of \cref{lem:warm_up_regularity} and \cref{lem:warm_up_2_regularity}, and is by induction on the phases. Here, we only show that the algorithm will not fail in phase $1$, because it is the only phase where we don't have $m'_1=(n'_1/B'_1)-d(n'_1/B'_1)$. For the future phases, due to the execution of the recursive algorithm, this equality will hold, and the proof for those phases is similar to \cref{lem:warm_up_regularity} (for the phases before the first merge subroutine) and \cref{lem:warm_up_2_regularity} (for the phases after the first merge subroutine).

    For phase $1$, since we do not receive any real item, it is by definition that the third type of failure will not happen. Also, since the buffer sizes are large, we won't merge segments at the end of phase $1$, which means that for the second type of failure, we only have to worry about the regular buffers. Therefore, to show that the recursive algorithm does not fail, it remains to show that the first two types of failure will not happen. That is:
    \begin{enumerate}
        \item When phase $1$ ends, the number of future items $n'_2$ is at most $B'_1\cdot (n'_1/B'_1)^{2/3}$.
        \item When phase $1$ ends, for any segment $i$, we have that the remaining capacity of $i$ is at most $m'_2$.
    \end{enumerate}
    A slightly modified version of \cref{clm:warm_up_phase_concentration} (which we do not state) shows that, when phase $1$ ends, with high probability the remaining capacity of each segment is at most $(1/5)\cdot d(n'_1/B'_1)$. To bound $n'_2$, we use the fact that it is at most the sum of the pool size and the remaining capacities. That is,
    \begin{align*}
        n'_2\le (n'_1-B'_1\cdot m'_1)+B'_1\cdot (1/5)\cdot d(n'_1/B'_1).
    \end{align*}
    Since $m'_1=(n'_1/B'_1)-(1\pm o(1))\cdot d(n'_1/B'_1)$, we can show that the right hand side is at most $B'_1\cdot (n'_1/B'_1)^{2/3}$, by following the proof in \cref{lem:warm_up_regularity}.

    To show that the remaining capacities are small, similar to the proof of \cref{lem:warm_up_regularity}, we have that $n'_2$ is at least $n'_1-B'_1\cdot m'_1$, which is $B'_1\cdot d(n'_1/B'_1)\cdot (1\pm o(1))$. Since $m'_2=(n'_2/B'_1)-d(n'_2/B'_1)\ge (1/2)\cdot (n'_2/B'_1)$, we can conclude that $m'_2\ge (1/5)\cdot d(n'_1/B'_1)$ just as before.
\end{proof}

\paragraph{Proof of efficiency.} Given the correctness, we now prove the cost of the algorithm by induction. Note that \cref{lem:real_recurse} straightforwardly implies \cref{thm:real_exp}, because the algorithm for the entire array is an instance of $\algFinal$.

\begin{lemma}
    \label{lem:real_recurse}
    For any $\algFinal$ instance of length $n$, as well as any well-parametrized $\algPool$ instance with $n'=n$, the expected cost is $\log n\cdot 2^{O(\log^* n)}$.
\end{lemma}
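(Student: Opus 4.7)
The plan is to prove \cref{lem:real_recurse} by strong induction on $n$, jointly establishing two bounds $T(n)\le\phi(n)\defeq\alpha\log n\cdot c^{\log^* n}$ for the cost of $\algFinal$ on input length $n$, and $T'(n)\le\psi(n)\defeq\alpha'\log n\cdot c^{\log^* n}$ for the cost of any well-parametrized $\algPool$ instance with $n'=n$, where the constants $\alpha,\alpha',c$ are pinned down at the end of the recurrence analysis. The base case $n=O(1)$ is immediate (cost $\le n$), and in the inductive step two failure-type contributions are swept under the rug: failure of the current-level algorithm (probability $O(1/n^{10})$ by the success lemmas, times the trivial cost bound $n$, giving $O(1/n^9)$) and the $O(1/n^5)$ total-variation slack between the actual input to each recursive sub-problem and the idealized distribution it would see under \cref{lem:dist_reg_buffer}, \cref{lem:dist_merger_buffer}, and \cref{lem:dist_pool} (again multiplied by worst-case cost $n$, hence negligible). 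After these reductions, every recursive sub-problem can be treated as an instance of $\algFinal$ or of a well-parametrized $\algPool$ on truly ideal input, so the inductive hypothesis applies directly.

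Next I would decompose the cost of $\algFinal(n)$ into four pieces: (i) recursion into the regular buffers, which by \cref{lem:dist_reg_buffer} reduces to an $\algFinal$ instance on the buffer length rescaled by $1/B_{j-1}$ and contributes $\sum_{i,j}T(|\buf_{i,j}|)/B_{j-1}$; (ii) recursion into the dampening buffers via $\algPool$, contributing $\sum_{i,j}T'(|\dbuf_{i,j}|)/B_j$; (iii) recursion into the final pool via $\algPool$, of cost $T'(|\text{final pool}|)$; and (iv) the $O(\log n/\log\log n)$ inter-subarray cost from the phase count. By \cref{lem:warm_up_2_decrement} and \cref{lem:merger_concentration}, the number of phases is $O(\log n/\log\log n)$ and every buffer (as well as the final pool) has size bounded by $\poly\log n$, so each phase's recursive contribution telescopes to at most $T(\poly\log n)+T'(\poly\log n)$ and
\[
T(n)\le C_1\cdot\frac{\log n}{\log\log n}\cdot\bigl[\phi(\poly\log n)+\psi(\poly\log n)\bigr]+\psi(\poly\log n)+O\!\left(\frac{\log n}{\log\log n}\right)+o(1),
\]
for some constant $C_1$ absorbing the phase-count bound.

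The $\algPool$ analysis is analogous except for a crucial extra contribution from the $O(\log\log n)$ initial ``$\algAdapt$-like'' phases (before any merging), where regular buffers of sizes $s_k\le (m')^{(2/3)^{k-1}}$ are recursed on via $\algFinal$. These buffers are the only recursion targets whose size is still polynomial in $n$, and since $\log^* s_k\le\log^* n$ throughout, they contribute
\[
\sum_{k\ge 1}T(s_k)\le \alpha c^{\log^* n}\sum_k\log s_k\le 2\alpha\log(m')\cdot c^{\log^* n}\le \rho\,\phi(n)
\]
for some constant $\rho>1$, using that $\log m'=\Theta(\log n)$ by well-parametrization and $\sum_{k\ge 1}(2/3)^k=2$. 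Adding the $\algAdaptMerge$-like phases (identical in form to the $\algFinal$ recurrence) and the final-pool term yields
\[
T'(n)\le \rho\,\phi(n)+C_1\cdot\frac{\log n}{\log\log n}\cdot\bigl[\phi(\poly\log n)+\psi(\poly\log n)\bigr]+\psi(\poly\log n)+O\!\left(\frac{\log n}{\log\log n}\right)+o(1).
\]

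To close the induction, I expand $\phi(\poly\log n)=O(\alpha\log\log n)\cdot c^{\log^* n-1}$ (and similarly for $\psi$), which reduces the $T(n)$ constraint to $\alpha\ge C_1(\alpha+\alpha')/c+o(1)$ and the $T'(n)$ constraint to $\alpha'\ge \rho\alpha+C_1(\alpha+\alpha')/c+o(1)$. Setting $\alpha'=2\rho\alpha$ and $c$ large enough compared to $C_1(1+2\rho)$ makes both simultaneously feasible, which establishes the desired $\log n\cdot 2^{O(\log^* n)}$ bound. The hardest step is the $\algPool$ initial-phase estimate: its total contribution is already on the order of $\phi(n)$ rather than a lower-order term, which forces $\psi$ strictly larger than $\phi$ by a constant factor. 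That multiplicative gap, compounded across the $\log^* n$-deep chain of $T$-recursions driving the size from $n$ to $\poly\log n$ to $\poly\log\poly\log n$ and onwards, is precisely what produces the $2^{O(\log^* n)}$ overhead in the final bound---suggesting, as the authors remark, that the gap from the $\Omega(\log n)$ lower bound is not a mere artifact of the analysis but is intrinsic to this particular recursion.
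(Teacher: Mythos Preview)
Your proposal is correct and follows essentially the same approach as the paper's proof: a joint strong induction with two coupled bounds (the paper uses $f(n)$ and $10f(n)$ where you use $\phi$ and $\psi$), the same cost decomposition into buffer recursions plus inter-subarray plus failure/imperfect-randomness slack, the same observation that all $\algAdaptMerge$-style buffers are $\poly\log n$ while the initial $\algAdapt$-like phases of $\algPool$ contribute $\Theta(\phi(n))$ via a geometric series on the log-sizes, and the same closing of the induction by tuning constants. Your final remark on why the $2^{O(\log^*n)}$ factor is intrinsic to this recursion also matches the paper's commentary.
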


\begin{proof}
    Define $g(n)$ to be the number of times we can apply $n\mapsto \log^{41}n$ before $n$ becomes less than $2$. We can prove by induction that $g(n)=\Theta(\log^*n)$. In the following, we use $g(n)$ as a proxy of $\log^* n$.
    
    Let $f(n)=C\cdot \log n\cdot 2^{C\cdot g(n)}$ for some sufficiently large constant $C$. In the following, all the big-O notations do not hide $C$. We prove a slightly modified version of \cref{lem:real_recurse}, where we show that any $\algFinal$ instance of size $n$ has expected cost at most $f(n)$, and any well-parametrized $\algPool$ instance of size $n$ has expected cost at most $10f(n)$. The proof is by induction.

    In the base case, let $n_0$ be a sufficiently large constant. We let $C$ be large enough, so that $f(n)\ge n$ for any $n\le n_0$. In the following, we only consider $n>n_0$.
    
    Now assume that for some $n>n_0$, the induction hypothesis holds for any $n'<n$, we show that the induction hypothesis also holds for $n$. For an $\algFinal$ instance of length $n$, the recursive algorithm is similar to $\algAdaptMerge$. Let $K$ denote the parameter used for merging, i.e., $K=2^{2^{k}}$ for some integer $k$, and $K\in (\log^8 n,\log^{16}n]$. Similar to \cref{lem:warm_up_2_decrement} (the parameter $K$ is slightly different here, but this does not affect the analysis by too much), during its execution, all the allocated buffers (as well as the final pool) would have size at most $O(K^{2.5})=O(\log^{40}n)$. The cost of an $\algFinal$ instance can thus be analyzed as follows:
    \begin{itemize}
        \item \textbf{The cost of the buffers (and the final pool):} Similar to $\algAdaptMerge$, $\algFinal$ has $O(\log n/\log\log n)$ phases. During each phase, we have that $m_j=O(K^{2.5})$, which means that the size of any recursive subproblem is at most $O(K^{2.5})=O(\log^{40}n)$ (recall that, the size of an instance of $\algFinal$ is the buffer size, and the size of an instance of $\algPool$ is $n'=m_j$). Using the induction hypothesis, the cost for this part is $O(\log n/\log\log n)\cdot f(O(K^{2.5}))$. Since $g(n)\ge g(O(K^{2.5}))+1$ by definition of $g$, we have that      
        \begin{align*}
            O(\log n/\log\log n)\cdot f(O(K^{2.5})) &\le  O(\log n/\log\log n)\cdot C\cdot \log(O(\log^{40}n))\cdot 2^{C\cdot g(n)-C} \\
            &\le O(1)\cdot \log n\cdot 2^{Cg(n)}\cdot C\cdot 2^{-C} \\
            &=O(f(n))\cdot 2^{-C}.
        \end{align*}
        When $C$ is a sufficiently large constant, we have that this part is at most $(1/2)\cdot f(n)$.
        \item \textbf{The inter-subarray cost:} This is $O(\log n/\log\log n)$ as before.
        \item \textbf{Cost of failure:} In the case that $\algFinal$ itself enters the failure mode, the expected cost is $o(1)$; 
        \item \textbf{Cost of imperfect randomness:} In the case that it succeeds, the distribution of inputs to each recursive subproblem is only \emph{close} to being uniformly random, and not perfectly uniformly random as we assumed. This also incurs $o(1)$ cost.
    \end{itemize}

    Next, fix a well-parametrized $\algPool$ instance with parameters $n'=n,B',m',K'$. To bound the cost, we use a combination of analyses for both $\algAdapt$ and $\algAdaptMerge$ (\cref{lem:warm_up_decrement} and \cref{lem:warm_up_2_decrement}), and show that: Until the size of the pool becomes smaller than $B'_1\cdot K'$, we will not be merging segments, and the algorithm behaves similarly to $\algAdapt$. That is, the buffer sizes of each phase are at most $ (n'_1/B'_1)^{2/3},(n'_1/B'_1)^{4/9},\dots$, respectively (recall that we don't really have buffers in phase $1$). Note that it is crucial to our analysis that all these buffers are \emph{regular} buffers. Then, after the first merge subroutine, the size of buffers would always be at most $O((K')^{2.5})=\poly\log (n)$ (and might be dampening buffers), just like in $\algAdaptMerge$.

    Therefore, the buffer (and final pool) cost of a $\algPool$ instance consists of two parts, where the first part is 
    \begin{align*}
        &f(O(n^{(3/2)\cdot (2/3)}))+f(O(n^{(3/2)\cdot (4/9)}))+\cdots \\
        \le {}&C\cdot 2^{Cg(n)}\cdot [\log O(n)+\log O(n^{2/3}))+\cdots] \\
        \le {}&f(n)\cdot [1+(2/3)+\cdots+o(1)]\le 8f(n),
    \end{align*}
    and the second part is
    \begin{align*}
        (\log n/\log\log n)\cdot 10f((K')^{2.5})\le f(n). \tag{Similar to the $\algFinal$ instances.}
    \end{align*}
    The other costs are analyzed in the same way as $\algFinal$. Summing these costs up, the total expected cost is at most $10f(n)$.
\end{proof}

\section{Lower Bound for Stochastic Online Sorting}
\label{sec:lb}

In this section, we establish a lower bound of $\Omega(\log n)$ for stochastic online sorting. 

\mainLB*

The intuition behind this lower bound is that, regardless of the actions taken during the early stage, the last few items always ``incur'' a significant cost. For instance, consider the $i$-th-to-last item $x_{n-i+1}$. After we insert the first $n-i$ items, only $i$ empty cells remain. Suppose that none of the empty cells are adjacent to each other, so there are $\le 2i$ items that are neighbors of empty cells. With constant probability, the difference between item $x_{n-i+1}$ and any of the $2i$ neighbor items is at least $\Omega(1/i)$. In this case, no matter what the algorithm does, the cost of inserting $x_{n-i+1}$ is always $\Omega(1/i)$. Summing this over for all the $i$'s, we get an expected cost of $\Omega(\log n)$.

In general, it is difficult to define the cost of an item, because by the time we place an item, its neighbors may not be determined yet. To deal with this, in the formal proof, we will look at a random pair of adjacent cells, and bound their expected difference.

\begin{proof}
    Without loss of generality, we restrict our attention to deterministic algorithms, since the cost of a randomized algorithm is simply the expected cost over a distribution of deterministic ones. In the following discussion, we fix a deterministic algorithm.

    Let $x_1,\dots,x_n$ be a sequence of random inputs, and let $a_1,\dots,a_n$ be the final values of the cells \emph{after} processing all $n$ items. Our lower bound claim is equivalent to showing that
    \begin{align}
        \label{equ:goal}
        \E_{\{x_n\}}\Bk*{\sum_{j=1}^{n-1}\big|a_{j+1}-a_j\big|}=\Omega(\log n).
    \end{align}

    Let $1\le j\le n-1$ be a uniformly random index. We aim to show that, for any integer $1\le i\le n$,
    \begin{align}
        \label{equ:surrogate}
        \Pr_{\{x_n\},j}\Bk*{\big|a_{j+1}-a_j\big|\ge \frac 1{4i}}=\Omega\bk*{\frac in}.
    \end{align}
    If we can show \eqref{equ:surrogate}, then it implies \eqref{equ:goal}:
    \begin{align*}
        \E_{\{x_n\}}\Bk*{\sum_{j=1}^{n-1}\big|a_{j+1}-a_j\big|}&=(n-1)\cdot \E_{\{x_n\},j}\Bk*{\big|a_{j+1}-a_j\big|} \\
        &= (n-1)\cdot \int_0^{1}\Pr_{\{x_n\},j}\Bk*{\big|a_{j+1}-a_j\big|\ge y}\cdot\mathrm d y \\
        &> (n-1)\cdot \sum_{i=1}^{n}\int_{\frac 1{4(i+1)}}^{\frac 1{4i}}\Pr_{\{x_n\},j}\Bk*{\big|a_{j+1}-a_j\big|\ge y}\cdot \mathrm dy \\
        &\ge (n-1)\cdot \sum_{i=1}^{n}\Pr_{\{x_n\},j}\Bk*{\big|a_{j+1}-a_j\big|\ge \frac 1{4i}}\cdot \bk*{\frac 1{4i}-\frac 1{4(i+1)}} \\
        &=(n-1)\cdot \Omega\bk*{\sum_{i=1}^{n}\frac in\cdot \frac 1{i^2}}\tag{by \eqref{equ:surrogate}} \\
        &=\Omega\bk*{\sum_{i=1}^{n}\frac 1i}=\Omega(\log n). \\
    \end{align*}
    It remains to prove \eqref{equ:surrogate}. For each cell $j$, let $T_j$ be a random variable, indicating the time at which $j$ is filled. Fix $1\le i\le n$, and consider a random index $1\le j\le n-1$. After the $(n-i)$-th insertion, $i$ empty cells remain, so with probability $\Omega(i/n)$, either the $j$-th or the $(j+1)$-th cell is empty. Formally,
    \begin{align}
        \label{equ:hit_empty}
        \Pr_{\{x_n\},j}\Bk*{(T_j>n-i)\lor (T_{j+1}>n-i)}=\Omega\bk*{\frac in}.
    \end{align}
    Next, we show that conditioned on $(T_j>n-i)\lor (T_{j+1}>n-i)$, the difference between $a_j$ and $a_{j+1}$ is large:
    \begin{align}
        \label{equ:surrogate2}
        \Pr_{\{x_n\},j}\Bk*{\big|a_{j+1}-a_j\big|\ge \frac 1{4i}\biggr|(T_j>n-i)\lor (T_{j+1}>n-i)}=\Omega(1),
    \end{align}
    which, combined with \eqref{equ:hit_empty}, implies \eqref{equ:surrogate}.

    To prove \eqref{equ:surrogate2}, we consider three mutually exclusive events that collectively exhaust $(T_j>n-i)\lor (T_{j+1}>n-i)$:
    \begin{enumerate}
        \item Only the $j$-th cell is filled at time $n-i$.
        \item Only the $(j+1)$-th cell is filled at time $n-i$.
        \item Both cells are empty at time $n-i$.
    \end{enumerate}
    and we show that, conditioned on any of these events (if it has nonzero probability), the probability that $|a_{j+1}-a_j|\ge 1/(4i)$ is large.
    
    We first consider the first event. In this case, the value of the $(j+1)$-th cell $a_{j+1}$ must come from some future item $x_{k}(k>n-i)$. If all future items $x_{>n-i}$ are far away from the value of the $j$-th cell $a_j$, then regardless of how the algorithm chooses, we must have that the difference $|a_{j+1}-a_j|$ is large. Formally,
    \begin{align*}
        &\Pr_{\{x_n\},j}\Bk*{|a_{j+1}-a_j|\ge \frac 1{4i}\biggr|(T_j\le n-i)\land (T_{j+1}>n-i)} \\
        \ge {}&\Pr_{\{x_n\},j}\Bk*{\forall k\in (n-i,n],|x_k-a_j|\ge \frac 1{4i}\biggr|(T_j\le n-i)\land (T_{j+1}>n-i)} \\
        \ge {}&(1-\frac 2{4i})^{i}=\Omega(1),
    \end{align*}
    where the last line is due to the fact that, the value of the last $i$ items are i.i.d. random, and are independent from the event $(T_j\le n-i)\land (T_{j+1}>n-i)$ and the value of $a_j$. A symmetric argument also applies to the second event.
    
    The third event where both cells are empty is slightly different. In this case, let $k=\min(T_j,T_{j+1})>n-i$ be the first time that one of the cells is filled. We show that, with constant probability, future items $x_{>k}$ are all far away from $x_k$, which implies that $|a_{j+1}-a_j|$ is large. That is, for any $k>n-i$, if $\Pr[\min(T_j,T_{j+1})=k]>0$, then
    \begin{align*}
        &\Pr_{\{x_n\},j}\Bk*{|a_{j+1}-a_j|\ge \frac 1{4i}\biggr|\min(T_j,T_{j+1})=k} \\
        \ge{}&\Pr_{\{x_n\},j}\Bk*{\forall k'>k,|x_{k'}-x_k|\ge \frac 1{4i}\biggr|\min(T_j,T_{j+1})=k}\\
        ={}&\bk*{1-\frac 2{4i}}^{n-k}> \bk*{1-\frac 2{4i}}^{i}=\Omega(1),
    \end{align*}
    where we used the fact that conditioned on the event $\min(T_j,T_{j+1})=k$, the items $x_{>k}$ are randomly distributed. Formally, conditioned on the third event, we can show that
    \begin{align*}
        &\Pr_{\{x_n\},j}\Bk*{|a_{j+1}-a_j|\ge \frac 1{4i}\biggr|(T_j> n-i)\land (T_{j+1}>n-i)} \\
        &=\sum_{k>n-i,\Pr[\min(T_j,T_{j+1})=k]>0}\Pr_{\{x_n\},j}\Bk*{|a_{j+1}-a_j|\ge \frac 1{4i}\biggr|\min(T_j,T_{j+1})=k}\\
        &\;\;\;\;\;\;\;\;\;\;\;\;\;\;\;\;\;\;\;\;\;\;\;\;\;\;\;\;\;\;\;\;\;\;\;\;\;\;\;\;\;\;\cdot\Pr_{\{x_n\},j}\Bk*{\min(T_j,T_{j+1})=k\biggr|(T_j> n-i)\land (T_{j+1}>n-i)} \\
        &=\sum_{k>n-i,\Pr[\min(T_j,T_{j+1})=k]>0}\Omega(1)\cdot \Pr\Bk*{\min(T_j,T_{j+1})=k\biggr|(T_j> n-i)\land (T_{j+1}>n-i)} \\
        &=\Omega(1).
    \end{align*}
    This concludes the proof of \eqref{equ:surrogate2}, and thus the lower bound.
\end{proof}

\section{Open Questions}
\label{sec:open_problems}

In this paper, we present an upper bound of $\log n\cdot 2^{O(\log^* n)}$, as well as a lower bound of $\Omega(\log n)$. It would be interesting to close the gap between the two bounds. In particular, as the barrier to further improvements to the algorithm seems to be due to some inherent drawbacks of Chernoff bounds, we believe that it is possible to directly address this drawback, and prove a lower bound of $\omega(\log n)$.

Another interesting open question is, what if the number of cells is more than the number of items? \cite{Abrahamsen2024OnlineSA} presented an algorithm for stochastic online sorting that, when the number of cells is $\gamma\cdot n$ for $\gamma >1$, achieves expected cost $O(1+1/(\gamma-1))$. In comparison, it is not clear whether the techniques in this paper can benefit from an increased number of cells.

\section{Acknowledgement}

The author is thankful to William Kuszmaul for introducing the problem of stochastic online sorting, and for many useful comments on the early drafts of this paper. The author would also like to thank Erik Demaine, Jingxun Liang and Renfei Zhou for valuable discussions. We also thank the anonymous reviewers for helpful comments.

\bibliographystyle{alpha}
\bibliography{reference.bib}

\appendix

\section{Missing Proofs}
\label{app:warm_up_2}

In this appendix, we prove \cref{lem:warm_up_2_damp_order}, \cref{lem:warm_up_2_regularity} and \cref{lem:merger_concentration}. We can assume wlog that $n$ is sufficiently large.

\AdaptMergeSuccess*

\dampOrder*

We first define a concentration property that holds with high probability over the input $x_1,\dots,x_n$, then show that \cref{lem:warm_up_2_damp_order,lem:warm_up_2_regularity} hold deterministically conditioned on this property. \cref{lem:merger_concentration} is a technical lemma that will be used in the proofs.

\begin{property}
    \label{prop:warm_up_2_concentration}
    Let $n,K$ be as defined in $\algAdaptMerge$. Fix an integer $0\le \ell\le \lfloor\log_K n\rfloor$ and a segment $S=[(j-1)/K^\ell,j/K^\ell]$ of length $1/K^{\ell}$, where $j\in [K^{\ell}]$. Partition $S$ into $K$ sub-segments $S_1,\dots,S_K$, each of length $1/K^{\ell+1}$. Fix a number $1\le i\le n$. The following hold:

    \begin{enumerate}
        \item If $i\ge K^{\ell}\cdot \log^4 n$, then among the last $i$ input items $x_{n-i+1},\dots,x_n$, the number of items from $S$ is in the range
        \begin{align*}
            \frac{i}{K^{\ell}}\pm \frac {1}{20}\cdot d\bk*{\frac i{K^{\ell}}}.
        \end{align*}
        \item For any $m\ge K\cdot \log^4 n$, if at least $m$ items in $x_{n-i+1},\dots,x_n$ are from $S$: Let $y_{1\dots m}$ be the $m$ earliest such items. For any sub-segment $S_k$, the number of items in $y_{1},\dots,y_m$ that are from $S_k$ is in the range
        \begin{align*}
            \frac{m}{K}\pm \frac {1}{20}\cdot d\bk*{\frac mK}.
        \end{align*}
    \end{enumerate}
\end{property}

By a union bound over all the bad events, we can show that

\begin{claim}
    \cref{prop:warm_up_2_concentration} holds with probability $1-O(1/n^{10})$.
\end{claim}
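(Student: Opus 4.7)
My plan is to prove the claim via a union bound over parameter tuples, with each individual bad event controlled by a standard Chernoff bound. First I would count tuples: for condition~1, $\ell$ takes at most $O(\log n/\log\log n)$ values, $j\in[K^\ell]$ at most $K^{\lfloor\log_K n\rfloor}\le n$ values, and $i\in[n]$, yielding $O(n^2\log n)$ triples; condition~2 adds an additional factor of $n\cdot K$ for $(m,k)$, still $\mathrm{poly}(n)$. So each individual bad event must be shown to have probability at most $n^{-13}$ or so, which is easy since Chernoff will give us $n^{-\omega(1)}$.

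For condition~1, fixing a triple $(\ell,j,i)$ with $i\ge K^\ell\log^4 n$, the count $N$ of items among $x_{n-i+1},\dots,x_n$ falling in $S$ is $\mathrm{Binomial}(i,1/K^\ell)$ with mean $\mu=i/K^\ell\ge\log^4 n$. Since $d(\mu)=\Theta(\mu^{2/3})$ (as recalled in the notation subsection of \cref{sec:warm_up}), a multiplicative Chernoff bound gives
\[
\Pr\!\left[|N-\mu|>\tfrac{1}{20}d(\mu)\right]\le 2\exp\!\bigl(-\Omega(d(\mu)^2/\mu)\bigr)=\exp\!\bigl(-\Omega(\mu^{1/3})\bigr)\le\exp\!\bigl(-\Omega(\log^{4/3}n)\bigr),
\]
which is $n^{-\omega(1)}$.

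For condition~2, I would fix $(\ell,j,i,m,k)$ with $m\ge K\log^4 n$ and condition on the set of indices in $\{n-i+1,\dots,n\}$ at which the items fall in $S$. On this event, the values of those items (and in particular $y_1,\dots,y_m$) are i.i.d.\ uniform on $S$, independently of the index set itself---this follows from the standard decomposition of each $x_t$ into its segment index plus an independent intra-segment position. Therefore the count $N_k$ of $y$'s lying in $S_k$ is conditionally $\mathrm{Binomial}(m,1/K)$ with mean $\mu=m/K\ge\log^4 n$, and the identical Chernoff estimate yields failure probability $\exp(-\Omega(\log^{4/3}n))$. Since the bound is uniform in the conditioning, it upgrades to the same bound on the joint (unconditional) event.

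A final union bound over all $\mathrm{poly}(n)$ tuples then yields total failure probability $\mathrm{poly}(n)\cdot\exp(-\Omega(\log^{4/3}n))=n^{-\omega(1)}$, comfortably $O(1/n^{10})$. I do not foresee a substantive obstacle: the only subtlety is the conditioning step in condition~2, where one has to observe that the set of indices falling in $S$ depends only on the segment labels of the $x_t$'s, while their intra-segment positions remain i.i.d.\ uniform after conditioning. Everything else is a routine Chernoff-plus-union-bound calculation, with the slack coming from the fact that $d(\mu)^2/\mu=\Theta(\mu^{1/3})$ dominates $\log n$ whenever $\mu\ge\log^4 n$.
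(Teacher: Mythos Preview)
Your proposal is correct and follows essentially the same approach as the paper's proof: a direct Chernoff bound plus union bound for condition~1, and for condition~2 the observation that conditioning on which items land in $S$ leaves their intra-segment positions i.i.d.\ uniform, so Chernoff applies again. Your write-up is simply more explicit than the paper's terse argument (which just says ``a direct Chernoff bound suffices'' and ``the distribution of the first such items $y_1,\dots,y_m$ is random''), but the substance is identical.
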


\begin{proof}
    For the first set of properties, a direct Chernoff bound suffices, much like in the proof of \cref{clm:warm_up_phase_concentration}. For the second set of properties: Fix such a segment $S$, moment $i$ and number $m$, then conditioned on the fact that at least $m$ items in $x_{i \dots n}$ are from $S$, the distribution ofthe first such items $y_1,\dots,y_m$ is random. Also, $m$ is sufficiently large. Therefore, we can apply a Chernoff bound to bound the number of items from each $S_k$.
\end{proof}

Next, we show that conditioned on \cref{prop:warm_up_2_concentration}, both \cref{lem:warm_up_2_damp_order,lem:warm_up_2_regularity} hold. We start by the following claim:

\begin{claim}
    \label{clm:warm_up_2_damp_order_weak}
    Conditioned on \cref{prop:warm_up_2_concentration}, the following holds: When a buffer becomes full, if $\algAdaptMerge$ does not fail up to this point, then all the buffers that precede it are already full.
\end{claim}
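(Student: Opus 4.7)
My plan is to prove the claim by strong induction on the number of items that have arrived so far. The inductive hypothesis is exactly the statement of the claim applied to all strictly earlier buffer-filling events: whenever a buffer became full at some previous moment, all its preceding buffers were already full then, and hence remain full now (buffers never empty). The base case (before any item arrives) is vacuous. For the inductive step, I consider the moment $t$ at which some item fills a buffer $\buf$, and verify that every preceding buffer of $\buf$ is already full at time $t$. The argument splits into two cases based on the type of $\buf$.

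The easier case is when $\buf$ is a regular buffer, allocated at some phase $j$ with initial segment $S$. If $j=1$ there are no preceding buffers and the claim is vacuous. Otherwise, I identify the \emph{direct predecessor} $\buf_p$ of $\buf$: the unique buffer with initial segment equal to $S$ allocated at the start of phase $j-1$, which is the phase-$(j{-}1)$ dampening buffer for $S$ if merging occurred at that phase, or the phase-$(j{-}1)$ regular buffer for $S$ otherwise. Because $\buf_p$ is older than $\buf$ and accepts exactly the same items that $\buf$ does, no item can reach $\buf$ until $\buf_p$ is full. Moreover, no buffer with initial segment contained in $S$ is allocated between $\buf_p$ and $\buf$, so the preceding buffers of $\buf$ are exactly $\{\buf_p\}$ together with the preceding buffers of $\buf_p$. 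The inductive hypothesis applied at the filling of $\buf_p$ finishes this case.

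The main case is when $\buf$ is a dampening buffer at phase $j$ with phase-$j$ initial segment $S$; let $T_1,\dots,T_K$ be the $K$ phase-$(j{-}1)$ sub-segments of $S$, let $t_0$ be $\buf$'s allocation time, and let $N_{T_k}$ be the number of items in $T_k$ arriving in $(t_0,t]$. At time $t_0$, the total remaining capacity of all buffers preceding $\buf$ whose initial segment lies in $T_k$ equals $m_j^{\text{old}}$ by the design of the phase-$j$ adaptive allocation. Applying the inductive hypothesis inside each sub-chain, any item in $T_k$ can reach $\buf$ only after every such preceding buffer is full, since $\buf^*_k$ (the phase-$j$ regular buffer for $T_k$) has initial segment $T_k$ and accepts every item in $T_k$, and its filling forces all its preceding to be full by induction. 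Consequently $\max(N_{T_k}-m_j^{\text{old}},0)$ items in $T_k$ reach $\buf$, and setting $K_0\defeq\{k:N_{T_k}\ge m_j^{\text{old}}\}$ gives the identity $|\buf|=\sum_{k\in K_0}(N_{T_k}-m_j^{\text{old}})$. I will then argue $K_0=\{1,\dots,K\}$ by contradiction: if some $k'\notin K_0$ then $N_{T_{k'}}<m_j^{\text{old}}$, and \cref{prop:warm_up_2_concentration}(Part 2) applied to $S$ with sub-segments $T_k$ at scale $N\defeq\sum_k N_{T_k}$ (which satisfies $N\ge K\log^4 n$ via the lower bound on $m_j^{\text{old}}$ from \cref{lem:merger_concentration}) gives $|N_{T_k}-N/K|\le (1/20)\,d(N/K)$ for every $k$. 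Combining this with $N_{T_{k'}}<m_j^{\text{old}}$ yields $N/K<m_j^{\text{old}}+(1/20)\,d(N/K)$, and plugging back into the identity for $|\buf|$ gives $|\buf|\le |K_0|\cdot (1/10)\,d(N/K)$. Substituting $|\buf|=(K\pm o(K))\,d(m_j/K)$ and $N\approx m_j$ (both supplied by \cref{lem:merger_concentration}) forces $|K_0|\ge 10K(1-o(1))>K$, contradicting $|K_0|\le K$.

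The main obstacle is this counting step in the dampening case: to rule out partial filling of the $K$ sub-chains, the dampening buffer's size $|\buf|\sim K\cdot d(m_j/K)$ must genuinely dominate the per-sub-segment slack $(1/20)\,d(N/K)$ permitted by concentration, and this relies on \cref{lem:merger_concentration} to pin down the relationship between $|\buf|$, $K$, and $d(m_j/K)$ with tight enough constants. Everything else (Case 1, the routing within each $T_k$, and the base case) reduces cleanly to the inductive hypothesis together with the fact that each item flows to the oldest non-full accepting buffer.
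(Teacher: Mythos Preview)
Your induction scheme and your treatment of regular buffers match the paper's proof exactly. The dampening-buffer case is also organized along the same lines as the paper (split into the $K$ sub-chains, use that each chain has remaining capacity $m_j^{\text{old}}$ at allocation time, and invoke \cref{prop:warm_up_2_concentration} together with \cref{lem:merger_concentration}). The difference is that you argue by contradiction at the moment $t$ when $\buf$ fills, applying Part~2 of \cref{prop:warm_up_2_concentration} with $m=N\defeq\sum_k N_{T_k}$, whereas the paper applies it with $m=m_j-1$ and argues directly.

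There is a genuine gap in your version, namely the step ``$N\ge K\log^4 n$ via the lower bound on $m_j^{\text{old}}$''. A priori (i.e., in the hypothetical bad scenario you are trying to exclude) you only know $|K_0|\ge 1$, which gives
\[
N \;\ge\; |\buf|+m_j^{\text{old}} \;=\; (1\pm o(1))\,K\,d(m_j/K)+m_j^{\text{old}}.
\]
In the regime where $n_{j-1}/B_{j-1}=\Theta(K)$ (which is allowed, and in which merging does occur), \cref{lem:merger_concentration} only gives $m_j^{\text{old}}=\Theta(K^{2/3})$, hence $m_j/K=\Theta(K^{2/3})$, $d(m_j/K)=\Theta(K^{4/9})$, and the bound above becomes $N=\Omega(K^{13/9})$. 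Since $K=\lceil\log^8 n\rceil$, the threshold $K\log^4 n$ is $\Theta(K^{3/2})$, and $13/9<3/2$, so the precondition of Part~2 of \cref{prop:warm_up_2_concentration} need not hold for $m=N$. Thus your contradiction cannot be launched as written.

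The fix is exactly what the paper does: apply Part~2 with $m=m_j-1$ (which is $\Omega(K^{5/3})\gg K\log^4 n$, so the precondition is met). This yields directly that among the first $m_j-1$ items from $S$ after $t_0$, each $T_k$ contributes at least $(m_j-1)/K-\tfrac{1}{20}d((m_j-1)/K)\ge m_j^{\text{old}}$ items (using the first bullet of \cref{lem:merger_concentration}). Hence every sub-chain is full while at most $m_j-1$ items from $S$ have been routed, so $\buf$ is not yet full; this gives the conclusion without any contradiction and, incidentally, shows $N=m_j$ a posteriori. Once you switch to $m=m_j-1$, the rest of your counting ($|\buf|=\sum_k\max(N_{T_k}-m_j^{\text{old}},0)$ and the chain-by-chain routing via the inductive hypothesis) is correct and coincides with the paper's argument.
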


We will later show that $\algAdaptMerge$ will not fail conditioned on \cref{prop:warm_up_2_concentration}, so that \cref{clm:warm_up_2_damp_order_weak} implies \cref{lem:warm_up_2_damp_order}.

To prove \cref{clm:warm_up_2_damp_order_weak}, we need to use \cref{lem:merger_concentration}, which shows some properties of the parameters used in merging.

\mergerConcentration*

\begin{proof}
    We expand the definitions of $m_j^{\text{old}}$ and $m_j$:
    \begin{align*}
        m_j^{\text{old}}&={\frac {n_j}{B_{j-1}}}-d\bk*{\frac {n_j}{B_{j-1}}}, \\
        m_j&={\frac {n_jK}{B_{j-1}}}-d\bk*{\frac {n_jK}{B_{j-1}}}.
    \end{align*}
    We can represent $m_j^{\text{old}}$ using $m_j$:
    \begin{align*}
        m_j^{\text{old}}&=\frac 1K\cdot \bk*{m_j+d\bk*{\frac {n_jK}{B_{j-1}}}}-d\bk*{\frac {n_j}{B_{j-1}}} \\
        &=\frac {m_j}{K}+\frac 1K\cdot d\bk*{\frac {n_jK}{B_{j-1}}}-d\bk*{\frac {n_j}{B_{j-1}}}.
    \end{align*}
    The number of future items $n_j$ is at least the size of the pool during phase $j-1$, which is 
    \begin{align*}
        n_j&\ge n_{j-1}-B_{j-1}\cdot m_{j-1} \\
        &=B_{j-1}\cdot d(n_{j-1}/B_{j-1}) \\
        &=\Omega(B_{j-1}\cdot(n_{j-1}/B_{j-1})^{2/3}).
    \end{align*}
    We also have that $n_{j-1}/B_{j-1}\ge K$, which is proven in \cref{lem:warm_up_2_decrement}. Therefore, $n_j/B_{j-1}=\Omega((n_{j-1}/B_{j-1})^{2/3})=\omega(1)$, which implies that for sufficiently large $n$, we have that $\frac 1K\cdot d\bk*{\frac {n_jK}{B_{j-1}}}=o( d(\frac {n_j}{B_{j-1}}))$ (since $d(x)=\Theta(x^{2/3})$). We now have that
    \begin{align*}
        m_j^{\text{old}}=\frac {m_j}{K}-(1-o(1))\cdot d\bk*{\frac {n_j}{B_{j-1}}}.
    \end{align*}
    Since $(n_j/B_{j-1})=(1+o(1))\cdot (m_j/K)$, the first bullet holds.

    To show that $m_j^{\text{old}}=\Omega(K^{2/3})$, we use the bounds mentioned before:
    \begin{align*}
        m_j^{\text{old}}&=(n_j/B_{j-1})-d(n_j/B_{j-1})\tag{By definition} \\
        &\ge (1/2)\cdot (n_j/B_{j-1})\tag{$d(x)\le(1/2)x$} \\
        &\ge (1/2)\cdot d(n_{j-1}/B_{j-1}) \tag{Lower bound on $n_j$} \\
        &\ge (1/4)\cdot((n_{j-1}/B_{j-1})^{2/3}-1) \\
        &\ge (1/4)\cdot(K^{2/3}-1) \\
        &\ge (1/5)K^{2/3} \tag{$n$ (hence $K$) is sufficiently large}.
    \end{align*}
\end{proof}

Now we can prove \cref{clm:warm_up_2_damp_order_weak}.

\begin{proof}[Proof of \cref{clm:warm_up_2_damp_order_weak}]
    The proof is by induction on the buffers. Fix a buffer $\buf$, and suppose that the claim holds for any buffer that precedes it.
    \begin{itemize}
        \item If $\buf$ is a regular buffer, then let $\buf'$ be the youngest buffer that precedes it (if no buffer precedes $\buf$, then the claim already holds for $\buf$). Due to the execution of $\algAdaptMerge$, since $\buf$ is a regular buffer, the initial segment of $\buf'$ must be the same as $\buf$. Therefore, when an item is inserted into $\buf$ (which happens before $\buf$ becomes full), it must be that $\buf'$ is full, which by the induction hypothesis implies that all the buffers that precede $\buf$ are full.
        
        \item If $\buf$ is a dampening buffer: Suppose that $\buf$ is allocated at the start of phase $j$. Then the buffers that precede $\buf$ can be divided into $K$ groups $\mathcal{B}_1,\dots,\mathcal{B}_K$, each corresponding to a phase-$(j-1)$ segment contained in the initial segment of $\buf$. Let $\buf_1,\dots,\buf_K$ be the youngest buffers of these groups, respectively. Similar to before, we have that, due to the execution of $\algAdaptMerge$, the initial segment of $\buf_i$ must exactly be the phase-$(j-1)$ segment of its group.

        Recall that, right after the allocation of $\buf$, each group $\mathcal{B}_i$ of buffers has total remaining capacity $m_j^{\text{old}}$, and $\buf$ has remaining capacity $m_j-Km_{j}^{\text{old}}$. Using \cref{prop:warm_up_2_concentration}, after allocating $\buf$, there must be at least $m_j=(n_j/B_j)-d(n_j/B_j)$ future items from the initial segment of $\buf$. Let $y_1,\dots,y_{m_j-1} $ be the first $m_j-1$ such items. Using \cref{lem:merger_concentration} and \cref{prop:warm_up_2_concentration}, we have that among $y_1,\dots,y_{m_j-1}$, each phase-$(j-1)$ segment has at least $m_j^{\text{old}}$ items. Using an argument similar to the first bullet, for the $i$-th phase-$(j-1)$ segment, after processing $m_j^{\text{old}}$ items from it, all the buffers preceding $\buf_i$ (including $\buf_i$) will be full. Therefore, we have shown that: After processing $y_1,\dots,y_{m_j-1}$, all the buffers that precede $\buf$ are full. By counting the number of items, we have that $\buf$ cannot be full at this moment. This concludes the proof. \qedhere
    \end{itemize}
\end{proof}

\cref{clm:warm_up_2_damp_order_weak} has the following useful corollary, which is a slightly different version of \cref{cor:warm_up_2_buffer_property}.

\begin{claim}
    \label{clm:warm_up_2_buffer_regularity}
    Conditioned on \cref{prop:warm_up_2_concentration}, the following holds: When a new item $x$ from current segment $i$ arrives, if the algorithm does not fail up to this point, and $c_i>0$, then $x$ will be inserted into a buffer belonging to $i$ (without failing).
\end{claim}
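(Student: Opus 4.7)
The plan is to mirror, almost verbatim, the proof of \cref{cor:warm_up_2_buffer_property}, but with the high-probability appeal to \cref{lem:warm_up_2_damp_order} replaced by a deterministic appeal to \cref{clm:warm_up_2_damp_order_weak}, which is exactly the same ordering statement but promoted to hold deterministically under \cref{prop:warm_up_2_concentration} (plus the no-failure-so-far hypothesis).

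Concretely, first I would let $\buf$ denote the youngest buffer currently belonging to segment $i$. I would then argue, by inspecting how buffers are created in $\algAdaptMerge$, that the initial segment of $\buf$ must be exactly the current segment $i$: either $\buf$ is a regular buffer $\buf_{i,j'}$ that was allocated during an adaptive-allocation step while $i$ was itself a current segment, or $\buf$ is a dampening buffer $\dbuf_{i,j'}$ created by the very merge that produced current segment $i$. In either case $\buf$ accepts $x$ (since $x$ is from $i$), and every other buffer belonging to $i$ was allocated earlier and has initial segment contained in $i$, hence precedes $\buf$ in the sense of \cref{lem:warm_up_2_damp_order}.

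Next I would apply \cref{clm:warm_up_2_damp_order_weak} to $\buf$ at the current moment in time. Its hypotheses are satisfied: we have conditioned on \cref{prop:warm_up_2_concentration}, and by assumption the algorithm has not failed up to this point. The contrapositive of the claim tells us that if $\buf$ were already full, then every buffer preceding $\buf$ would be full as well; in particular every buffer belonging to $i$ would be full, forcing $c_i = 0$, contradicting our assumption that $c_i>0$. Hence $\buf$ still has an empty cell. So when the algorithm processes $x$ and scans for the oldest non-full buffer accepting $x$, it finds at least $\buf$, and inserts $x$ into a buffer belonging to $i$; in particular the third type of failure does not occur on this insertion.

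The main (and only) subtle step is verifying that the youngest buffer belonging to $i$ has initial segment exactly equal to $i$, since without this both ``$\buf$ accepts $x$'' and ``every other buffer belonging to $i$ precedes $\buf$'' could fail. I expect this to follow immediately from the bookkeeping in \cref{sec:warm_up_2}: regular buffers created during adaptive allocation in phase $j'$ are indexed by the phase-$(j'-1)$ segments, which are refinements of the current phase-$j$ segments, and each merge subroutine simultaneously creates one dampening buffer per new current segment and makes it the youngest buffer in that segment. Once this is in hand, the remainder is a direct rewriting of the proof of \cref{cor:warm_up_2_buffer_property}.
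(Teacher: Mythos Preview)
Your proposal is correct and follows essentially the same argument as the paper's proof: pick the youngest buffer $\buf$ belonging to $i$, observe its initial segment is exactly $i$ so all other buffers belonging to $i$ precede it, and then use \cref{clm:warm_up_2_damp_order_weak} together with $c_i>0$ to conclude $\buf$ is non-full and accepts $x$. Your treatment of the subtle point (why the youngest buffer has initial segment exactly $i$) is even a bit more explicit than the paper, which simply asserts it ``by the execution of our algorithm.''
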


\begin{proof}
    Let $\buf$ be the current youngest buffer belonging to $i$. By the execution of our algorithm, the initial segment of $\buf$ is precisely $i$. Therefore, any other buffer belonging to $i$ precedes $\buf$. Since $c_i>0$, there exist buffers belonging to $i$ that are not full. Therefore, by applying \cref{clm:warm_up_2_damp_order_weak}, we have that $\buf$ is also not full. Therefore, $x$ will be inserted into a buffer, because if all else fails, it can always be inserted into $\buf$.
\end{proof}

Finally, we prove \cref{lem:warm_up_2_regularity} using \cref{clm:warm_up_2_buffer_regularity} and \cref{lem:warm_up_regularity}(the analysis for $\algAdapt$).

\begin{claim}
    Conditioned on \cref{prop:warm_up_2_concentration}, $\algAdaptMerge$ does not fail.
\end{claim}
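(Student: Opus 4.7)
The plan is to induct on phases $j$, maintaining the invariant that at the start of phase $j$ every current (phase-$j$) segment has remaining capacity exactly $m_j$, so that the algorithm's state is structurally identical to the start of phase $1$ but with parameters $(n_j, B_j, m_j)$. Under this invariant, the goal reduces to verifying each failure criterion within phase $j$ and at the transition to phase $j+1$, using \cref{prop:warm_up_2_concentration} as a black-box replacement for a fresh Chernoff bound.

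First I would handle the within-phase type-$3$ failure: as long as some current segment $i$ has $c_i > 0$, \cref{clm:warm_up_2_buffer_regularity} guarantees that an arrival from segment $i$ lands in a buffer, so phase $j$ runs without incident until the first $c_i$ hits $0$ (or, if $j$ is the last phase, overflow is absorbed by the final pool). Applying \cref{prop:warm_up_2_concentration} to the last $n_j$ items at segment granularity $1/B_j$ then supplies the two-sided concentration needed to imitate \cref{clm:warm_up_phase_concentration}: at this stopping moment, every other current segment still has $c_{i'} \le (1/5)\cdot d(n_j/B_j)$.

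Next I would verify the transition. Since $n_{j+1}$ equals the current pool size plus the residual capacities, the type-$1$ bound $n_{j+1} \le B_j (n_j/B_j)^{2/3}$ drops out of the same arithmetic as in \cref{lem:warm_up_regularity}. For type-$2$(a) at the adaptive-allocation step, the argument copies the second bullet of \cref{lem:warm_up_regularity} almost verbatim, using $m_{j+1}^{\text{old}} \ge (1/2)(n_{j+1}/B_j)$ together with the lower bound $n_{j+1} \ge B_j \cdot d(n_j/B_j)$ to dominate the residual $(1/5)\cdot d(n_j/B_j)$. For type-$2$(b) at a merge step, I would invoke \cref{lem:merger_concentration}, whose first bullet gives $m_{j+1} - K m_{j+1}^{\text{old}} = (1\pm o(1))\,K\cdot d(m_{j+1}/K) > 0$ for large $n$; combining with the residual regular-buffer capacities, the total capacity per phase-$(j+1)$ segment becomes $K m_{j+1}^{\text{old}} + (m_{j+1} - K m_{j+1}^{\text{old}}) = m_{j+1}$, which closes the invariant.

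Finally, the last-phase type-$3$ failure I would dispatch by the overflow-counting argument from the third bullet of \cref{lem:warm_up_regularity}: an overflow of the final pool forces $\sum_i \max(v_i - m_j, 0) > n_j - B_j m_j$, which together with $\sum_i v_i = n_j$ forces some phase-$j$ segment to receive fewer than $m_j$ items during the phase, contradicting \cref{prop:warm_up_2_concentration}. I expect the main obstacle to be not any single estimate but the bookkeeping across merges---confirming that the concentration guarantees of \cref{prop:warm_up_2_concentration} at the coarser granularity $1/B_{j+1}$ are compatible with the finer-granularity bounds already spent in phase $j$, and ensuring that \cref{lem:merger_concentration} supplies the right comparison between $m_{j+1}$ and $K m_{j+1}^{\text{old}}$ so that the dampening buffer is simultaneously nontrivial in size and small enough to be carved out of the current pool.
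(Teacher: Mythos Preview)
Your proposal is correct and follows essentially the same route as the paper: induct on phases, use \cref{clm:warm_up_2_buffer_regularity} to neutralize the modified type-$3$ failure, invoke \cref{lem:merger_concentration} for type-$2$(b), and then reduce each phase to the $\algAdapt$ analysis of \cref{lem:warm_up_regularity} with \cref{prop:warm_up_2_concentration} standing in for the fresh Chernoff bound. Your bookkeeping worries about granularity and pool space are non-issues---\cref{prop:warm_up_2_concentration} is a single deterministic property of the input holding simultaneously at all scales and moments, and the dampening buffers fit because $B_j m_j \le n_j$ by definition.
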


\begin{proof}
    The proof is by induction: Suppose that the algorithm successfully reaches the start of some phase $j$. We show that it will also reach the start of phase $j+1$ (or terminate if $j$ is the last phase) without failing. To show this, we will largely reuse the analysis for $\algAdapt$. But first, we sort out the differences between $\algAdaptMerge$ and $\algAdapt$. There are two main differences between the failure criteria:

    \begin{itemize}
        \item Failure 2(b) does not exist in $\algAdapt$. For this, we use the first bullet of \cref{lem:merger_concentration}, which shows that the dampening buffers always have positive sizes ($m_j> Km_j^{\text{old}}$). That is, failure 2(b) will not happen for sufficiently large $n$.
        \item The condition for failure 3 is slightly different, in that in $\algAdaptMerge$, when a new item arrives, it may happen that there exist non-full buffers belonging to its segment, but none of them accepts the new item. However,  \cref{clm:warm_up_2_buffer_regularity} proves that we do not have to worry about this case.
    \end{itemize}
    With the two differences sorted out, the conditions of $\algAdaptMerge$ failing during phase $j$ are exactly the same as that of a phase in $\algAdapt$.
    
    In phase $j$, we start with $B_j$ segments, each having remaining capacity $m_j$. We also have $n_j/B_j\ge K$, i.e., the number of future items is large. Therefore, we can reuse the proof of \cref{lem:warm_up_regularity} to show that $\algAdaptMerge$ will not fail during phase $j$ (the details are omitted).
\end{proof}

\end{document}